\tikzset{
  treenode/.style = {shape=rectangle, rounded corners,
                     draw, align=center,
                     top color=white, bottom color=white},
  root/.style     = {treenode, font=\Large, bottom color=white},
  env/.style      = {treenode, font=\ttfamily\normalsize},
  dummy/.style    = {circle,draw}
}
\newtheorem{theorem}{Theorem}
\newtheorem{proposition}{Proposition}
\newtheorem{lemma}{Lemma}
\newtheorem{ass}{Assumption}
\newtheorem{definition}{Definition}
\newtheorem{remark}{Remark}
\newtheorem{cor}{Corollary}
\newcommand\EE {\mathbb E}
\newcommand\FF {\mathbb F}
\newcommand\RR {\mathbb R}
\newcommand\PP {\mathbb P}
\newcommand\QQ {\mathbb Q}
\def\bone{\mathbf{1}}
\def\qed{\hskip6pt\vrule height6pt width5pt depth1pt}
\def\qed{\hskip 6pt\vrule height6pt width5pt depth1pt}
\newcommand{\ed}{\end{document}}
\newcommand{\be}{\begin{equation}}
\newcommand{\ee}{\end{equation}}
\newcommand{\bq}{\begin{eqnarray}}
\newcommand{\eq}{\end{eqnarray}}
\newcommand{\supp}{\text{supp}}
\begin{document} 

\title{Endogenous Formation of Limit Order Books: Dynamics Between Trades.\footnote{Partial support from the NSF grant DMS-1411824 is acknowledged by both authors.}}
\author{Roman~Gayduk and Sergey~Nadtochiy\footnote{Address the correspondence to: Mathematics Department, University of Michigan, 530 Church St, Ann Arbor, MI 48104; sergeyn@umich.edu.}\footnote{We would like to thank the anonymous referees whose constructive remarks helped us improve the paper.}}
\date{Current version: June 19, 2017\\
Original version: May 26, 2016
}
\maketitle

\begin{abstract}

In this work, we present a continuous-time large-population game for modeling market microstructure between two consecutive trades. The proposed modeling framework is inspired by our previous work \cite{GaydukNadtochiy1}.
In this framework, the Limit Order Book (LOB) arises as an outcome of an equilibrium between multiple agents who have different beliefs about the future demand for the asset. The agents' beliefs may change according to the information they observe, triggering changes in their behavior.
We present an example illustrating how the proposed models can be used to quantify the consequences of changes in relevant information signals.
If these signals, themselves, depend on the LOB, then, our approach allows one to model the ``indirect" market impact (as opposed to the ``direct" impact that a market order makes on the LOB, by eliminating certain limit orders).
On the mathematical side, we formulate the proposed modeling framework as a continuum-player control-stopping game. We manage to split the equilibrium problem into two parts. The first one is described by a two-dimensional system of Reflected Backward Stochastic Differential Equations (RBSDEs), whose solution components reflect against each other. The second one leads to an infinite-dimensional fixed-point problem for a discontinuous mapping. Both problems are non-standard, and we prove the existence of their solutions in the paper.

\end{abstract}

\section{Introduction}

In this paper, we continue the development of an equilibrium-based modeling framework for market microstructure, initiated in \cite{GaydukNadtochiy1}. 
As in \cite{GaydukNadtochiy1}, we analyze the market microstructure in the context of an auction-style exchange (as most modern exchanges are), in which the participating agents can post limit or market orders.
A crucial component of such a market is the \emph{Limit Order Book (LOB)}, which contains all outstanding limit buy and sell orders (time and price prioritized), and whose shape and dynamics represent the liquidity of the market.
We are interested in developing a modeling framework in which the shape of the LOB, and its dynamics, arise \emph{endogenously} from the interactions between the agents. This is in contrast to many of the existing results on market microstructure, which assume that the shape and dynamics of the LOB are given exogenously. Among the many advantages of our approach is the possibility of modeling the reaction of the LOB to the changes in a relevant market indicator or in the rules of the exchange.\footnote{We refer the reader to \cite{GaydukNadtochiy1}, whose introduction contains a more detailed explanation of the problems of market microstructure and a motivation for our study.}

Herein, we extend the discrete time modeling framework proposed in \cite{GaydukNadtochiy1} to continuous time, and restrict our analysis to the dynamics of the market between two consecutive trades. The latter simplifies the problem and is justified by the well known empirical fact that most changes in LOB are not due to trades.
We manage to establish the existence, and obtain a numerically tractable representation, of an equilibrium in a general continuous time framework, in which the competing agents have different beliefs about the future demand for the asset. These beliefs determine the future distribution of the demand, given the (common) information observed thus far. The latter may, e.g., be generated by a relevant signal (or, market indicator). One can view such conditional distributions as the ``models" that the market participants use to predict future demand, and which are based on the (commonly observed) relevant market indicators.
Given the beliefs, the agents choose their optimal trading strategies (i.e. limit and market orders), aiming to maximize their expected profits, and reach an equilibrium.
The modeling framework proposed herein can be used for predicting the reaction of a market to various changes in the relevant indicators. In particular, if the relevant market indicator depends on the LOB, our framework allows one to model the indirect market impact: i.e. how an initial change to the market may cause further changes to it, due to the information revealed by the initial change (as opposed to the direct impact, e.g., made by a market order eliminating a part of the LOB).
An extreme example of such indirect impact is called ``spoofing", and it is an illegal activity aimed at manipulating the market. Our model can be used to quantify such indirect market impact, and it can be, ultimately, used to improve the optimal execution algorithms or to test the consequences of ``spoofing" activity. We provide a simplistic example illustrating the potential applications of our model in Section \ref{se:example}, although an empirical investigation (including a more careful model specification, and its estimation), which is needed to make any specific conclusions about the actual market behavior, is left for future research.

On the mathematical side, the problem we analyze is the construction of an equilibrium in a control-stopping game with a continuum of players (cf. \cite{Aumann}, \cite{Schmeidler}, \cite{GCarmona}, for more on the general theory of continuum-player games). The main mathematical challenges stem from three sources: the complicated dependence between the individual payoffs and the controls of other players (which lacks the standard convexity and continuity properties), the presence of multiple participants (as compared to a two-player game) and the control-stopping nature of the game. Equilibria in the games with any number of players can often be constructed directly, by means of a system of Partial Differentia Equations or a system of (Forward-) Backward Stochastic Differential Equations (BSDEs). However, in the case of multiple players, solving such systems numerically becomes very challenging. In such cases, the description of an equilibrium is, typically, limited to the proof of its existence, which, in turn, is obtained by an abstract fixed-point argument. However, even the latter method presents a challenge in the game considered herein. Namely, the complicated dependence structure between the players' controls and individual payoffs, along with the control-stopping nature of the game, make it very challenging (or even impossible) to (a) to find a compact set of individual controls, which is sufficiently large to include any maximizer of the objective function, and (b) establish the continuity of the objective.\footnote{Alternatively, one can exploit the monotonicity properties of the objective, to apply a different type of fixed-point theorem. Nevertheless, such monotonicity is also lacking in the present setting.} In order to overcome these challenges, we assume the existence of agents with ``extremal beliefs" to split the problem into two parts: a control-stopping game with two players, and a pure control game (without stopping) with a continuum of players. Such a split simplifies our task dramatically, but both resulting problems remain challenging. The first one, concerned with the construction of an equilibrium in a two-player game, leads to a non-standard system of Reflected BSDEs (RBSDEs), whose components \emph{reflect against each other}, and whose generator \emph{lacks to desired regularity}. In Subsection \ref{subse:RBSDE.exist}, we prove the existence of a solution to this system, and, in Section \ref{se:example}, we show how it can be computed in a simple example.
The second problem, concerned with the equilibrium in a continuum-player game (without stopping), is formulated as a fixed-point problem, and is solved in Subsection \ref{subse:fixedpoint}.
This auxiliary game is complicated by the fact that it has a \emph{discontinuous} objective function and does not possess the desired monotonicity properties. Nevertheless, an appropriate ``mollification" technique is designed in Subsection \ref{subse:fixedpoint} to construct a solution to the associated fixed-point problem, and, in turn, to describe an equilibrium in the original market microstructure game.
One of the computational benefits of the solution method proposed herein is that the aforementioned fixed-point problem can be solved separately for each $(t,\omega)$. In particular, it is not necessary to solve a forward-backward system at each step of the iteration, as it is, for example, done in a typical mean field game (see, e.g., \cite{MFG1}, \cite{MFG3}). On the other hand, the local nature of the fixed-point problem causes additional measurability issues, in the proof of the existence result.
All these issues are addressed in Subsection \ref{subse:fixedpoint}, and the main existence result is stated in Theorem \ref{th:main}, in Section \ref{se:main}.

The literature on market microstructure is vast. Most of the theoretical work is concerned with the problem of a single agent choosing an optimal trading strategy, consisting of limit and/or market orders. The trading environment (e.g. market impact) for this agent is either specified exogenously, or is determined by the agent herself, if she is the designated market-maker.
The relevant publications include, among others, \cite{MMS2}, \cite{MMS10}, \cite{MMS11}, \cite{MMS13}, \cite{MMS20}, \cite{MMS22}, \cite{MMS23}, \cite{MMS26}, \cite{MMS25}, \cite{MMS15}, \cite{MMS15.2}, \cite{MMS28}, \cite{MMS29}, and references therein. 
Nevertheless, none of these works attempt to explain how the key market characteristics (e.g. the shape and dynamics of LOB) arise from the interaction between multiple market participants.
Finally, several recent papers have applied an equilibrium-based approach to the problem of optimal execution (cf. \cite{MMS.goe1}, \cite{MMS.goe2}). These papers describe an equilibrium between several agents solving an optimal execution problem, with the LOB (or, the market) against which these agents trade being specified exogenously, rather than being modeled as an output of the equilibrium.
The endogenous formation of LOB in an auction-style exchange (i.e. without a designated market-maker) is investigated, e.g., in \cite{MMS.g1}, \cite{MMS.g2}, \cite{MMS.g3}, \cite{MMS.g4}, \cite{MMS.g5}, \cite{MMS.g6}, \cite{DA.DuZhu}. However, the models proposed in the aforementioned papers do not aim to represent the mechanics of an auction-style exchange with sufficient precision, which is needed to address the questions we investigate herein.

The paper is organized as follows. Section \ref{se:setup} describes the proposed continuum-player game and defines the associated equilibrium.
Section \ref{se:2player} introduces an auxiliary two-player game. The latter is interesting in its own right, but its main purpose is to facilitate the construction of an equilibrium in the continuum-player game. 
The equilibria in the two-player game are described by a system of RBSDEs, whose solution components reflect against each other and whose generator does not satisfy the global Lipschitz and monotonicity properties. Proposition \ref{prop:RBSDE.exist}, in Subsection \ref{subse:RBSDE.exist}, provides the existence result for this system, which, to the best of our knowledge, has not been available before.
Section \ref{se:main} completes the construction of an equilibrium in the continuum-player game, stating the main result of the paper, Theorem \ref{th:main}. This section, in particular, describes the mollification technique for solving a fixed-point problem with discontinuity, which appears in the auxiliary continuum-player game. We believe that this method can be applied to other relevant fixed-point problems, with a similar type of discontinuity.
Finally, in Section \ref{se:example}, we consider a numerical example, in which we compute the equilibrium strategies and show how our results can be used to study the indirect market impact (illustrated by the particular case of ``spoofing").

\section{Modeling framework in continuous time}
\label{se:setup}

\subsection{Preliminary constructions}
\label{subse:preliminaries}

We consider an auction-style exchange in which the trades may occur, and the limit orders may be posted, at any time $t\in[0,T]$.
The market participants are split into two groups: the external investors, who are ``impatient", in the sense that they only submit market orders and need to execute immediately, and the strategic players, who can submit both market and limit orders, and who are willing to spend time doing so, in order to get a better execution price. In our model, we focus on the strategic players, who we refer to as \emph{agents}, and we model the behavior of the external investors exogenously, via the \emph{external demand}.
The external demand for the asset is modeled using three components: the arrival times of the \emph{potential external market orders}, the value of the \emph{potential fundamental price} at these times, and the \emph{elasticity} of the demand.
In our previous investigation \cite{GaydukNadtochiy1}, we have considered a general family of discrete time games for an auction-style exchange, with the exogenous demand process given by a discretization of a (very general) continuous time demand process, over a chosen partition of $[0,T]$. One of the main conclusions of \cite{GaydukNadtochiy1} can be, roughly, interpreted as follows: in order for a non-degenerate equilibrium\footnote{Degeneracy of an equilibrium is defined formally in \cite{GaydukNadtochiy1}. For the discussion presented herein, it suffices to know that degeneracy is an extremal state of the market, and the present work is concerned with the description of the typical (or, normal) states.} to exist in a high-frequency limit (i.e. as the diameter of the partition vanishes), the agents have to be market-neutral -- i.e. they should not expect the future fundamental price of the asset to increase or decrease. In other words, the results of \cite{GaydukNadtochiy1} seem to imply that it is hopeless to search for an equilibrium in a continuous time game (i.e. with unlimited trading frequency) in which the agents have non-trivial trading signals about the direction of the future moves of the asset price. This may sound very discouraging, however, there is a subtle feature hidden in the setting considered in \cite{GaydukNadtochiy1}. Namely, the assumptions of \cite{GaydukNadtochiy1} imply that, in the limiting high-frequency regime, the (potential) external market orders arrive with an infinite frequency, while the beliefs of the agents (i.e. their trading signals) satisfy certain continuity properties. In other words, the agents' signals are assumed to be persistent relative to the trades -- they cannot change on the same time scale on which the market orders arrive. It turns out that this assumption is crucial, and, allowing the (potential) external market orders to arrive at a finite frequency, and making the agents' beliefs be short-lived (i.e. only lasting until the next market order is executed), we can obtain a non-degenerate equilibrium in the continuous time (i.e. unlimited trading frequency) regime. Thus, herein, we model the arrival of the (potential) external market orders via a (rather general) point process, and we assume that the game ends after the first trade occurs.

Let $(\Omega,\tilde{\FF} = (\tilde{\mathcal{F}}_t)_{t\in[0,T]},\PP)$ be a stochastic basis, satisfying the usual conditions, and supporting a (multidimensional) Brownian motion $W$ and a Poisson random measure $N$. We assume that the compensator of $N$ is finite on $[0,T]\times\RR$ (i.e. $N$ is the jump measure of a compound Poisson process) and that it is absolutely continuous w.r.t. Lebesgue measure in time and space.
We denote by $\FF^W$ the usual augmented filtration generated by $W$. We assume that $W$ and $N$ are independent under $\PP$. 
The arrival times of the potential external market orders and the values of the potential fundamental price at these times are described by a counting random measure $M$ on $[0,T]\times\left(\RR\setminus\{0\}\right)$, defined as
$$
M(A)
= \int_{0}^T\int_{\RR} \bone_A\left(t,J_t(x)\right) N(dt,dx),
$$
where $J:(t,x)\mapsto J_t(x)$ is a predictable random function (as defined in \cite{JacodShiryaev}). 
We assume that $J$ is adapted to $\FF^W$ (in particular, it is independent of $N$). 
It is clear that the compensator of $M$ is finite on $[0,T]\times\RR$, it is absolutely continuous w.r.t. Lebesgue measure in time and space, and it is adapted to $\FF^W$. Then, it can be represented as $\lambda_t f_t(x)\, dt\, dx$, with an $\RR$-valued process $\lambda\geq 0$ and a random function $f:(t,x)\mapsto f_t(x)\geq 0$, progressively measurable and adapted to $\FF^W$, and s.t. $\int_{\RR} f_t(x) dx = 1$. Notice that, conditional on $\mathcal{F}^W_T$, $M$ is a Poisson random measure with the compensator $\lambda_t f_t(x)\, dt\, dx$.
The $t$-components of the atoms of $M$ are the arrival times of the potential external market orders, and their $x$-components represent the values of the potential fundamental price at these times. A positive value of $x$ corresponds to the arrival time of a potential external buy order, and a negative value corresponds to the arrival time of a potential external sell order. 
More precisely, we define the \emph{fundamental price} process $X$ (or, the \emph{reservation price} process of external investors) as the jump process of $M$:
\begin{equation}\label{eq.X.def}
X_t = \int_{\RR}xM(\{t\}\times dx).
\end{equation}
Note that the $X$ is the jump process of $M$, but it is \emph{not} a cumulative jump process: it stays at $X_0=0$ at all times except the jump times (thus, $X$ can also be interpreted as changes in the fundamental price). We choose $X_0=0$ to simplify the notation. In general, any $X_0\in\RR$ is possible, but the only effect it would have on the game is shifting all prices and values by $X_0$. To develop a better understanding of the proposed framework, from the economic point of view, it may be useful to think of $X_0$ as the \emph{last transaction price}, which occurred right before the current game started (although this is not important for the mathematical constructions).
The process $\lambda$ describes the intensity of arrival of the potential external market orders (both buy and sell). The function $f_t$ is the probability density of the value of the potential fundamental price at time $t$. We refer to $f$ as the density process of the jump sizes. 
When the jump size of the fundamental price (along with the demand elasticity, defined below) is not enough to trigger a trade, the jump remains ``unregistered" by the agents, and the fundamental price returns to zero.
The \emph{elasticity} of the external demand for the asset is described by the progressively measurable random field $D:(t,p)\mapsto D_t(p)$, adapted to $\FF^W$. We assume that, a.s., $D_t(\cdot)$ is a strictly decreasing continuous function taking value zero at zero.
Then, the total external demand to buy and sell the asset at time $t$, at the price level $p$ and at all more favorable prices, is equal to
\begin{equation}\label{eq.Dplus}
D^+_t(p) = \max\left(0, D_t(p-X_t)\bone_{\{X_t>0\}}\right),
\quad D^-_t(p) = -\min\left(0, D_t(p-X_t)\bone_{\{X_t<0\}}\right),
\end{equation}
respectively.

At any time $t$, every agent (i.e. strategic player) is allowed to submit a \emph{market order} or a \emph{limit order}. The assumptions made further in the paper make it possible to submit a limit order at such a level that it may never get executed -- this, effectively, allows the agents to wait (i.e. do nothing). We do not allow for any time-priority in the limit orders. Instead, we assume that the tick size is zero (the set of possible price levels is $\RR$), and, hence an agent can achieve a priority by posting her order slightly above or below the competing ones (and arbitrarily close to them). 
The game stops at the terminal time $T$ or at the time when the first trade occurs -- whichever one is the earliest.
The mechanics of order execution are explained in the next subsection.
There is an infinite number of agents, and the inventory of an agent is measured in ``shares per unit mass of agents" (see a discussion of this assumption in \cite{GaydukNadtochiy1}).
We assume that the agents are split into two groups: the ones whose initial inventory $s$ is positive (the \emph{long} agents, typically, indicated with a superscript ``$a$"), and those whose initial inventory $s$ is negative (the \emph{short} agents, indicated with a superscript ``$b$"). We assume that the absolute size of each agent's inventory is the same, $s\in\{-1,1\}$, and that an agent with inventory $s$ posts orders of size $s$. 
These assumptions are motivated by the results of our previous investigation \cite{GaydukNadtochiy1}, which demonstrate that, in equilibrium, the absolute value of agent's inventory only scales the size of her orders proportionally, but does not change their type and location.\footnote{Note that the precise setting and the main questions of \cite{GaydukNadtochiy1} are not the same as in the present paper. Nevertheless, the two modeling frameworks have many common features. In particular, in both cases, each agent is risk-neutral and infinitesimally small (hence, has no individual impact), which, ultimately, causes their equilibrium strategies to simply scale with the size of initial inventory.}
We also assume that we are given a pair of measurable spaces of \emph{beliefs}, $\mathbb{A}$ and $\mathbb{B}$, and, for each $\alpha\in\mathbb{A}\cup \mathbb{B}$, there exists a \emph{subjective probability measure} $\PP^{\alpha}$ on $\left(\Omega,\tilde{\FF}\right)$, which is dominated by $\PP$. An agent with beliefs $\alpha$ models the external demand under measure $\PP^{\alpha}$. The empirical distribution of the agents across beliefs is given by a pair of countably additive finite measures $\mu = (\mu^a,\mu^b)$, on $\mathbb{A}$ and $\mathbb{B}$, respectively. Note that, because the game stops right after the first market order is executed, the empirical distribution $\mu$ remains constant throughout the game. 
We make the following assumption on the measures $\{\PP^{\alpha}\}$.

\begin{ass}\label{ass:indep.underPalpha}
Under every $\PP^\alpha$, $W$ remains a Brownian motion, and the jump process of $N$ is a process with conditionally independent increments w.r.t. $\mathcal{F}_T^W$ (in the sense of \cite{JacodShiryaev}).
\end{ass}

The above assumption holds throughout the paper. 
It implies that, under every $\PP^{\alpha}$, $X$ is a process with conditionally independent increments w.r.t. $\mathcal{F}_T^W$. Using this observation and the absolute continuity of $\PP^{\alpha}$ w.r.t. $\PP$, it is easy to deduce that, under every $\PP^{\alpha}$, the compensator of the jump measure of $X$, i.e. of the measure $M$, is given by 
\begin{equation}\label{eq.lambdaF.def}
\lambda^\alpha_t f^\alpha_t(x) dt dx,
\end{equation}
with some nonnegative $\FF^W$-adapted $\lambda$ and $\FF^W$-progressively measurable $f^{\alpha}$, s.t. $\int_{\RR} f^{\alpha}_t(x) dx = 1$.
The interpretation of $\lambda^\alpha$ and $f^\alpha$ is the same as the interpretation of  $\lambda$ and $f$, but under the measure $\PP^{\alpha}$.
Note that we choose not to change the distribution of $W$ under different measures $\PP^{\alpha}$ for a technical reason -- in order to avoid $Z$-dependence in the generator of the associated RBSDE system (\ref{eq.RBSDE.Va.Vb}).

It is clear that Assumption \ref{ass:indep.underPalpha} is satisfied if $Z^{\alpha}_T=d\PP^{\alpha}/d\PP$ is given by a stochastic exponential of a process that is an integral of $\FF^W$-adapted random function w.r.t. compensated $N$. Namely,
$$
dZ^{\alpha}_t = Z^{\alpha}_{t-} \int_{\RR} \Gamma^{\alpha}_t(x)\, [N(dt,dx)-\lambda_t f_t(x) dtdx],
$$
where $\Gamma^{\alpha}\geq-1$ is $\FF^W$-progressively measurable. The compensator of $N$ under $\PP^{\alpha}$ is obtained by multiplying its compensator under $\PP$ by $1+\Gamma^{\alpha}$, hence, Assumption \ref{ass:indep.underPalpha} is clearly satisfied in this case (cf. \cite{JacodShiryaev}).
In Section \ref{se:example}, we provide an example of a family of probability measures $\{\PP^{\alpha}\}$ in the above form.

In the proposed setting, the compensator of $X$ under $\PP^{\alpha}$, given by (\ref{eq.lambdaF.def}), represents the supply/demand signal used by the agents with beliefs $\alpha$: in particular, it determines the arrival intensities of external buy and sell orders. Indeed, the value of $X$ is determined uniquely by a path of $W$ and a realization of the random measure $N$. As the compensator of $N$ may be different under each $\PP^{\alpha}$, the resulting compensator of $X$ may also vary, however, it always remains adapted to $\FF^W$. Thus, the distribution of $X$ under $\PP^{\alpha}$ is uniquely determined by the choice of $(\lambda^{\alpha},f^{\alpha})$.\footnote{To have a complete model for the external demand, one also needs to know its elasticity $D$, but the latter is $\FF$-adapted, hence, its distribution is the same under each $\PP^{\alpha}$.} As a result, the agents' beliefs can be viewed as the ``models" they use to map the observed information, given by $W$, into the predictive signal, given by (\ref{eq.lambdaF.def}).

\subsection{The continuum-player game}

Throughout the rest of this paper we, mostly, work with the filtration $\FF^W$, hence, we denote $\FF = \FF^W$.
The state of an agent is $(s,\alpha)\in \left(\{1\}\times \mathbb{A}\right)\cup \left(\{-1\}\times \mathbb{B}\right)=:\mathbb{S}$. 
Let us now discuss the controls of the agents and the order execution rules.
First, we assume that $\alpha$, representing the agent's beliefs, does not change over time.\footnote{Note that the conditional distribution of the future demand can change dynamically, according the new information revealed.}
Therefore, the state process of an agent represents only her inventory, which can only change once (because the game ends after the first trade). 
The control of every agent is given by a pair of processes $(p,v) = (p_t,v_t)_{t\in[0,T]}$, progressively measurable with respect to $\FF$.\footnote{It may seem natural to assume that the agents' filtration is enlarged by the information generated by the external trades -- i.e. by the jumps of $X$ that lead to a trade. Note that, since the game ends after the first trade, there may only be one such jump. Then, it is easy to see that the predictable filtration of the enlarged filtration, restricted to the time interval until the first trade, is $\mathbb{F}$ itself. Naturally, we require the controls to be predictable.} The process $p$ takes values in $\mathcal{P}(\RR)$, the space of probability measures on $\RR$, equipped with the weak topology, while $v$ takes values in $\RR$.
The second coordinate, $v$, determines the time at which the agent decides to submit a market order, and its formal definition is given below.
The first coordinate, $p_t$, indicates the time-$t$ distribution of the agent's limit orders across the price levels. For example, if $p_t$ is a Dirac measure located at $x$, then, at time $t$, the agent posts all her limit orders at the price level $x$.
The collection of all limit orders is described by the \emph{Limit Order Book (LOB)}, which is a pair of process $\nu=(\nu^a_t,\nu^b_t)_{t\in[0,T]}$, with values in the finite sigma-additive measures on $\RR$, adapted to $\FF$. Herein, $\nu^a_t$ corresponds to the cumulative limit sell orders, and $\nu^b_t$ corresponds to the cumulative limit buy orders, posted at time $t$.\footnote{For convenience, we sometimes refer to $\nu_t$ as a ``measure", rather than a ``pair of measures".}
The bid and ask prices at any time $t\in[0,T]$ are given by the random variables
$$
p^b_t = Q^+(\nu^b_t),
\,\,\,\,\,\,\,\,\,\,\,\,\,\,\,\,p^a_t = Q^-(\nu^a_t),
$$
respectively, where the functions $Q^-$ and $Q^+$ act on sigma-additive measures $\kappa$ on $\RR$ via
\begin{equation}\label{eq.Qpm.def}
Q^+(\kappa) = \sup \text{supp}(\kappa),
\,\,\,\,\,\,\,\,\,\,\,\,\,\,\,\,Q^-(\kappa) = \inf \text{supp}(\kappa).
\end{equation}
Notice that $p^b_t$ and $p^a_t$ are always well defined as extended random variables, but may take infinite values.

Assume that, at time $t$, an agent posts a limit sell order at the price level $p'$.
If the demand to buy the asset at or below the price level $p'$, $D^+_{t}(p')$, exceeds the amount of all limit sell orders posted below $p'$ at time $t$, i.e. $D^+_{t}(p') > \nu^a_t((-\infty,p'))$, then the limit sell order of the agent is executed. Analogous execution rules hold for the limit buy orders.
Thus, if an agent follows the limit order strategy $p$, her limit order is (partially) executed by an external market order at the time
$$
T^{p,a} = \inf\{t\in[0,T]\,:\, D^+_{t}\left(Q^-(p_t)\right) > \nu^a_t\left((-\infty,Q^-(p_t))\right)\},
$$
$$
T^{p,b} = \inf\{t\in[0,T]\,:\, D^-_{t}\left(Q^+(p_t)\right) > \nu^b_t\left((Q^+(p_t),\infty)\right)\},
$$
for the long and short agents, respectively.
Let us clarify the meaning of the above formulas. Assume, for simplicity (and only for the sake of this example), that the demand elasticity curve, $D$, is deterministic.
Note that $D^+_t\equiv D^-_t\equiv0$ unless $X$ jumps at $t$. Thus, the above formulas say that a non-zero fraction of agent's limit orders is executed at time $t$, by an external order, if and only if $X$ jumps at time $t$, and its jump is sufficiently large, so that the demand at the agent's ``best limit order" is higher than the size of all limit orders with higher price priority. The latter, along with continuity of $D$, ensures that a non-zero fraction of agent's limit orders is executed at this time.

The value of $v_t$ indicates the critical level of the bid or ask price (i.e. a threshold), at which the agent decides to submit a market order. We assume that the size of the agent's market order is equal to her inventory, and it is executed at the bid or ask price available at the time when the order is submitted.
Thus, the agent will submit her own market order at the time
$$
\tau^{v,a} = \inf\{t\in[0,T]\,:\, v_t \leq p^b_t\},
\,\,\,\,\,\,\,\,\,\,\,\tau^{v,b} = \inf\{t\in[0,T]\,:\, v_t \geq p^a_t\},
$$
for the long and short agents, respectively.\footnote{It is clear that, for every stopping time $\tau^{v,a/b}$ with respect to $\FF$, there exists a process $v_t$, adapted to $\FF$, such that $\tau^{v,a/b}$ has the above representation.}
The collection of all thresholds $v$ is described by the pair of processes $\theta=(\theta^a_t,\theta^b_t)_{t\in[0,T]}$, with values in the finite sigma-additive measures on $\RR$, adapted to $\FF$. 

\begin{remark}
The above definitions of the execution times make use of the assumption that each agent is infinitesimally small, and, hence, her order is necessarily executed once the demand reaches it.
They also use the following two implicit assumptions: each agent believes that her limit order will be executed first among all orders at the same price level, and her market order will be executed at the best price available. These assumptions and their connection to a finite-player game are discussed in \cite{GaydukNadtochiy1}.
\end{remark}

Recall that each agent is infinitesimal, hence, even if she executes a non-zero fraction of her inventory, this may not constitute a trade of non-zero size. We, therefore, define the first ``significant" execution time as the first time when a non-zero mass of agents execute a non-zero fraction of their inventory (i.e. when a non-zero total inventory mass is traded).
Consider the first significant execution times of external market orders:
\begin{equation}\label{eq.Ta.def}
T^a = \inf\{t\in[0,T]\,:\, D^+_{t}(p^a_t) > 0\},
\,\,\,\,\,\,
T^b = \inf\{t\in[0,T]\,:\, D^-_{t}(p^b_t) > 0\},
\end{equation}
Similarly, we define the first significant execution times of internal market orders:
\begin{equation}\label{eq.taua.def}
\tau^a = \inf\{t\in[0,T]\,:\, \theta^a_t ((-\infty,p^b_t])>0\},
\,\,\,\,\,\,\,\,\,\,\,\,
\tau^b = \inf\{t\in[0,T]\,:\, \theta^b_t ([p^a_t,\infty))>0\}.
\end{equation}
Finally, given $(\nu,X,D)$, we define the \emph{clearing prices}:
$$
\tilde{p}^{c,a}_t = \sup\{p< Q^+(\nu^a_t)\,:\,D^+_{t}(p)>\nu^a_{t}((-\infty,p))\},
\,\,\,\,\,\,\,
p^{c,a}_t = \tilde{p}^{c,a}_t \bone_{\{\tilde{p}^{c,a}_t \geq p^{a}_t\}},
$$
$$
\tilde{p}^{c,b}_t = \inf\{p> Q^-(\nu^b_t)\,:\,D^-_{t}(p)>\nu^b_{t}((p,\infty))\},
\,\,\,\,\,\,\,
p^{c,b}_t = \tilde{p}^{c,b}_t \bone_{\{\tilde{p}^{c,b}_t \leq p^{b}_t\}}.
$$
For a long agent with strategy $(p,v)$, the game ends at the time $T^{p,a}\wedge \tau^{v,a}\wedge T\wedge T^a\wedge T^b\wedge \tau^a\wedge \tau^b$ (and similarly for the short agents). If an agent has any inventory left at the end of the game, then it is marked to market.\footnote{There is no canonical way to choose the marking-to-market rules in a setting where agents have no exogenously given valuation of the asset (and we insist on using such a setting, because we think of the agents as ``pure speculators"). In particular, other marking rules are possible. Herein, we merely make a choice of marking rules which is economically meaningful.}
The precise rules for computing the payoff of a long agent, using strategy $(p,v)$, are described below.

{\bf If the game is terminated by an external market order}: $T^{p,a}\wedge T^a\wedge T^b < T\wedge\tau^a\wedge\tau^b$ (note that equality is impossible, as the right hand side is predictable and the left hand side is totally inaccessible).
\begin{itemize}
\item If $T^{p,a}\wedge T^a < T^b$ (equality is impossible), then the payoff is 
\begin{equation}\label{eq.payoff.complicated}
\int_{-\infty}^{\tilde{p}^{c,a}_t} z p_t(dz) + \int_{\tilde{p}^{c,a}_t}^{\infty} (p^{c,a}_{t} + p^b_{t}) p_t(dz),
\,\,\,\,\,\,\,\,\,\text{with}\,\,\,\,\,t=T^{p,a}\wedge T^a.
\end{equation}
\item If $T^b < T^{p,a}\wedge T^a$, then the payoff is $p^b_{T^b} + p^{c,b}_{T^b}$.
\end{itemize}
Notice that the remaining inventory of an agent is marked to the bid price shifted by the clearing price. This choice can be (heuristically) interpreted as follows.
Assume that, after the trade, a new game starts, with the agents having the same distribution of inventory and the same beliefs about the distribution of future jumps of $X$ (i.e. the same $\{(\lambda^{\alpha},f^{\alpha})\}$). Then, the only parameter that is different in the new game, as compared to the original one, is the value of $X_0$, which, in the new game, becomes equal to the clearing price. As mentioned in the discussion following (\ref{eq.X.def}), the new value of $X_0$ will simply shift all prices and values in the new game by $X_0$, hence, the bid price is shifted by the value of clearing price. Finally, it is easy to deduce (and will be shown later in the paper) that it is suboptimal for an agent to post a limit buy order at positive levels. Thus, if an external sell order is executed, the clearing price is non-positive, and, hence, the remaining inventory is marked to the current bid price shifted downwards (the opposite holds if an external buy order is executed).

{\bf If the game is terminated by an internal market order}: $T\wedge\tau^a\wedge\tau^b < T^{p,a}\wedge T^a\wedge T^b$.
\begin{itemize}
\item If $\tau^b < \tau^a \wedge T$ then the payoff is $p^a_{\tau^b}$.
\item If $\tau^a\wedge T \leq \tau^b$ then the payoff is $p^b_{\tau^a\wedge T}$.
\end{itemize}
To explain the above, assume, e.g., that an internal buy order occurs: i.e. $\tau^b < \tau^a \wedge T$.
Note that the internal orders are different, because they are predictable. Hence, the long agents can act exactly at the time $\tau^b$ and ``flock" their limit orders to the best ask price, $p^a$, to match the market orders from short agents (who initiated the internal buy order). On the other hand, if any of the agents (long or short) do not trade at $p^a$, they will mark their inventory to the bid or ask price shifted by $p^a$, and, since $p^b\leq0\leq p^a$, it is easy to see that it is beneficial for all of them to trade at $p^a$.\footnote{Of course, in practice, not all agents will act at the same time: only a fraction of them will submit the internal market orders at the end of the game, the others will move on to the next game, with updated $X_0$. However, such ``flocking" of agents at the end of the game (provided the game ends with an internal market order) is consistent with the empirical observation of ``clustering trades".}

The following diagram (containing a reference to equation (\ref{eq.payoff.complicated})) describes the payoff of a \emph{long} agent:

\begin{tikzpicture}
  [
    grow                    = right,
    sibling distance        = 4em,
    level distance          = 20em,
    edge from parent/.style = {draw, -latex},
    every node/.style       = {font=\footnotesize},
    sloped
  ]
  \node [root] {$t=0$}
      child { node [dummy] {}
      child{node[env]{$p^b_{\tau^a}$}
      edge from parent node [below] {internal sell}}
      child{node[env]{$p^a_{\tau^b}$}
      edge from parent node [above] {internal buy}}
      edge from parent node [below] {internal market order} 
      }
      child { node [env] {$p^b_{T}$}
      edge from parent node [above] {no market orders} 
      }
    child { node [dummy] {}
      child{node[env]{$p^b_{T^b} + p^{c,b}_{T^b}$}
      edge from parent node [below] {external sell}}
      child{node[env]{(\ref{eq.payoff.complicated})}
      edge from parent node [above] {external buy}}
      edge from parent node [above] {external market order} 
      };
\end{tikzpicture}

Similar rules apply to short agents.
Formally, given $(\nu,\theta,X,D)$, the individual objective of an agent starting at the initial state $(1,\alpha)$ and using the control $(p,v)$ is given by:
\begin{equation}\label{eq.intro.Jlong.def}
J^{(\nu,\theta),(p,v)}(1,\alpha) =
\EE^{\alpha} \left[\int_{\RR} \left(z \bone_{\{ z\leq \tilde{p}^{c,a}_{\hat{T}^{p,a}}\}} 
+ \left(p^b_{\hat{T}^{p,a}} + p^{c,a}_{\hat{T}^{p,a}} \right) \bone_{\{z>\tilde{p}^{c,a}_{\hat{T}^{p,a}}\}}\right)p_{\hat{T}^{p,a}}(dz) 
\bone_{\{\hat{T}^{p,a} < T^b\wedge \hat{\tau}^{v,a}\wedge\tau^b\}} 
\right.
\end{equation}
$$
\left.
+\left( p^b_{T^b} + p^{c,b}_{T^b} \right) \bone_{\left\{T^b < \hat{T}^{p,a} \wedge \hat{\tau}^{v,a}\wedge\tau^b\right\}} 
+ \left( p^a_{\tau^b}\bone_{\{\tau^b<\hat{\tau}^{v,a}\}} + p^b_{\hat{\tau}^{v,a}}\bone_{\{\tau^b\geq\hat{\tau}^{v,a}\}} \right)
\bone_{\{\hat{T}^{p,a}\wedge T^b > \hat{\tau}^{v,a}\wedge\tau^{b}\}}
\right]
$$
where $\hat{T}^{p,a} = T\wedge T^{p,a}\wedge T^a$, $\hat{\tau}^{v,a}=T\wedge\tau^{v,a}\wedge\tau^a$, and we assume that $0\cdot\infty = 0$.
Similarly,
\begin{equation}\label{eq.intro.Jshort.def}
J^{(\nu,\theta),(p,v)}(-1,\alpha) =
\EE^{\alpha} \left[-\int_{\RR} \left(z \bone_{\{z\geq \tilde{p}^{c,b}_{\hat{T}^{p,b}}\}} 
+ \left(p^a_{\hat{T}^{p,b}} + p^{c,a}_{\hat{T}^{p,b}} \right) \bone_{\left\{z<\tilde{p}^{c,b}_{\hat{T}^{p,b}}\right\}}
\right)p_{\hat{T}^{p,b}}(dz) 
\bone_{\{\hat{T}^{p,b} < T^a\wedge \hat{\tau}^{v,b}\wedge\tau^a\}} 
\right.
\end{equation}
$$
\left.
-\left( p^a_{T^b} + p^{c,a}_{T^a} \right) \bone_{\{T^a < \hat{T}^{p,b} \wedge \hat{\tau}^{v,b}\wedge\tau^a\}} 
- \left( p^b_{\tau^a}\bone_{\{\tau^a<\hat{\tau}^{v,b}\}} + p^a_{\hat{\tau}^{v,b}}\bone_{\{\tau^a\geq\hat{\tau}^{v,b}\}} \right)
\bone_{\{\hat{T}^{p,b}\wedge T^a > \hat{\tau}^{v,b}\wedge\tau^{a}\}}
\right]
$$
where $\hat{T}^{p,b} = T\wedge T^{p,b}\wedge T^b$, $\hat{\tau}^{v,b}=T\wedge\tau^{v,b}\wedge\tau^b$.
Every agents aims to maximize her objective. The above objectives may seem convoluted -- this is because they are meant to provide a close approximation of the real-world execution rules and marking to market. In the next subsection, we establish a more transparent representation of the objectives.

In the following definitions, we assume that a stochastic basis, a Brownian motion $W$, a random measure $M$, a random field $D$, spaces $\mathbb{A}$ and $\mathbb{B}$, an associated set of measures $\{\PP^{\alpha}\}_{\alpha\in \mathbb{A}\cup \mathbb{B}}$, and the empirical distribution $\mu$, are fixed and satisfy the assumptions made earlier in this section. (Nevertheless, it is shown in Subsection \ref{subse:representation} that the input $(M,\{\PP^{\alpha}\})$ can be replaced by the agents' signals $\{\lambda^{\alpha},f^{\alpha}\}$.)

\begin{definition}\label{def:admis}
For a given market $(\nu,\theta)$ and a state $(s,\alpha)\in \mathbb{S}$, a pair of $\FF$-progressively measurable processes $(p,v)$ is an {\bf admissible} control, if the positive part of the expression inside the expectation in (\ref{eq.intro.Jlong.def}) (if $s=1$) or (\ref{eq.intro.Jshort.def}) (if $s=-1$) has a finite expectation under $\PP^{\alpha}$.
\end{definition}

\begin{definition}\label{def:opt}
For a given market $(\nu,\theta)$ and state $(s,\alpha)\in \mathbb{S}$, we call an admissible control $(p,v)$ {\bf optimal} if
$$
J^{(\nu,\theta),(p,v)}(s,\alpha) \geq J^{(\nu,\theta),(p',v')}(s,\alpha)
$$ 
$\PP$-a.s., for any admissible control $(p',v')$.
\end{definition}

In the above, we make the standard assumption of games with \emph{a continuum of players}: each agent is too small to affect the distribution of cumulative controls (described by $\nu$) when she changes her control.
Next, we define Nash equilibrium in the proposed game.

\begin{definition}\label{def:equil.def}
A given market $(\nu,\theta)$ and a pair of $\FF$-progressively measurable random fields $(p,v):\Omega\times[0,T]\times\mathbb{S}\rightarrow\mathcal{P}(\RR)\times\RR$ form an {\bf equilibrium}, if
\begin{enumerate}

\item for $\mu$-a.e. $(s,\alpha)\in\mathbb{S}$, $(p(s,\alpha),v(s,\alpha))$ is an optimal control for $(\nu,\theta)$ and $(s,\alpha)$,

\item and the following holds $\PP$-a.s., for any $t<\bar{T}:=T\wedge T^a\wedge T^b \wedge\tau^a\wedge\tau^b)$ and any $x\in\RR$:
\begin{equation}\label{eq.nuplus.fixedpoint.def}
\nu^a_t((-\infty,x]) = \int_{\mathbb{A}} p_t\left(1,\alpha;\left(-\infty,x\right]\right) \mu^a(d\alpha),
\,\,\,\,\,\,
\nu^b_t((-\infty,x]) = \int_{\mathbb{B}} p_t\left(-1,\alpha;\left(-\infty,x\right]\right) \mu^b(d\alpha),
\end{equation}
\begin{equation}\label{eq.numinus.fixedpoint.def}
\theta^a_t((-\infty,x]) = \int_{\mathbb{A}} \bone_{\left\{v_t(1,\alpha)\leq x \right\}} \mu^a(d\alpha),
\,\,\,\,\,\,
\theta^b_t((-\infty,x]) = \int_{\mathbb{B}} \bone_{\left\{v_t(-1,\alpha)\leq x \right\}} \mu^b(d\alpha).
\end{equation} 
\end{enumerate}
\end{definition}
Note that a trivial equilibrium, in which all agents stop immediately, is always possible. However, such equilibrium, clearly, is not sufficient for modeling purposes, and the existence of other, non-trivial, equilibria is far from obvious. In the remainder of this paper, we use an auxiliary two-player game (cf. Section \ref{se:2player}) to identify a class of more realistic potential equilibria, in which the end time of the game is determined uniquely by the solution of an associated RBSDE system (cf. (\ref{eq.Y1.Y2.RBSDE})), and we prove the existence of equilibrium in this class, in Theorem \ref{th:main}. Even though it is possible to construct models in which the resulting equilibrium is still trivial (i.e. the end time of the game is zero), this is not the case in general, as confirmed by the example in Section \ref{se:example}. 

\begin{remark}\label{re:inconsist}
In the above definition, it is implicitly assumed that the empirical measure of the agents' states remains constant in time until the game is over for all players. This is, indeed, the case, if the equilibrium is such that, $\PP$-a.s., for all $t<\bar{T}$, we have:
\begin{equation}\label{eq.endog.mu.def}
\mu\circ\left( (s,\alpha)\mapsto S_t(s,\alpha) \right)^{-1} = \mu,
\end{equation}
with 
$$
S_t(1,\alpha) = \bone_{\left[0,T^{p(1,\alpha),a}\wedge \tau^{v(1,\alpha),a}\right)}(t),\,\,\, \text{and}\,\,\, S_t(-1,\alpha) = - \bone_{\left[0,T^{p(-1,\alpha),b}\wedge \tau^{v(-1,\alpha),b}\right)}(t).
$$
The condition (\ref{eq.endog.mu.def}) may fail if a non-zero mass of agents manages to execute their orders strictly before $\bar{T}$: i.e. if $T^{p(1,\alpha),a}\wedge \tau^{v(1,\alpha),a}<\bar{T}$ for a set of $\alpha$ with positive $\mu^a$-measure, or $T^{p(-1,\alpha),b}\wedge \tau^{v(-1,\alpha),b}<\bar{T}$ for a set of $\alpha$ with positive $\mu^b$-measure. The latter cannot occur due to external market orders, because they only arrive at a finite number of times and, before $T^a\wedge T^b\geq \bar{T}$, only a zero mass of agents can execute their limit orders against any such market order (cf. (\ref{eq.Ta.def})).
It is also true that, at any time $t$, before $\tau^a\wedge\tau^b\geq\bar{T}$, only a zero mass of agents can execute their internal market orders (cf. (\ref{eq.taua.def})). However, the set of such times $t$ may be uncountable.
Therefore, to ensure that $\mu$ remains constant and, hence, (\ref{eq.endog.mu.def}) holds, it suffices to consider only the equilibria satisfying, $\PP$-a.s., for all $t$, except, possibly, a countable set:
$$
v_t(1,\alpha)\geq v^a_t:= Q^-(\theta^a_t),
\,\,\,\,\,\,\,\,\,\,\,\,\,\,\,v_t(-1,\alpha)\leq v^b_t:= Q^+(\theta^b_t),
\,\,\,\,\,\,\,\,\,\,\,\,\,\,
\forall \alpha\in\mathbb{A}\cup\mathbb{B}.
$$
In the subsequent sections, we construct such an equilibrium.
\end{remark}

\subsection{Representation of the objective}
\label{subse:representation}

In this section, we provide an equivalent representation of the objective of the agents, which makes it more tractable and more convenient for the analysis that follows.
In addition, it shows that the main input parameters for the proposed equilibrium problem are the signals $\{(\lambda^{\alpha},f^{\alpha})\}_{\alpha\in\mathbb{A}\cup\mathbb{B}}$, forming the compensators of $X$ under $\{\PP^{\alpha}\}$, and the demand elasticity $D$ (the latter is independent of $\alpha$ and, in many realistic models, can be deterministic). In particular, there is no need to keep track of the random measure $N$ and the probability measures $\{\PP^{\alpha}\}$ -- they are only needed to show that the present setting fits within the standard framework for games with heterogenous beliefs.
The desired representation is derived following standard arguments, making use of the independence of the driving Poisson measure $N$ and the Brownian motion $W$.
First, we introduce new notation that will be used throughout the paper. For any $\alpha\in\mathbb{A}\cup\mathbb{B}$, $t\in[0,T]$, $p,x,y\in\RR$ and $\kappa\in\mathcal{P}(\RR)$, we define the \emph{instanteneous filling rates} for limit orders at levels $x$ and $y$:
\begin{equation}\label{eq.F.c.def}
F_t^{+,\alpha}(x) = \int_{x\vee0}^{\infty} f^\alpha_t(u)\text{d}u,
\,\,\,\,\,\,\,\,
F_t^{-,\alpha}(y) = \int_{-\infty}^{y\wedge0} f^\alpha_t(u)\text{d}u,
\,\,\,\,\,\,\,\,
c_t^\alpha(x,y) = \lambda_t^\alpha \left( F^{-,\alpha}_t(y) + F^{+,\alpha}_t(x) \right).
\end{equation}
Next, we define the \emph{clearing price} as a function of the fundamental price $x$:
\begin{equation}\label{eq.lca.def}
l^{c,a}_t(x) = \sup\left\{p<Q^+(\nu^a_t)\,:\,D_t(p-x)>\nu^a_t((-\infty,p))\right\},
\end{equation}
\begin{equation}\label{eq.lcb.def}
l^{c,b}_t(x) = \inf\left\{p>Q^-(\nu^b_t)\,:\,-D_t(p-x)>\nu^b_t((p,\infty))\right\}.
\end{equation}
Notice that, if $X$ has a positive jump at time $t$, then the clearing price at time $t$ is given by $\tilde{p}^{c,a}_t = l^{c,a}_t(X_t)$. Similarly, if $X$ has a negative jump at time $t$, then $\tilde{p}^{c,b}_t = l^{c,b}_t(X_t)$.
Finally, we introduce the \emph{instantaneous reward rates} from executed limit orders, distributed according to $\kappa$, with the bid and ask prices $y$ and $x$: 
\begin{equation}\label{eq.ha.def}
h^{\alpha,a}_t(\kappa,x,y) = \lambda^{\alpha}_t \int_{(Q^-(\kappa)\wedge x)\vee0}^{\infty} f^\alpha_t(u) 
\left[
\int_{-\infty}^{l^{c,a}_t(u)} z\kappa(dz)
+ \left(y + l^{c,a}_t(u)\bone_{\{l^{c,a}_t(u)\geq x\}}\right) \kappa\left((l^{c,a}_t(u),\infty)\right)
\right]du
\end{equation}
$$
+ \lambda^{\alpha}_t \int_{-\infty}^{y\wedge0} f^\alpha_t(u) \left(y + l^{c,b}_t(u)\right) du,
$$
\begin{equation}\label{eq.hb.def}
h^{\alpha,b}_t(\kappa,x,y) = \lambda^{\alpha}_t \int_{-\infty}^{(Q^+(\kappa)\vee y)\wedge0} f^\alpha_t(u) 
\left[
\int_{l^{c,b}_t(u)}^{\infty} z\kappa(dz)
+ \left(x + l^{c,b}_t(u)\bone_{\{l^{c,b}_t(u)\leq y\}}\right) \kappa\left((-\infty,l^{c,b}_t(u))\right)
\right] du
\end{equation}
$$
+ \lambda^{\alpha}_t \int_{x\vee0}^{\infty} f^\alpha_t(u) \left(x + l^{c,a}_t(u)\right) du.
$$
Using the above notation, we can obtain a simplified expression for the objective, given in the following lemma. Note that the expectation in this representation is taken under the reference measure, and the objective depends only on the cumulative actions $(\nu,\theta)$ and on $(\{\lambda^{\alpha},f^{\alpha}\},D)$ (as the expressions in (\ref{eq.F.c.def})--(\ref{eq.hb.def}) depend only on $(\{\lambda^{\alpha},f^{\alpha}\},D)$). 

\begin{lemma}\label{prop:simpObj}
Let Assumption \ref{ass:indep.underPalpha} hold.
Given a market $(\nu,\theta)$, for any $\alpha\in\mathbb{A}\cup\mathbb{B}$ and any admissible strategy $(p,v)$, we have:
\begin{equation}\label{eq.Jrep.long}
J^{(\nu,\theta),(p,v)}(1,\alpha) = 
\EE\Big[\int_0^{\hat{\tau}^{v,a}\wedge\tau^b} \exp\left(-\int_0^s c^{\alpha}_u\left(p^a_u\wedge Q^-(p_u),p^b_u\right)du\right) h^{\alpha,a}_s(p_s,p^a_s,p^b_s)ds
\end{equation}
\begin{equation*}
+ \exp\left(-\int_0^{\hat{\tau}^{v,a}\wedge\tau^b} c^{\alpha}_u\left(p^a_u\wedge Q^-(p_u),p^b_u\right) du\right)
\left(p^a_{\tau^b}\bone_{\{\tau^b<\hat{\tau}^{v,a}\}} + p^b_{\hat{\tau}^{v,a}} \bone_{\{\tau^b\geq\hat{\tau}^{v,a}\}}\right)\Big],
\end{equation*}
\begin{equation}\label{eq.Jrep.short}
J^{(\nu,\theta),(p,v)}(-1,\alpha) = 
-\EE\Big[\int_0^{\hat{\tau}^{v,b}\wedge\tau^a} \exp\left(-\int_0^s c^{\alpha}_u\left(p^a_u,p^b_u\vee Q^+(p_u)\right)du\right) h^{\alpha,b}_s(p_s,p^a_s,p^b_s)ds
\end{equation}
\begin{equation*}
+ \exp\left(-\int_0^{\hat{\tau}^{v,a}\wedge\tau^b} c^{\alpha}_u\left(p^a_u,p^b_u\vee Q^+(p_u)\right) du\right)
\left(p^b_{\tau^a}\bone_{\{\tau^a<\hat{\tau}^{v,b}\}} + p^a_{\hat{\tau}^{v,b}} \bone_{\{\tau^a\geq\hat{\tau}^{v,b}\}}\right)\Big],
\end{equation*}
where $\hat{\tau}^{v,a}=T\wedge \tau^{v,a}\wedge\tau^a$, $\hat{\tau}^{v,b}=T\wedge \tau^{v,b}\wedge\tau^b$ and the expectations are taken under $\PP$.
\end{lemma}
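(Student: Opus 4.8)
The plan is to condition on $\mathcal{F}^W_T$ and exploit the fact that, under Assumption \ref{ass:indep.underPalpha}, the only source of randomness not measurable with respect to $\mathcal{F}^W_T$ is the jump process $X$, which, conditionally on $\mathcal{F}^W_T$ and under $\PP^\alpha$, is a Poisson random measure with compensator (\ref{eq.lambdaF.def}). All quantities built from the market and the control -- the book $\nu$, the quotes $p^a,p^b$, the clearing maps $l^{c,a},l^{c,b}$, the control $(p,v)$, and hence the predictable times $\hat\tau^{v,a}=T\wedge\tau^{v,a}\wedge\tau^a$ and $\tau^b$ -- are $\FF=\FF^W$-adapted, so they become frozen once we condition on $\mathcal{F}^W_T$. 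By contrast, the external-order times $T^{p,a},T^a,T^b$ are totally inaccessible and are driven by the jumps of $X$. Thus, conditionally on $\mathcal{F}^W_T$, the first game-ending external jump for a long agent is the first point of $M$ whose size $u$ triggers an execution, i.e. with $u>(p^a_s\wedge Q^-(p_s))\vee0$ (hitting the agent's own order or the book's best ask, whichever is lower) or $u<p^b_s\wedge0$ (hitting a resting buy order). This is a time-inhomogeneous Poisson thinning whose instantaneous intensity is precisely $c^\alpha_s\big(p^a_s\wedge Q^-(p_s),p^b_s\big)$ of (\ref{eq.F.c.def}), so the conditional probability that no triggering jump occurs on $[0,s]$ equals $\exp\!\big(-\int_0^s c^\alpha_u(p^a_u\wedge Q^-(p_u),p^b_u)\,du\big)$.

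The second step is a first-jump (competing-risks) decomposition of (\ref{eq.intro.Jlong.def}), conditional on $\mathcal{F}^W_T$, according to whether the first triggering external jump occurs before the frozen time $\hat\tau^{v,a}\wedge\tau^b$ or not (equality of a totally inaccessible and a predictable time has probability zero, as already noted). On the event that it occurs at some $s<\hat\tau^{v,a}\wedge\tau^b$, integrating over the conditional Poisson arrivals produces the survival factor times the reward rate $h^{\alpha,a}_s$ (the instantaneous triggering intensity times the expected payoff such a jump generates), whose time-integral is the first term of (\ref{eq.Jrep.long}); on the complementary event the game ends at $\hat\tau^{v,a}\wedge\tau^b$ via an internal order (or at $T$), contributing the survival factor times $p^a_{\tau^b}\bone_{\{\tau^b<\hat\tau^{v,a}\}}+p^b_{\hat\tau^{v,a}}\bone_{\{\tau^b\geq\hat\tau^{v,a}\}}$, the terminal term. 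The resulting conditional expectation is $\mathcal{F}^W_T$-measurable, and, because $Z^\alpha_T=d\PP^\alpha/d\PP$ is a stochastic exponential of an integral against the compensated $N$ with an $\FF^W$-adapted integrand, one has $\EE[Z^\alpha_T\mid\mathcal{F}^W_T]=1$ and hence $\PP^\alpha=\PP$ on $\mathcal{F}^W_T$; this lets us replace the outer $\EE^\alpha$ by $\EE$, yielding (\ref{eq.Jrep.long}).

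The \textbf{main obstacle} is the algebra of the inner step: showing that the expected reward from a single triggering jump at time $s$, obtained by integrating the execution payoffs over the jump size $u$ against $\lambda^\alpha_s f^\alpha_s$, equals exactly $h^{\alpha,a}_s(p_s,p^a_s,p^b_s)$ of (\ref{eq.ha.def}). This needs the geometric identification that, on a positive jump $X_s=u$, the clearing price is $\tilde{p}^{c,a}_s=l^{c,a}_s(u)$ and $p^{c,a}_s=l^{c,a}_s(u)\bone_{\{l^{c,a}_s(u)\geq p^a_s\}}$ (from (\ref{eq.lca.def})), so that the buy-side payoff (\ref{eq.payoff.complicated}) integrates, over $u\in\big((p^a_s\wedge Q^-(p_s))\vee0,\infty\big)$, into the first line of (\ref{eq.ha.def}), while the external-sell payoff $p^b_{T^b}+p^{c,b}_{T^b}$ integrates, over $u<p^b_s\wedge0$, into the second line. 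One must track carefully the lower limits of integration (which encode, via the $Q^-(p_s)\wedge p^a_s$ cutoff, whether the triggering order is the agent's own or one already resting in the book) and the masking indicators in the clearing price. Once this matching is verified the long-agent case is complete, and the short-agent identity (\ref{eq.Jrep.short}) follows by the symmetric computation with $h^{\alpha,b}$, $l^{c,b}$, and the reversed inequalities.
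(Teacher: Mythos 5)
Your proposal is correct and follows essentially the same route as the paper's proof: condition on $\mathcal{F}^W_T$, observe that all book and control quantities become deterministic while the game-ending external jumps form a thinned Poisson process with intensity $c^{\alpha}_s\left(p^a_s\wedge Q^-(p_s),p^b_s\right)$, apply the first-jump decomposition to obtain the survival factor, running reward $h^{\alpha,a}$, and terminal term, and then pass from $\EE^{\alpha}$ to $\EE$ because the resulting conditional expectation is $\mathcal{F}^W_T$-measurable. The only cosmetic difference is that you justify $\PP^{\alpha}=\PP$ on $\mathcal{F}^W_T$ via the stochastic-exponential form of $Z^{\alpha}_T$, which the paper presents only as a sufficient condition for Assumption \ref{ass:indep.underPalpha}, whereas the paper argues directly from the fact that $W$ remains a Brownian motion under every $\PP^{\alpha}$ (part of that assumption); both give the same conclusion.
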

\begin{proof}
The proof follows easily by conditioning on $W$. Notice that, conditional on $\mathcal{F}_T$, $M$ is a Poisson random measure, with the deterministic compensator $\lambda^{\alpha}_t f^{\alpha}_t(x)\,dt\,dx$, which is finite on $[0,T]\times\RR$. Recall also that $D$, $\nu$, $\theta$, $p$, $v$, $p^a$, $p^b$, $\tau^{v,a}$, $\tau^{v,b}$, $\tau^a$, $\tau^b$, and all the random functions defined above the lemma, are adapted to $\FF$. Conditional on $\mathcal{F}_T$, they become deterministic functions of time. 
Recall the fundamental price process, $X_t = \int_{\RR}x M(\{t\}\times dx)$, and introduce
$$
Y_t = X_t \left(\bone_{\{X_t>(p^a_t\wedge Q^-(p_t))\vee0\}} + \bone_{\{X_t<p^b_t\wedge0\}}\right).
$$
Notice that $\hat{T}^{p,a}$ is the time of the first positive jump of $Y_t$, and $T^b$ is the time of its first negative jump. 
Notice also that, conditional on $\mathcal{F}_T$, the clearing price $\tilde{p}^{c,a}_t$ becomes a deterministic function of $t$ and $Y_t$: $\tilde{p}^{c,a}_t = l^{c,a}_t(Y_t)$.
Thus, conditional on $\mathcal{F}_T$, the expression inside the expectation in (\ref{eq.intro.Jlong.def}) becomes a function of the time and size of the first jump of $Y$. 
Conditional on $\mathcal{F}_T$, $X$ is the jump process of a Poisson random measure with the compensator $\lambda^{\alpha}_t f^{\alpha}_t(u)dudt$. 
It is also clear that, conditional on $\mathcal{F}_T$, $Y$ is the jump process of a non-homogeneous compound Poisson process with intensity $c^{\alpha}_t\left(p^a_t\wedge Q^-(p_t),p^b_t\right)$, and with the distribution of jump sizes at time $t$ given by
$$
\frac{\lambda^{\alpha}_t f^{\alpha}(x)}{c^{\alpha}_t\left(p^a_t\wedge Q^-(p_t),p^b_t\right)}
\left(\bone_{\{x\leq p^b_t\wedge0\}} + \bone_{\{x\geq (p^a_t\wedge Q^-(p_t))\vee0\}} \right)dx.
$$
A standard computation, then, yields (\ref{eq.Jrep.long}). The equation (\ref{eq.Jrep.short}) is derived similarly. The expectations in (\ref{eq.Jrep.long}) and (\ref{eq.Jrep.short}) are taken under $\PP$, because the expressions inside the expectations are adapted to $\FF=\FF^W$, and $W$ has the same distribution under $\PP$ and $\PP^{\alpha}$.
\qed
\end{proof}

\section{A two-player game}
\label{se:2player}

In this section, we consider an auxiliary non-zero-sum two-player control-stopping game. It is related to the continuum-player game, but the precise connection will be established in the subsequent sections.
We refer the reader to \cite{KaratzasLi}, \cite{KaratzasSuderth}, and the references therein, for more on non-zero-sum two-player control-stopping games.\footnote{See, e.g., \cite{Dyn1}, \cite{Dyn2}, \cite{Dyn3}, \cite{Dyn4}, and the references therein, for the related classical Dynkin games, which are zero-sum and stopping-only.} It is worth mentioning, however, that the present game does not fall within any of the classes considered before. A more detailed description of this class of games is carried out in our forthcoming work \cite{GaydukNadtochiy3}.

Assume that all the probabilistic constructions made in Subsection \ref{subse:preliminaries} are in place. Namely, we are given a stochastic basis, with a Brownian motion $W$, a Poisson measure $N$, a counting random measure $M$, a family of probability measures $\{\PP^{\alpha}\}$, and with the demand elasticity process $D$, as described in Section \ref{se:setup}. We assume that Assumption \ref{ass:indep.underPalpha} holds. Assume, in addition, that $\mathbb{A}=\{\alpha^0\}$ and $\mathbb{B}=\{\beta^0\}$. Consider a two-player game, in which the first (long) player starts with the initial inventory $1$ and has beliefs $\alpha^0$, and the second (short) player starts with the initial inventory $-1$ and has beliefs $\beta^0$. The game proceeds according to the rules similar to those described in the previous section: each agent can post limit orders on the respective side of the book, or can terminate the game by submitting a market order. The execution of limit orders against the external market orders occurs in exactly the same way as described in the previous section. However, herein, at any given time, each agent is only allowed to post limit orders at a single location (i.e. the control $p_t$ is a Dirac measure). In addition, the main difference between the present game and the one defined in the previous section is that, herein, each player has a non-zero mass and, hence, can affect the LOB. In fact, since there is only one player on each side of the book, the LOB is given by a combination of two Dirac measures: $\nu^a_t = \delta_{p^a_t}$, $\nu^b_t = \delta_{p^b_t}$, controlled by the locations of the players' limit orders: $p^a$ for the long agent, and $p^b$ for the short one. Clearly, $p^a$ also coincides with the ask price, and $p^b$ is the bid price. Note that each of these prices is now controlled by a single agent, which is not the case in the original game described in the previous section. The same is true for the stopping thresholds: $\theta^a$ and $\theta^b$ are given by Dirac measures, and the locations of these measures correspond to the thresholds $v^a$ and $v^b$ used by the long and short agents, respectively. In this new game (due to its simplicity), it turns out to be more convenient to work with the associated stopping times $\tau^a$ and $\tau^b$. In fact, we will further constraint the agents' controls, so that $\tau^a=\tau^b=:\tau$ and $p^a_{\tau}=p^b_{\tau}=\bar{p}_{\tau}$. The meaning behind these constraints is clear: every agent assumes that the counterparty will execute a market order at exactly the same time as she does, and that these orders are executed at the same price. 
Taking into account the above considerations, we transform (\ref{eq.intro.Jlong.def}) into the objective of a long player:
\begin{equation}\label{eq.intro.tildeJlong.def}
\tilde{J}^{a,(p^b,\bar{p}),(p,\tau)} =
\EE^{\alpha^0} \left[ p_{T^{p,a}} \bone_{\{T^{p,a} < T^b \wedge \tau\}} 
+ 2p^b_{T^b} \bone_{\left\{T^b < T^{p,a} \wedge \tau\right\}} 
+ \bar{p}_{\tau} \bone_{\{T^{p,a}\wedge T^b > \tau\}}
\right],
\end{equation}
where $p^b$, $\bar{p}$ and $p$ are $\RR$-valued $\FF$-adapted processes, $\tau$ is a stopping time with values in $[0,T]$, and
$$
T^{b} = \inf\{t\in[0,T]\,:\,X_t < p^b_t\},
\,\,\,\,\,\,\,T^{p,a} = \inf\{t\in[0,T]\,:\,X_t > p_t\},
\,\,\,\,\,\,\,X_t = M(\{t\}\times\RR).
$$
Similarly, for the short agents,
\begin{equation}\label{eq.intro.tildeJshort.def}
\tilde{J}^{b,(p^a,\bar{p}),(p,\tau)} =
-\EE^{\alpha^0} \left[ p_{T^{p,b}} \bone_{\{T^{p,b} < T^a \wedge \tau\}} 
+ 2p^a_{T^a} \bone_{\left\{T^a < T^{p,b} \wedge \tau\right\}} 
+ \bar{p}_{\tau} \bone_{\{T^{p,b}\wedge T^a > \tau\}}
\right],
\end{equation}
where $p^a$, $\bar{p}$ and $p$ are $\RR$-valued $\FF$-adapted processes, $\tau$ is a stopping time with values in $[0,T]$, and
$$
T^{a} = \inf\{t\in[0,T]\,:\,X_t > p^a_t\},
\,\,\,\,\,\,\,\,\,T^{p,b} = \inf\{t\in[0,T]\,:\,X_t < p_t\}.
$$
Using Lemma \ref{prop:simpObj}, we deduce the following form of the objective functions
\begin{equation}\label{eq.intro.tildeJlong.def.simp}
\tilde{J}^{a,(p^b,\bar{p}),(p,\tau)} =
\EE\Big[
\int_0^{\tau} \exp\left(-\int_0^s c^{\alpha^0}_u(p_u,p^b_u) du\right)g^{a}_s(p_s,p^b_s)ds
\end{equation}
\begin{equation*}
+\exp\left(-\int_0^{\tau} c^{\alpha^0}_u(p_u,p^b_u)du\right)\bar{p}_{\tau}
\Big],
\end{equation*}
where $c^{\alpha}$ is defined in (\ref{eq.F.c.def}) and
\begin{equation}\label{eq.ga.def}
g_t^{a}(x,y) = \lambda_t^{\alpha^0} \left( 2yF^{\alpha^0,-}_t(y) + xF^{\alpha^0,+}_t(x) \right).
\end{equation}
Similarly,
\begin{equation}\label{eq.intro.tildeJshort.def.simp}
\tilde{J}^{b,(p^a,\bar{p}),(p,\tau)} =
-\EE\Big[
\int_0^{\tau} \exp\left(-\int_0^s c^{\beta^0}_u(p^a_u,p_u) du\right)g^{b}_s(p^a_s,p_s)ds
\end{equation}
\begin{equation*}
+\exp\left(-\int_0^{\tau} c^{\beta^0}_u(p^a_u,p_u)du\right) \bar{p}_{\tau}
\Big],
\end{equation*}
where
\begin{equation}\label{eq.gb.def}
g_t^{b}(x,y) = \lambda_t^{\beta^0} \left( yF^{\beta^0,-}_t(y) + 2xF^{\beta^0,+}_t(x) \right).
\end{equation}
To ensure that the above expressions are well defined, and to analyze the equilibrium in a two-player game, we need to make the following assumptions.

\begin{ass}\label{ass:bdd.lambda.f}
There exists a constant $C'>0$, s.t., $\PP$-a.s., $|\lambda^\alpha_t|, |f^{\alpha}_t(x)|\leq C'$, for all $\alpha\in\mathbb{A}\cup\mathbb{B}$, $t\in[0,T]$ and $x\in\RR$.
\end{ass}
We also assume that the possible price jump sizes are bounded.
\begin{ass}
There exists a constant $C_p>0$, s.t., $\PP$-a.s., $\supp (f^\alpha_t)\subseteq [-C_p,C_p]$, for all $\alpha\in\mathbb{A}\cup\mathbb{B}$ and $t\in[0,T]$.
\end{ass}

Denote by $\mathbb{S}^2$ the set of continuous $\FF$-adapted processes $Y$, such that $\sup_{0\le t\le T}|Y_t|\in\mathbb{L}^2$. We say that the terminal execution price $\bar{p}$ is admissible if $\bar{p}\in\mathbb{S}^2$. A control $(p,\tau)$ is admissible if $p$ is $\FF$-progressively measurable, satisfying, $\PP$-a.s., $|p_t|\leq C_p$ for all $t\in[0,T]$, and $\tau$ is $\FF$-stopping time.
Next, we introduce the notions of optimality and equilibrium in the two-player game -- they are analogous to Definitions \ref{def:opt}--\ref{def:equil.def}.

\begin{definition}\label{def:opt.2player}
For a given admissible $(p^b,\bar{p})$, we call an admissible control $(p,\tau)$ \emph{optimal for the long agent} if
$$
\tilde{J}^{a,(p^b,\bar{p}),(p,\tau)} \geq \tilde{J}^{a,(p^b,\bar{p}),(p',\tau')},
$$ 
for any admissible control $(p',\tau')$. Similarly, for a given admissible $(p^a,\bar{p})$, we call an admissible control $(p,\tau)$ \emph{optimal for the short agent} if
$$
\tilde{J}^{b,(p^a,\bar{p}),(p,\tau)} \geq \tilde{J}^{b,(p^a,\bar{p}),(p',\tau')},
$$ 
for any admissible control $(p',\tau')$.
\end{definition}

\begin{definition}\label{def:equil.def.2player}
A combination $(p^a,p^b,\tau,\bar{p})$ is an \emph{equilibrium in the two-player game}, if it is admissible and, given $(p^b,\bar{p})$, the control $(p^a,\tau)$ is optimal for the long agent, while, given $(p^a,\bar{p})$, the control $(p^b,\tau)$ is optimal for the short agent.
\end{definition}
In the next subsection, we characterize the equilibrium strategies via a system of Reflected Backward Stochastic Differential Equations (RBSDEs).

\subsection{Characterizing the equilibria via a system of RBSDEs}

The next assumptions are used to guarantee the uniqueness and regularity of the optimal control of an agent.

\begin{ass}\label{ass:f.cont}
$\PP$-a.s., for any $\alpha\in\mathbb{A}\cup\mathbb{B}$ and $t\in[0,T]$,
$f^\alpha_t(\cdot)$ is continuous in the interior of its support, $f^\alpha_t(0)=0$, and $0<F^{+,\alpha}_t(0)<1$.
\end{ass}

\begin{ass}\label{ass:f.logconcave}
$\PP$-a.s., for any $\alpha\in\mathbb{A}$ and $t\in[0,T]$,
$F^{+,\alpha}_t(\cdot)/f^{\alpha}_t(\cdot)$ is a decreasing function in the interior of $\supp (f^{\alpha}_t)\cap\mathbb{R}_+$, vanishing at the right end of the interval. 
Similarly, $\PP$-a.s., for any $\beta\in\mathbb{B}$ and $t\in[0,T]$, $F^{-,\beta}_t(\cdot)/f^{\beta}_t(\cdot)$ is an increasing function in the interior of $\supp (f^{\beta}_t)\cap\mathbb{R}_-$, vanishing at the left end of the interval.
\end{ass}

\begin{remark}
The monotonicity of $F^{+,\alpha}_t(\cdot)/f^{\alpha}_t(\cdot)$, for example, is implied by the log-concavity of the distribution of positive jumps (similarly, for the negative jumps).
Instead of requiring that $F^{+,\alpha}_t(\cdot)/f^{\alpha}_t(\cdot)$ is decreasing, it suffices to assume that its growth rate is bounded from above by $1-\varepsilon$, for a constant $\varepsilon>0$ independent of $(t,\omega)$.
\end{remark}


To prove the existence of a solution to a system of RBSDEs characterizing the equilibria in a two-player game, we also need to assume that ``the range of beliefs is relatively bounded".

\begin{ass}\label{ass:bdd.range}
There exists a constant $C>0$, s.t., $\PP$-a.s.:
$$
\frac{1}{C}\le \left\vert\frac{\lambda^{\alpha^0}_t}{\lambda^{\beta^0}_t}\right\vert \le C,\quad\frac{1}{C}\le \left\vert\frac{f^{\alpha^0}_t(x)}{f^{\beta^0}_t(x)}\right\vert \le C,
\quad \forall\,x\in\RR\,\,t\in[0,T].
$$
\end{ass}

First we analyze the individual optimization problem of an agent, taking the actions of the counterparty as given. Assume that we are given a process $\bar{p}\in\mathbb{S}^2$ and progressively measurable $(p^a,p^b)$, such that $\PP$-a.s., $|p^a_t|,|p^b_t|\le C_p$, $\forall t\in[0,T]$.
Let us introduce the value functions of the agents:
\begin{equation}\label{eq.intro.tildeVlong.def.simp}
V^a_t = \operatorname{ess\,sup}\limits_{\tau\in\mathcal{T}_{t},\,p} \EE\Big[
\int_t^{\tau} \exp\left(-\int_t^s c^{\alpha^0}_u(p_u,p^b_u) du\right)g^{a}_s(p_s,p^b_s)ds
\end{equation}
\begin{equation*}
+\exp\left(-\int_t^{\tau} c^{\alpha^0}_u(p_u,p^b_u)du\right) \bar{p}_{\tau}
\Big\vert \mathcal{F}_t \Big],
\end{equation*}
\begin{equation}\label{eq.intro.tildeVshort.def.simp}
V^b_t = \operatorname{ess\,inf}\limits_{\tau\in\mathcal{T}_{t},\,p} \EE\Big[
\int_t^{\tau} \exp\left(-\int_t^s c^{\beta^0}_u(p^a_u,p_u) du\right)g^{b}_s(p^a_s,p_s)ds
\end{equation}
\begin{equation*}
+\exp\left(-\int_t^{\tau} c^{\beta^0}_u(p^a_u,p_u)du\right)\bar{p}_{\tau}
\Big\vert \mathcal{F}_t\Big],
\end{equation*}
where $\mathcal{T}_{t}$ is the set of $\FF$-stopping times with values in $[t,T]$, $p$ is any $\FF$-progressively measurable process, with $|p|\leq C_p$, and $c^{\alpha}$, $g^{a}$ and $g^{b}$ are defined in (\ref{eq.F.c.def}), (\ref{eq.ga.def}) and (\ref{eq.gb.def}).
In addition, we introduce the following random functions:
$$
\mathcal{G}^{a,x}_t(y,z)=-c^{\alpha^0}_t(x,z)y + g^{a}_t(x,z),
\,\,\,\,\,\,\,\,x,y\in\RR,
$$
$$
\mathcal{G}^a_t(y,z) = \sup_{x\in\RR}\mathcal{G}^{a,x}_t(y,z) = -c^{\alpha^0}_t\left(P^a_t(y),z\right)y + g^{a}_t\left(P^a_t(y),z\right),\,\,\,\,\,\,\,\,y\in\RR,
$$
where $P^a_t$ provides the optimal price location at the ask side, given in a feedback form:
\begin{equation}\label{eq.Pa.def}
P_t^a(y) = \inf \arg\max_{p\in\RR} (p-y) F_t^{+,\alpha^0}(p),\,\,\,\,\,\,\,\,y\in\RR.
\end{equation}
Similarly, for any admissible $p^a$, we define
\begin{equation}\label{eq.Pb.def}
P_t^b(y) = \sup\arg\max_{p\in\RR} (y-p)F_t^{-,\beta^0}(p),\,\,\,\,\,\,\,\,y\in\RR,
\end{equation}
$$
\mathcal{G}^b_t(z,y) = -c^{\beta^0}_t\left(z,P^b_t(y)\right)y + g^{b}_t\left(z,P^b_t(y)\right),\,\,\,\,\,\,\,\,y\in\RR.
$$
The value of $P^a_t(y)$ can be described as the unique nonnegative solution $p$ of 
\begin{equation}\label{eq.Pa.root.def}
p-y=F^{+,\alpha^0}_t(p)/f^{\alpha^0}_t(p),
\end{equation} 
unless $y$ is too large, in which case $P^a_t(y)$ is the upper boundary of the support of $f^{\alpha^0}_t$, or too small, in which case $P^a_t(y)=0$. Similarly, $P^b_t(y)$ is the unique non-positive solution $p$ of 
\begin{equation}\label{eq.Pb.root.def}
y-p=F^{-,\beta^0}_t(p)/f^{\beta^0}_t(p),
\end{equation}
or the lower boundary of the support of $f^{\beta^0}_t$, if $y$ is too small, or zero, if $y$ is too large.

\begin{lemma}\label{le:Pa.Pb.prop}
Let Assumptions \ref{ass:indep.underPalpha}--\ref{ass:f.logconcave} hold.
Then, the random functions $P^a$ and $P^b$ are progressively measurable and satisfy, $\PP$-a.s., for all $t\in[0,T]$: 
$$
0\leq P^a_t(y)\le C_p,\,\,\,\,\,-C_p\le P^b_t(y)\leq 0,\,\,\,\,\,
P^a_t(y)\ge y,\,\,\,\,\,P^b_t(y)\le y,\,\,\,\,\,\forall y\in\RR,
$$
and, in addition, $P^a_t(\cdot)$ and $P^b_t(\cdot)$ are non-decreasing and 1-Lipschitz.
\end{lemma}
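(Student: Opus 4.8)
\emph{Proof plan.} The plan is to reduce everything to a one--dimensional, quasiconcave--type maximization and then exploit Assumption \ref{ass:f.logconcave}. By the symmetry between the two sides of the book, it suffices to treat $P^a$: the statements for $P^b$ follow after reflecting $p\mapsto -p$ and replacing $F^{+,\alpha^0}$ by $F^{-,\beta^0}$. Fix $(t,\omega)$, write $\phi_y(p):=(p-y)F^{+,\alpha^0}_t(p)$ for the map maximized in \reff{eq.Pa.def}, let $b_t\le C_p$ denote the upper edge of $\supp(f^{\alpha^0}_t)$, and set $R_t(p):=F^{+,\alpha^0}_t(p)/f^{\alpha^0}_t(p)$ on the interior of $\supp(f^{\alpha^0}_t)\cap\RR_+$. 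First I would localize the maximizer. On $(-\infty,0]$ the tail $F^{+,\alpha^0}_t$ is constant and equal to $F^{+,\alpha^0}_t(0)>0$ (Assumption \ref{ass:f.cont}), so $\phi_y$ is strictly increasing there; for $p\ge b_t$ one has $F^{+,\alpha^0}_t(p)=0$, hence $\phi_y(p)=0$; and, since $f^{\alpha^0}_t(0)=0$, the one--sided derivative $\phi_y'(0^+)=F^{+,\alpha^0}_t(0)>0$. Together these confine every maximizer to $[0,b_t]\subseteq[0,C_p]$ and rule out $p=0$, giving $0\le P^a_t(y)\le C_p$.

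Next I would use the first--order condition. On the interior of the positive support $\phi_y'(p)=f^{\alpha^0}_t(p)\,(y-g_t(p))$, where $g_t(p):=p-R_t(p)$, so interior critical points are exactly the solutions of \reff{eq.Pa.root.def}, i.e. $p-y=R_t(p)$. Assumption \ref{ass:f.logconcave} says $R_t$ is decreasing, hence $g_t$ is continuous and strictly increasing, so \reff{eq.Pa.root.def} has at most one interior root, and that root, when it exists, is the unique maximizer; the two boundary regimes (root pushed to the right edge $b_t$ for large $y$, or towards $0^+$ for very negative $y$) are read off from the monotonicity of $g_t$. Because $R_t\ge0$, any interior maximizer satisfies $p\ge y$, which yields $P^a_t(y)\ge y$ on the relevant range $y\le b_t$ (consistently with the bound $P^a_t(y)\le C_p$, so that both inequalities are meaningful precisely for $y\le C_p$).

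For monotonicity and the Lipschitz estimate I would avoid differentiating $P^a$ itself. The mixed partial $\partial^2\phi_y/\partial p\,\partial y=f^{\alpha^0}_t(p)\ge0$ shows $\phi$ has increasing differences in $(p,y)$, so by Topkis' theorem the minimal selection $y\mapsto\inf\arg\max\phi_y=P^a_t(y)$ is non--decreasing. For the $1$--Lipschitz bound, take $y_1<y_2$ and put $p_i:=P^a_t(y_i)$, so $p_1\le p_2$; in the interior regime $g_t(p_i)=y_i$ gives $y_2-y_1=(p_2-p_1)-(R_t(p_2)-R_t(p_1))\ge p_2-p_1$, because $R_t$ is decreasing. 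The same inequality survives the transition to the saturated regime $p_i=b_t$: clamping only shrinks $p_2-p_1$, while $R_t\ge0$ controls the residual boundary term. Hence $0\le P^a_t(y_2)-P^a_t(y_1)\le y_2-y_1$.

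Finally, for progressive measurability I would invoke a measurable maximum theorem. The map $(t,\omega,p)\mapsto\phi_y(p)$ is $\FF$--progressively measurable in $(t,\omega)$ (as $F^{+,\alpha^0}$ and $f^{\alpha^0}$ are) and continuous in $p$ on the compact $[0,C_p]$, so it is a normal integrand; its value and a measurable minimal maximizer exist, and since we have shown the maximizer to be unique in the interior while the boundary value $b_t=\sup\supp(f^{\alpha^0}_t)$ is itself measurable, $P^a_t(y)=\inf\arg\max\phi_y$ is $\FF$--progressively measurable. The main obstacle, and the only genuinely delicate point, is the bookkeeping across the interior/boundary transitions: one must check that the monotonicity and the $1$--Lipschitz estimate pass continuously through the points where the maximizer detaches from, or attaches to, the edges of the support. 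The crucial structural input throughout is the monotonicity of $R_t$ from Assumption \ref{ass:f.logconcave}, which simultaneously yields uniqueness of the critical point, the slope bound $g_t'\ge1$ behind the Lipschitz property, and (through $R_t\ge0$) the ordering $P^a_t(y)\ge y$.
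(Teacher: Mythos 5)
Your proof is correct and follows essentially the same route as the paper's (which is stated in one line): everything is reduced to the first-order-condition representation \reff{eq.Pa.root.def}--\reff{eq.Pb.root.def} together with the monotonicity and nonnegativity of $F^{+,\alpha^0}_t/f^{\alpha^0}_t$ from Assumption \ref{ass:f.logconcave}; your Topkis/increasing-differences step and the explicit slope bound $g_t'\ge 1$ are just careful spellings-out of that same idea. Your restriction of the inequality $P^a_t(y)\ge y$ to $y$ below the upper edge of the support is in fact a sharper reading than the paper's literal ``for all $y\in\RR$'' (which cannot hold once $y>C_p$, since $P^a_t\le C_p$), so it is not a gap.
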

\begin{proof}
The progressive measurability property and the above inequalities follow directly from Assumptions \ref{ass:bdd.lambda.f}--\ref{ass:f.cont}. The monotonicity and 1-Lipschitz property follow from Assumption \ref{ass:f.logconcave} and the representations (\ref{eq.Pa.root.def})--(\ref{eq.Pb.root.def}).
\qed
\end{proof}

The above lemma, along with Assumptions \ref{ass:bdd.lambda.f}--\ref{ass:f.cont}, implies that, for any admissible $(p,p^b,\bar{p})$, $\mathcal{G}^a_t(0,p^b_t)$ and $\mathcal{G}^{a,p_t}_t(0,p^b_t)$ are bounded processes, and that $\mathcal{G}^a_t(y,p^b_t)$ and $\mathcal{G}^{a,p_t}_t(y,p^b_t)$ are Lipschitz in $y$, uniformly over a.e. $(t,\omega)$.
This allows us to use Proposition 7.1 from \cite{ELK}, to show that, for any admissible $(p,p^b,\bar{p})$, the process $Y$, which is a continuous modification of
$$
Y_t:=\hat{J}^{a,(p^b,\bar{p}),p}_t=\operatorname{ess\,sup}\limits_{\tau\in\mathcal{T}_{t}} \EE\Big[\int_t^{\tau} \exp\left(-\int_t^s c^{\alpha^0}_u(p_u,p^b_u) du\right) g^{a}_s(p_s,p^b_s)ds
$$
$$
+\exp\left(-\int_t^\tau c^{\alpha^0}_u(p_u,p^b_u)du\right)\bar{p}_\tau\Big\vert \mathcal{F}_t\Big],
$$
is the unique $\mathbb{S}^2$ solution of the affine RBSDE,
\begin{eqnarray}
-dY_t=\mathcal{G}^{a,p_t}_t(Y_t,p^b_t)dt - Z_t\text{d}W_t + \text{d}K_t\quad0\le t\le T
\label{eq.BSDE.2player.affine.first}\\
Y_t\ge \bar{p}_t \quad 0\le t \le T,\quad \int_0^T (Y_t-\bar{p}_t)\text{d}K_t=0 \\
Y_T=\bar{p}_T,
\label{eq.BSDE.2player.affine.last}
\end{eqnarray}
where $Z$ is a progressively measurable square-integrable (multidimensional) process, $K\in\mathbb{S}^2$ is increasing and satisfies $K_0=0$.
Similarly, the existence results from \cite{ELK} imply that 
\begin{eqnarray}
-dY_t = \mathcal{G}^a_t(Y_t,p^b_t)dt - Z_t\text{d}W_t + \text{d}K_t\quad0\le t\le T
\label{eq.BSDE.2player.long.first}\\
Y_t\ge \bar{p}_t \quad 0\le t \le T\quad \int_0^T (Y_t-\bar{p}_t)\text{d}K_t=0 \\
Y_T=\bar{p}_T
\label{eq.BSDE.2player.long.last}
\end{eqnarray}
has a unique solution $(Y_t,Z_t,K_t)$.
Then, Theorem 7.2 in \cite{ELK} implies that $Y$ is a continuous modification of $V^a$, and that $p^a_t=P^a_t(Y_t)$ and $\tau^a=\inf\{s\ge 0\colon Y_s=\bar{p}_s\}$ form an optimal control for the long agent.
Similarly, for a given admissible $(p^a,\bar{p})$, there exists a unique solution $(Y_t,Z_t,K_t)$ to
\begin{eqnarray}
-dY_t = \mathcal{G}^b_t(p^a_t,Y_t)dt - Z_t\text{d}W_t - \text{d}K_t\quad0\le t\le T
\label{eq.BSDE.2player.short.first}\\
Y_t\le\bar{p}_t \quad 0\le t \le T,\quad \int_0^T (\bar{p}_t-Y_t)\text{d}K_t=0 \\
Y_T=\bar{p}_T,
\label{eq.BSDE.2player.short.last}
\end{eqnarray}
$Y$ is a continuous modification of $V^b$, and $p^b_t=P^b_t(Y_t)$ and $\tau^b=\inf\{s\ge 0\colon Y_s=\bar{p}_s\}$ form an optimal control for the short agent. 
It turns out that, because the optimal stopping time has to be the same for both agents in equilibrium, we can formulate a system of equations for $V^a$ and $V^b$ without $\bar{p}$.
In order to state this result formally, we need to introduce the following random functions
\begin{equation}\label{eq.Ga.def}
\tilde{\mathcal{G}}^a_t(y,z) = \mathcal{G}^a_t(y,P^b_t(z)) = -c^{\alpha^0}_t\left(P^a_t(y),P^b_t(z)\right)y + g^{a}_t\left(P^a_t(y),P^b_t(z)\right),\,\,\,\,\,\,\,y,z\in\RR,
\end{equation}
\begin{equation}\label{eq.Gb.def}
\tilde{\mathcal{G}}^b_t(y,z) = \mathcal{G}^b_t(P^a_t(y),z) = -c^{\beta^0}_t\left(P^a_t(y),P^b_t(z)\right)z + g^{b}_t\left(P^a_t(y),P^b_t(z)\right),\,\,\,\,\,\,\,y,z\in\RR,
\end{equation}
where $c^{\alpha}$, $g^{a}$ and $g^{b}$ are defined, respectively, in (\ref{eq.F.c.def}), (\ref{eq.ga.def}) and (\ref{eq.gb.def}), and $P^a$ and $P^b$ are given by (\ref{eq.Pa.def}) and (\ref{eq.Pb.def}).

\begin{lemma}\label{sab}
Let Assumptions \ref{ass:indep.underPalpha}--\ref{ass:bdd.range} hold.
For any equilibrium $(p^a,p^b,\tau,\bar{p})$ in the two-player game (in the sense of Definition \ref{def:equil.def.2player}), the value functions of the agents, $V^a,V^b\in\mathbb{S}^2$, satisfy 
\begin{equation}\label{eq.RBSDE.Va.Vb}
\left\{
\begin{array}{l}
-\text{d}V^a_t = \tilde{\mathcal{G}}^a_t(V^a_t,V^b_t)dt - Z^a_t dW_t + dK^a_t  \phantom{\frac{1}{\frac{1}{2}}}\\
-\text{d}V^b_t = \tilde{\mathcal{G}}^b_t(V^a_t,V^b_t)dt - Z^b_tdW_t - dK^b_t  \phantom{\frac{1}{\frac{1}{2}}}\\
V^a_t\ge V^b_t\quad\forall t\in[0,T], \quad
\int_0^T (V^a_t - V^b_t) d(K^a_t+K^b_t)=0 \phantom{\frac{1}{\frac{1}{2}}}\\
V^a_T = V^b_T,  \phantom{\frac{1}{\frac{1}{2}}}
\end{array}
\right.
\end{equation}
with some increasing processes $K^a,K^b\in\mathbb{S}^2$, starting at zero, and with progressively measurable square-integrable $(Z^a,Z^b)$. Moreover, $(\hat{p}^a,\hat{p}^b,\hat{\tau},\bar{p})$ also form an equilibrium, with the same value functions, where: $\hat{p}^a_t=P^a_t(V^a_t)$, $\hat{p}^b_t=P^b_t(V^b_t)$ and $\hat{\tau}=\inf\{s\ge 0\colon V^a_s = V^b_s\}$. 
Conversely, given a solution to (\ref{eq.RBSDE.Va.Vb}), we can define the optimal controls $(\hat{p}^a,\hat{p}^b,\hat{\tau})$ as above, and choose $\bar{p}=(1-\eta)V^a+\eta V^b$, with any progressively measurable process $\eta$ taking values in $(0,1)$, to obtain an equilibrium $(\hat{p}^a,\hat{p}^b,\hat{\tau},\bar{p})$.
\end{lemma}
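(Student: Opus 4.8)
The plan is to prove the two implications separately, exploiting the individual-optimization characterization already obtained through \cite{ELK}: for fixed admissible $(p^b,\bar p)$ the long agent's value $V^a$ is the unique $\mathbb{S}^2$ solution of the reflected BSDE (\ref{eq.BSDE.2player.long.first})--(\ref{eq.BSDE.2player.long.last}) with generator $\mathcal{G}^a_t(\cdot,p^b_t)$, barrier $\bar p$, and optimal controls $P^a(V^a)$ and $\inf\{s:V^a_s=\bar p_s\}$; symmetrically for $V^b$. I would treat the converse first, as it is the cleaner direction. Given a solution $(V^a,V^b,Z^a,Z^b,K^a,K^b)$ of (\ref{eq.RBSDE.Va.Vb}) and $\bar p=(1-\eta)V^a+\eta V^b$ with $\eta$ valued in $(0,1)$, I would verify that $V^a$ solves the long agent's individual RBSDE against the barrier $\bar p$: the generator matches because $\tilde{\mathcal{G}}^a_t(V^a_t,V^b_t)=\mathcal{G}^a_t(V^a_t,P^b_t(V^b_t))=\mathcal{G}^a_t(V^a_t,\hat p^b_t)$; the inequality $V^a\ge\bar p$ and the Skorokhod condition $\int_0^T(V^a_t-\bar p_t)\,dK^a_t=0$ follow from $V^a-\bar p=\eta(V^a-V^b)$ together with the system's condition $\int_0^T(V^a_t-V^b_t)\,d(K^a_t+K^b_t)=0$ (the two nonnegative terms $\int(V^a-V^b)\,dK^a$ and $\int(V^a-V^b)\,dK^b$ must each vanish, so the nonnegative measure $(V^a-V^b)\,dK^a$ is null and integrating the bounded factor $\eta$ against it preserves this); and $V^a_T=\bar p_T$ follows from $V^a_T=V^b_T$. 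By the uniqueness in \cite{ELK}, $V^a$ is then the long agent's value function given $(\hat p^b,\bar p)$, with optimal stopping time $\inf\{s:V^a_s=\bar p_s\}=\hat\tau$, since $\eta\in(0,1)$ makes $V^a=\bar p$ equivalent to $V^a=V^b$. The symmetric argument for the short agent completes the verification that $(\hat p^a,\hat p^b,\hat\tau,\bar p)$ is an equilibrium.

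For the forward direction, starting from an equilibrium $(p^a,p^b,\tau,\bar p)$, I would first use optimality together with the uniqueness of the maximizers $P^a,P^b$ (guaranteed by Assumption \ref{ass:f.logconcave} via the representations (\ref{eq.Pa.root.def})--(\ref{eq.Pb.root.def})) to conclude $p^a_t=P^a_t(V^a_t)$ and $p^b_t=P^b_t(V^b_t)$ on the relevant horizon; substituting these feedback prices into the individual generators $\mathcal{G}^a_t(V^a_t,p^b_t)$ and $\mathcal{G}^b_t(p^a_t,V^b_t)$ turns them into $\tilde{\mathcal{G}}^a_t(V^a_t,V^b_t)$ and $\tilde{\mathcal{G}}^b_t(V^a_t,V^b_t)$, giving the two drift equations of (\ref{eq.RBSDE.Va.Vb}). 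Next, the equilibrium constraint that both players stop at the same time forces $\tau=\tau^a=\tau^b$, so continuity and $V^b\le\bar p\le V^a$ yield $V^a_\tau=\bar p_\tau=V^b_\tau$ while $V^a_t>\bar p_t>V^b_t$ for $t<\tau$; hence $\hat\tau=\inf\{s:V^a_s=V^b_s\}=\tau$ and $V^a_T=\bar p_T=V^b_T$.

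The main obstacle is the combined Skorokhod condition $\int_0^T(V^a_t-V^b_t)\,d(K^a_t+K^b_t)=0$, since a priori the individual reflections $K^a,K^b$ push $V^a,V^b$ only against the common barrier $\bar p$, not directly against each other. The strategy is to show the two contact sets coincide on the relevant horizon: because $V^a>\bar p$ and $V^b<\bar p$ hold strictly on $[0,\tau)$, neither $K^a$ nor $K^b$ charges $[0,\tau)$, and since the game terminates at $\tau$, the equilibrium only constrains the processes on $[0,\tau]$, where $V^a-V^b=0$ precisely at $\tau$. Writing $\int(V^a-V^b)\,dK^a=\int(\bar p-V^b)\,dK^a$ (the term $\int(V^a-\bar p)\,dK^a$ vanishing by the long agent's individual Skorokhod condition), I would argue that $K^a$ charges only the common stopping set on which $V^b=\bar p$, and symmetrically for $K^b$, so both mixed integrals vanish. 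Verifying that the individual reflection processes increase only on $\{V^a=V^b\}$, rather than on the larger individual contact sets with $\bar p$, is the delicate point, and it is precisely here that the coincidence of the optimal stopping times in equilibrium is essential. Once the system is established, applying the already-proved converse to $(V^a,V^b)$ delivers the ``moreover'' claim that $(\hat p^a,\hat p^b,\hat\tau,\bar p)$ is again an equilibrium with the same value functions.
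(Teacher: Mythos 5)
Your converse direction is essentially the paper's argument: verify that $(V^a,Z^a,K^a)$ solves the individual RBSDE with generator evaluated at $\hat p^b$ and barrier $\bar p=(1-\eta)V^a+\eta V^b$, using $V^a-\bar p=\eta(V^a-V^b)$ and the splitting of the nonnegative measure $(V^a-V^b)\,d(K^a+K^b)$ into two null pieces; that part is fine.

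The forward direction, however, contains a genuine gap at exactly the point you flag as ``delicate'': you never actually prove that $K^a$ charges only $\{V^a=V^b\}$ rather than the a priori larger contact set $\{V^a=\bar p\}$. Your observation that neither $K^a$ nor $K^b$ charges $[0,\tau)$ only controls the pre-$\tau$ part, and the appeal to ``the equilibrium only constrains the processes on $[0,\tau]$'' does not close the argument, because the value functions (\ref{eq.intro.tildeVlong.def.simp})--(\ref{eq.intro.tildeVshort.def.simp}) and the system (\ref{eq.RBSDE.Va.Vb}) live on all of $[0,T]$, so the Skorokhod condition must be verified there. The paper avoids the contact-set analysis entirely by a barrier-replacement argument: since $V^b\le\bar p$, replacing $\bar p$ by $V^b$ in the long agent's objective can only decrease the value, while the optimality of $\tau$ together with $V^b_\tau=\bar p_\tau=V^a_\tau$ shows the value is in fact unchanged; hence $V^a$ is \emph{also} the value function of the problem with barrier $V^b$, and by uniqueness of the solution to (\ref{eq.BSDE.2player.long.first})--(\ref{eq.BSDE.2player.long.last}) the very same triple $(V^a,Z^a,K^a)$ solves the RBSDE reflected on $V^b$, which delivers $\int_0^T(V^a_t-V^b_t)\,dK^a_t=0$ directly (and symmetrically for $K^b$). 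If you want to salvage your route, you would need this identification anyway, so you may as well adopt it. Two smaller inaccuracies: (i) an optimal $\tau$ need not be the first hitting time of $\{V^a=\bar p\}$, so you cannot assert $V^a_t>\bar p_t>V^b_t$ strictly on $[0,\tau)$, nor that $\tau=\hat\tau$ --- the lemma only claims $\hat\tau$ yields \emph{another} equilibrium; what you may use is that $K^a$ and $K^b$ do not charge $[0,\tau)$. (ii) The maximizer defining $\hat p^a_t$ is not unique when $\lambda^{\alpha^0}_t=0$ or $V^a_t\ge\sup\supp(f^{\alpha^0}_t)$; the paper uses Assumption \ref{ass:bdd.range} to show that on this degenerate set the substitution $p^a\mapsto\hat p^a$ leaves $\mathcal{G}^b_t(\cdot,V^b_t)$ unchanged, a step your ``uniqueness of maximizers'' claim skips.
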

\begin{proof}
Consider an equilibrium $(p^a,p^b,\tau,\bar{p})$. As discussed earlier, the standard results on BSDEs (cf. \cite{ELK}) imply that $(V^a,Z^a,K^a)$ solves (\ref{eq.BSDE.2player.long.first})--(\ref{eq.BSDE.2player.long.last}), and $(V^b,Z^b,K^b)$ solves (\ref{eq.BSDE.2player.short.first})--(\ref{eq.BSDE.2player.short.last}) (both systems are considered with the same $\bar{p}$). 
It follows from the optimality of $\tau$, via the standard theory, that $V^b_{\tau}=\bar{p}_{\tau}=V^a_{\tau}$. 
Consider the long agent. It is clear that the objective of the long agent cannot increase if we replace $\bar{p}$ by $V^b$ in its definition (cf. (\ref{eq.intro.tildeJlong.def.simp})). On the other hand, $\tau$ is optimal and $\bar{p}_{\tau}=V^b_{\tau}$, hence, the value function $V^a$ remains the same if we replace $\bar{p}$ by $V^b$ in its definition (cf. (\ref{eq.intro.tildeVlong.def.simp})). Therefore,  $(V^a,Z^a,K^a)$ solves  (\ref{eq.BSDE.2player.long.first})--(\ref{eq.BSDE.2player.long.last}) with $\bar{p}$ replaced by $V^b$.
Similar argument applies to the short agent, and yields that $(V^b,Z^b,K^b)$ solves  (\ref{eq.BSDE.2player.short.first})--(\ref{eq.BSDE.2player.short.last}) with $\bar{p}$ replaced by $V^a$.
Next, using the optimality of $p^a$ and the comparison principle for the BSDE (\ref{eq.BSDE.2player.affine.first}), we easily deduce that, for a.e. $(t,\omega)$, $p^a_t$ coincides with $\hat{p}^a_t=P^a_t(V^a_t)$ whenever $\lambda^{\alpha^0}_t>0$ and $V^a_t<\sup \text{supp} (f^{\alpha^0}_t)$. On the other hand, Assumption \ref{ass:bdd.range} implies that, if $\lambda^{\alpha^0}_t=0$ or $V^a_t\geq\sup \text{supp}( f^{\alpha^0}_t)$, then $\lambda^{\beta^0}_t=0$ or $V^a_t\geq\sup \text{supp} f^{\beta^0}_t$, and, in turn, $\mathcal{G}^b_t\left(p^a_t,V^b_t\right) = \mathcal{G}^b_t\left(\hat{p}^a_t,V^b_t\right)$. Thus, we conclude that $V^b$ satisfies (\ref{eq.BSDE.2player.short.first})--(\ref{eq.BSDE.2player.short.last}) with $p^a$ replaced by $\hat{p}^a$. Similarly, we conclude that $V^a$ satisfies (\ref{eq.BSDE.2player.long.first})--(\ref{eq.BSDE.2player.long.last}) with $p^b$ replaced by $\hat{p}^b$.
Thus, $(V^a,V^b)$ satisfy (\ref{eq.RBSDE.Va.Vb}).

Next, consider a solution to (\ref{eq.RBSDE.Va.Vb}). Choosing $\bar{p}$ as shown in the statement of the lemma, we conclude that $(V^a,Z^a,K^a)$ solves (\ref{eq.BSDE.2player.long.first})--(\ref{eq.BSDE.2player.long.last}), with $p^b$ replaced by $\hat{p}^b$. Then, the standard results (cf. \cite{ELK}) imply that, given $\hat{p}^b$ and $\bar{p}$, $V^a$ is the value function of the long agent, and her optimal control is given by $\hat{p}^a$ and
$$
\inf\{s\ge 0\colon V^a_s\leq\bar{p}_s\}=\inf\{s\ge 0\colon V^a_s = V^b_s\}=\hat{\tau}.
$$ 
Similar argument applies to the short agent, completing the proof.
\qed
\end{proof}

\subsection{Existence of a solution}
\label{subse:RBSDE.exist}

In this subsection, we address the question of existence of a solution to the RBSDE (\ref{eq.RBSDE.Va.Vb}). 
The main difficulty in analyzing (\ref{eq.RBSDE.Va.Vb}) is the non-standard form of reflection: the components of the solution reflect against each other, as opposed to reflecting against a given boundary.
Related equations have been analyzed in the literature on BSDEs arising in the problem of optimal switching: see, e.g., \cite{oblique}, \cite{oblique2}, and the references therein. However, the exact form of reflection in (\ref{eq.RBSDE.Va.Vb}) is different, and its generator does not possess the desired monotonicity properties, making it impossible to prove the existence of a solution to (\ref{eq.RBSDE.Va.Vb}) using the methods developed in optimal switching literature.
Before we analyze the existence, it is convenient to consider the question of uniqueness. Note that there are two reflecting components of the solution, but only one minimality constraint, which indicates the potential lack of uniqueness of a solution to (\ref{eq.RBSDE.Va.Vb}).
The possibility of an arbitrary choice of $\eta$ in Lemma \ref{sab} leads to the same conclusion. Indeed, a different choice of $\eta$ produces a different $\bar{p}$, which results in a different pair of value functions $(V^a,V^b)$, which, nevertheless, have to solve the same system (\ref{eq.RBSDE.Va.Vb}). This heuristic observation turns out to be correct and, in fact, allows us to construct a solution to (\ref{eq.RBSDE.Va.Vb}).
Consider a solution $(V^a,V^b,K^a,K^b,Z^a,Z^b)$ to (\ref{eq.RBSDE.Va.Vb}). 
Introducing $K_t=K^a_t+K^b_t$, we notice that there must exist a process $\eta$, with values in $[0,1]$, such that $\text{d}K^a_t=\eta_t\text{d}K_t$, $\text{d}K^b_t=(1-\eta_t)\text{d}K_t$. 
Then, we introduce the new variables $(\tilde{Y}^1,\tilde{Y}^2)$, s.t. $\tilde{Y}^1_t = V^a_t - V^b_t$ and $d\tilde{Y}^2_t=(1-\eta_t)dV^a_t + \eta_t dV^b_t$, to replace $(V^a,V^b)$. 
Assuming that the change of variables can be inverted, one obtains a system of RBSDEs for $(\tilde{Y}^1,\tilde{Y}^2)$, in which only the first component reflects against zero, and $\tilde{Y}^1_T=0$. 
Conversely, we can start by prescribing $\eta$ and a terminal condition for $\tilde{Y}^2$, solving the associated system of RBSDEs for $(\tilde{Y}^1,\tilde{Y}^2)$, and, then, recover $(V^a,V^b)$ from $(\tilde{Y}^1,\tilde{Y}^2,\eta)$ via the above formulas. Naturally, the resulting $(V^a,V^b)$ are expected to satisfy (\ref{eq.RBSDE.Va.Vb}).
This method seems to describe all solutions to (\ref{eq.RBSDE.Va.Vb}), however, herein, we are only interested in constructing a particular one.\footnote{It is an interesting topic for future research, to describe rigorously all solutions of (\ref{eq.RBSDE.Va.Vb}).} 
Hence, we choose $\eta\equiv1/2$ and $\tilde{Y}^2_T=0$, to obtain $Y^1=\tilde{Y}^1=V^a - V^b$ and $Y^2=2\tilde{Y}^2=V^a + V^b$, which are expected to satisfy:
\begin{equation}\label{eq.Y1.Y2.RBSDE}
\begin{cases}
-\text{d}Y^1_t = \mathcal{G}^1_t(Y^1_t,Y^2_t)\text{d}t-Z^1_t\text{d}W_t+\text{d}K_t\phantom{\frac{1}{\frac{1}{2}}} \\
Y^1_t\ge0,\,\,\,\,\,\,\,\,\, \int_0^T Y^1_t\text{d}K_t=0,\quad Y^1_T=0\phantom{\frac{1}{\frac{1}{2}}}\\
-\text{d}Y^2_t = \mathcal{G}^2_t(Y^1_t,Y^2_t)\text{d}t-Z^2_t\text{d}W_t,\quad Y^2_T=0 \phantom{\frac{1}{\frac{1}{2}}}
\end{cases}
\end{equation}
where $Y^1,Y^2\in\mathbb{S}^2$, the processes $Z^1,Z^2$ are progressively measurable and square-integrable, $K\in\mathbb{S}^2$ is increasing and satisfies $K_0=0$. In addition, we denote

\begin{equation*}
\mathcal{G}^1_t(y^1,y^2) = \tilde{\mathcal{G}}^a_t\left((y^1+y^2)/2,(y^2-y^1)/2\right) 
- \tilde{\mathcal{G}}^b_t\left((y^1+y^2)/2,(y^2-y^1)/2\right),
\end{equation*}
\begin{equation*}
\mathcal{G}^2_t(y^1,y^2) = \tilde{\mathcal{G}}^a_t\left((y^1+y^2)/2,(y^2-y^1)/2\right) 
+ \tilde{\mathcal{G}}^b_t\left((y^1+y^2)/2,(y^2-y^1)/2\right)
\end{equation*}
where $\tilde{\mathcal{G}}^a$ and $\tilde{\mathcal{G}}^b$ are defined in (\ref{eq.Ga.def}) and (\ref{eq.Gb.def}).
The following lemma formalizes the connection between (\ref{eq.Y1.Y2.RBSDE}) and (\ref{eq.RBSDE.Va.Vb}), and its proof follows easily by a direct verification.

\begin{lemma}\label{le:connect.RBSDEs}
Let $(Y^1,Y^2,Z^1,Z^2,K)$ be a solution to (\ref{eq.Y1.Y2.RBSDE}). Then 
$$
V^a=\frac{1}{2}Y^1+\frac{1}{2}Y^2,\,\, V^b=\frac{1}{2}Y^2-\frac{1}{2}Y^1,\,\, 
Z^a=\frac{1}{2}Z^1+\frac{1}{2}Z^2,\,\, Z^b = \frac{1}{2}Z^2-\frac{1}{2}Z^1,\,\,
K^a=\frac{1}{2}K, K^b=\frac{1}{2}K
$$
form a solution to (\ref{eq.RBSDE.Va.Vb}).
\end{lemma}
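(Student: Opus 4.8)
The plan is to treat the statement as a purely algebraic verification: substitute the proposed expressions into each line of (\ref{eq.RBSDE.Va.Vb}) and check that, whenever $(Y^1,Y^2,Z^1,Z^2,K)$ solves (\ref{eq.Y1.Y2.RBSDE}), every line holds. First I would record the two linear identities encoded in the definitions of $\mathcal{G}^1$ and $\mathcal{G}^2$, namely
\begin{equation*}
\mathcal{G}^1_t(y^1,y^2) + \mathcal{G}^2_t(y^1,y^2) = 2\tilde{\mathcal{G}}^a_t\big((y^1+y^2)/2,(y^2-y^1)/2\big),
\end{equation*}
\begin{equation*}
\mathcal{G}^1_t(y^1,y^2) - \mathcal{G}^2_t(y^1,y^2) = -2\tilde{\mathcal{G}}^b_t\big((y^1+y^2)/2,(y^2-y^1)/2\big),
\end{equation*}
together with the observation that, under the change of variables $V^a=(Y^1+Y^2)/2$, $V^b=(Y^2-Y^1)/2$, the arguments $\big((Y^1+Y^2)/2,(Y^2-Y^1)/2\big)$ are exactly $(V^a,V^b)$. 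This is the one place deserving a moment of care, to ensure the generators line up.

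Then I would compute the dynamics of $V^a$ and $V^b$ by taking the corresponding linear combinations of the equations for $Y^1$ and $Y^2$. Forming $\tfrac12\,dY^1 + \tfrac12\,dY^2$ and using the first identity yields $-dV^a = \tilde{\mathcal{G}}^a_t(V^a_t,V^b_t)\,dt - \tfrac12(Z^1_t+Z^2_t)\,dW_t + \tfrac12\,dK_t$, which matches the first line of (\ref{eq.RBSDE.Va.Vb}) upon reading off $Z^a=\tfrac12(Z^1+Z^2)$ and $K^a=\tfrac12 K$. Forming $\tfrac12\,dY^2 - \tfrac12\,dY^1 = dV^b$ and using the second identity yields $-dV^b = \tilde{\mathcal{G}}^b_t(V^a_t,V^b_t)\,dt - \tfrac12(Z^2_t-Z^1_t)\,dW_t - \tfrac12\,dK_t$, matching the second line with $Z^b=\tfrac12(Z^2-Z^1)$ and $K^b=\tfrac12 K$. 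Note that the single reflection term $dK$ of (\ref{eq.Y1.Y2.RBSDE}) distributes with the correct signs, $+dK^a$ pushing $V^a$ down and $-dK^b$ pushing $V^b$ up, which is exactly the convention of (\ref{eq.RBSDE.Va.Vb}).

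Finally I would verify the constraints and regularity. The equality $V^a_t - V^b_t = Y^1_t \ge 0$ gives $V^a_t\ge V^b_t$; the terminal conditions $Y^1_T=Y^2_T=0$ give $V^a_T=V^b_T=0$, in particular $V^a_T=V^b_T$; and, since $K^a+K^b=K$, the minimality condition reads $\int_0^T (V^a_t-V^b_t)\,d(K^a_t+K^b_t) = \int_0^T Y^1_t\,dK_t = 0$, which is precisely the Skorokhod condition of (\ref{eq.Y1.Y2.RBSDE}). The required integrability is inherited mechanically: $V^a,V^b\in\mathbb{S}^2$ and $Z^a,Z^b$ are square-integrable as linear combinations of $Y^1,Y^2$ and $Z^1,Z^2$, while $K^a=K^b=\tfrac12 K$ are increasing and start at zero. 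I do not anticipate a genuine obstacle; the only point requiring attention is the bookkeeping of signs when splitting the single increasing process $K$ into $K^a$ and $K^b$, which the computation above resolves.
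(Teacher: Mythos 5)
Your proof is correct and is precisely the direct verification that the paper invokes (the paper states only that the lemma ``follows easily by a direct verification'' and omits the details). The linear identities for $\mathcal{G}^1\pm\mathcal{G}^2$, the sign bookkeeping for the split of $K$, and the check of the Skorokhod condition are all exactly what is needed.
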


Note that the specific choice of $\eta\equiv1/2$ corresponds to choosing an angle of reflection of the process $(V^a,V^b)$ against the straight line ``$V^a=V^b$" in $\RR^2$. The specific angle chosen to obtain (\ref{eq.Y1.Y2.RBSDE}) implies orthogonal reflection against this line, and (\ref{eq.Y1.Y2.RBSDE}) arises after a simple rotation, which turns this line into a horizontal axis. The systems of RBSDEs with orthogonal reflection in a general convex domain have been analyzed in \cite{OrthRBSDE}. However, the latter results are not applicable in the present case, as the generator of (\ref{eq.Y1.Y2.RBSDE}) lacks the global Lipschitz property.
Indeed, the generator can be written as
\begin{equation}\label{eq.G1.def}
\mathcal{G}^1_t(y^1,y^2) = -c^1_t(y^1,y^2) y^1 + c^2_t(y^1,y^2) y^2 + g^1_t(y^1,y^2),
\end{equation}
\begin{equation}\label{eq.G2.def}
\mathcal{G}^2_t(y^1,y^2) = -c^2_t(y^1,y^2) y^1 - c^1_t(y^1,y^2) y^2 + g^2_t(y^1,y^2),
\end{equation}
where
$$
c^1_t(y^1,y^2) = \frac{1}{2} c^{\alpha^0}_t\left(P^a_t\left((y^1+y^2)/2\right),P^b_t\left((y^2-y^1)/2\right)\right) 
+ \frac{1}{2} c^{\beta^0}_t\left(P^a_t\left((y^1+y^2)/2\right),P^b_t\left((y^2-y^1)/2\right)\right)
$$
$$
c^2_t(y^1,y^2) = \frac{1}{2} c^{\beta^0}_t\left(P^a_t\left((y^1+y^2)/2\right),P^b_t\left((y^2-y^1)/2\right)\right)
- \frac{1}{2} c^{\alpha^0}_t\left(P^a_t\left((y^1+y^2)/2\right),P^b_t\left((y^2-y^1)/2\right)\right),
$$
$$
g^1_t(y^1,y^2) = g_t^{a}\left(P^a_t\left((y^1+y^2)/2\right),P^b\left((y^2-y^1)/2\right)\right) 
- g_t^{b}\left(P^a\left((y^1+y^2)/2\right),P^b\left((y^2-y^1)/2\right)\right),
$$
$$
g^2_t(y^1,y^2) = g_t^{a}\left(P^a_t\left((y^1+y^2)/2\right),P^b\left((y^2-y^1)/2\right)\right) 
+ g_t^{b}\left(P^a\left((y^1+y^2)/2\right),P^b\left((y^2-y^1)/2\right)\right),
$$
with $c^{\alpha}$, $P^a$, $P^b$, $g^{a}$ and $g^{b}$ defined in (\ref{eq.F.c.def}), (\ref{eq.Pa.def}), (\ref{eq.Pb.def}), (\ref{eq.ga.def}) and (\ref{eq.gb.def}). 
It is easy to see that every $c^i_t(\cdot,\cdot)$ and $g^i_t(\cdot,\cdot)$ is bounded and globally Lipschitz, uniformly over a.e. $(t,\omega)$. However, due to the presence of the multipliers $y^1$ and $y^2$, $\mathcal{G}^i_t(\cdot,\cdot)$ is unbounded and does not possess the global Lipschitz property.
In addition, the existence and uniqueness result established below (cf. Proposition \ref{prop:RBSDE.exist}) holds for any choice of constant $\eta$ in $(0,1)$, which, in turn, implies the ``oblique" (i.e. non-orthogonal) reflection of $(V^a,V^b)$ against the boundary, and brings the resulting system outside the scope of \cite{OrthRBSDE}.

Recall that the existence result for BSDEs with linear growth, and without global Lipschitz property, has only been established in a one-dimensional case, whereas the present equation is multidimensional. 
Nevertheless, we can make use of the fact that the generator of (\ref{eq.Y1.Y2.RBSDE}) has the ``correct" asymptotic behavior, to prove the existence of a solution. In particular, we exploit the fact that, due to the assumptions made earlier in this section, whenever $\|(Y^1_t,Y^2_t)\|$ becomes large, the generator $(\mathcal{G}^1_t,\mathcal{G}^2_t)$ pushes $(Y^1_t,Y^2_t)$ in the direction in which the largest $|Y^i_t|$ decreases.

\begin{proposition}\label{prop:RBSDE.exist}
Let Assumptions \ref{ass:bdd.lambda.f}--\ref{ass:bdd.range} hold.
Then, there exists a solution to (\ref{eq.Y1.Y2.RBSDE}), s.t. its components $Y^1$ and $Y^2$ are absolutely bounded by a constant. Such a solution is unique.
\end{proposition}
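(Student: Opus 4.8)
The system we need to solve is a 2D RBSDE where only the first component reflects against zero, with terminal conditions $Y^1_T = Y^2_T = 0$. The key structural fact, as emphasized in the text, is that the generators $\mathcal{G}^1, \mathcal{G}^2$ have bounded, globally Lipschitz coefficients $c^i, g^i$, but the presence of the multipliers $y^1, y^2$ destroys the global Lipschitz property. The crucial idea must be to obtain an a priori $L^\infty$ bound on $(Y^1, Y^2)$ and then reduce to a globally Lipschitz problem by truncation.

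**The plan.**

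First I would establish an a priori bound. The sign structure is the heart of the matter: in $\mathcal{G}^1 = -c^1 y^1 + c^2 y^2 + g^1$, the coefficient $c^1$ is a convex combination of the nonnegative filling-rate intensities $c^{\alpha^0}, c^{\beta^0}$, so $c^1 \geq 0$; moreover $c^1 \geq |c^2|$ since $c^1 = \frac12(c^{\alpha^0}+c^{\beta^0})$ and $c^2 = \frac12(c^{\beta^0}-c^{\alpha^0})$, giving $c^1 \pm c^2 = c^{\beta^0}$ or $c^{\alpha^0}$, both $\geq 0$. Thus whenever $|Y^1_t|$ or $|Y^2_t|$ is the larger in magnitude, the linear part of the generator acts as a restoring force pushing it back toward zero, while $g^1, g^2$ are bounded by some constant $C_0$. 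This is exactly the "correct asymptotic behavior" the authors invoke. Concretely, I would show: if $\|(Y^1,Y^2)\|$ exceeds a threshold $R$ determined by $C_0$ and the bounds on $c^i$, the drift points inward.

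**Obtaining the bound rigorously.**

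To make the comparison rigorous I would work with the rotated/original system and apply a comparison argument. Since the equation for $Y^2$ has no reflection, I would treat it via standard BSDE comparison: freezing the $y^1$-argument and using that $-c^1 y^2 + (\text{bounded})$ gives a linear ODE-type bound, so $|Y^2_t| \leq C_0 \cdot \EE[\int_t^T e^{-\int_t^s c^1}\,ds \mid \mathcal{F}_t] \leq C_0/\underline{c}$ if $c^1$ is bounded below, or more carefully $|Y^2_t|$ is dominated by the solution of the corresponding deterministic bound since $g^2$ is bounded and the linear coefficient is nonnegative. For $Y^1$, I would use that it is nonnegative (from reflection) and bounded above by the same type of estimate, using $c^1 \geq |c^2|$ to absorb the cross term $c^2 y^2$. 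The upshot is an explicit constant $R$ with $\|(Y^1_t, Y^2_t)\| \leq R$ for any solution.

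**Reduction and conclusion.**

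Once the a priori bound $R$ is in hand, I would define truncated generators $\bar{\mathcal{G}}^i$ that agree with $\mathcal{G}^i$ on the ball of radius $R$ (say by replacing $y^1, y^2$ by their projections onto $[-R,R]$ inside the multipliers), making $\bar{\mathcal{G}}^i$ globally Lipschitz. The truncated system $\{-dY^1 = \bar{\mathcal{G}}^1\,dt - Z^1\,dW + dK,\ Y^1 \geq 0,\ \int_0^T Y^1\,dK = 0;\ -dY^2 = \bar{\mathcal{G}}^2\,dt - Z^2\,dW\}$ falls within the standard theory of RBSDEs with one reflecting barrier coupled to a non-reflected component under global Lipschitz conditions — existence and uniqueness follow by a fixed-point/contraction argument on a small time interval, then patched over $[0,T]$, or by appeal to the orthogonal-reflection results of \cite{OrthRBSDE} now that Lipschitz holds. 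Any solution of the truncated system automatically satisfies the a priori bound (by the same estimate, which only used the structure valid inside the ball), hence it never sees the truncation and solves the original system. Uniqueness transfers from the truncated Lipschitz system to the original one, since any bounded solution of the original system solves the truncated one. \emph{The main obstacle I anticipate} is making the a priori estimate fully rigorous in the coupled reflected setting: the reflection term $K$ enters $Y^1$ but not $Y^2$, so the comparison must be organized carefully — I expect to apply Itô's formula to $e^{\gamma t}\big((Y^1_t)^2 + (Y^2_t)^2\big)$ or to a well-chosen Lyapunov function $\phi(Y^1,Y^2)$, using that $Y^1\,dK = 0$ to kill the reflection contribution and the sign conditions $c^1 \geq |c^2| \geq 0$ to dominate the quadratic terms, thereby closing the estimate with a Gronwall argument.
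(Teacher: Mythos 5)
Your high-level architecture (truncate the multipliers to get a Lipschitz system, prove an $L^\infty$ bound, choose the truncation level so the caps are never active, deduce uniqueness from local Lipschitzness) is the same as the paper's. However, there are two genuine gaps in the step that carries all the weight, namely the bound. First, the only quantitative inequality you establish is the non-strict $|c^2|\le c^1$ (from $c^1\pm c^2\in\{c^{\alpha^0},c^{\beta^0}\}\ge 0$). With that alone your component-wise bootstrap reads $\Vert Y^1\Vert_\infty\le \Vert Y^2\Vert_\infty+C_0$ and $\Vert Y^2\Vert_\infty\le \Vert Y^1\Vert_\infty+C_0$, which does not close; and the ``restoring force'' heuristic also fails, since on the diagonal $y^1=y^2$ the linear part of $\mathcal{G}^1$ is $(c^2-c^1)y^1$, which can vanish identically when $|c^2|=c^1$. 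The proof requires the strict contraction $|c^2|\le\lambda c^1$ with $\lambda<1$, which is exactly what Assumption \ref{ass:bdd.range} (the bounded range of beliefs) delivers and which yields the closed bound $C_0/(1-\lambda)$; you never invoke this assumption.

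Second, even granting an a priori bound for solutions of the \emph{original} system \reff{eq.Y1.Y2.RBSDE}, your assertion that ``any solution of the truncated system automatically satisfies the a priori bound by the same estimate'' is unjustified. Your proposed Lyapunov computation for $(Y^1)^2+(Y^2)^2$ relies on the exact cancellation of the cross terms $c^2y^1y^2$ coming from the antisymmetric structure of $(\mathcal{G}^1,\mathcal{G}^2)$; once $y^1,y^2$ are replaced by $\Psi_R(y^1),\Psi_R(y^2)$ inside the generator, the cross terms become $c^2\bigl(y^1\Psi_R(y^2)-y^2\Psi_R(y^1)\bigr)$, which no longer cancel off the ball of radius $R$ and can only be absorbed at the cost of an $R$-dependent constant, making the argument circular. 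Bridging precisely this gap is the technical core of the paper's proof: it uncaps one component at a time, freezes the coefficients along the resulting one-dimensional solution to obtain an \emph{affine} RBSDE with an optimal-stopping representation, bounds that value using $|g^i|\le C_0c^1$ and $|c^2|\le\lambda c^1$ (the elementary lemma on discounted stopping values), and then identifies the uncapped solution with the capped one via uniqueness of the Lipschitz equation. Without this (or an equivalent device), your reduction from the truncated system back to \reff{eq.Y1.Y2.RBSDE} does not go through.
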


\begin{proof}
\noindent\emph{Step 1: Existence for the fully capped system.} 
For any constant $C>0$, denote $\Psi_C(y)=(-C\vee y)\wedge C$. Clearly, this function is $1$-Lipschitz in $y$ and absolutely bounded by $C$. 
We fix arbitrary constants $\{C_i^j>0\}$ and consider the fully capped system:
\begin{equation}\label{capped}
\left\{
\begin{array}{l}
-\text{d}Y^1_t = \left(-c^1_t(Y^1_t,Y^2_t) \Psi_{C_1^1}(Y^1_t)
+ c^2_t(Y^1_t,Y^2_t) \Psi_{C_1^2}(Y^2_t) 
+ g_t^1(Y^1_t,Y^2_t) \right) \text{d}t - Z^1_t\text{d}W_t + \text{d}K_t \\
-\text{d}Y^2_t = \left(-c^2_t(Y^1_t,Y^2_t) \Psi_{C_2^1}(Y^1_t)
- c^1_t(Y^1_t,Y^2_t) \Psi_{C_2^2}(Y^2_t)
+ g_t^2(Y^1_t,Y^2_t)\right)\text{d}t - Z^2_t\text{d}W_t 
\end{array}
\right.
\end{equation}
Here, and in some expressions that follow, we omit the terminal condition, barrier, and the minimality condition for $K_t$, as they remain unchanged throughout. 
Assumptions \ref{ass:bdd.lambda.f}--\ref{ass:bdd.range} imply that $c^1_t(y^1,y^2)$, $c^2_t(y^1,y^2)$, $g^1_t(y^1,y^2)$ and $g^2_t(y^1,y^2)$ are bounded and globally Lipschitz in $(y^1,y^2)$, uniformly over a.e. $(t,\omega)$. Hence, the generator of (\ref{capped}) is globally Lipschitz in $(y^1,y^2)$ (and independent of $(Z^1,Z^2)$), and the standard existence results for Lipschitz BSDEs (cf. for example, Theorem 2.2 in \cite{zhenhua}) yield the existence (and uniqueness) of a solution to (\ref{capped}). 
Denote the $Y$-component of this solution $(Y^{1c}_t,Y^{2c}_t)$.

\noindent\emph{Step 2: Bounds on solution components via partial uncapping.} We want to bound the components $(Y^{1c}_t$, $Y^{2c}_t)$, of the solution to the capped system, by using the control-stopping interpretation of the individual (R)BSDEs comprising our system. 
Consider the associated equation for $Y^1$, with $Y^{2c}_t$ being given:
\begin{equation}\label{1c}
\left\{
\begin{array}{l}
-\text{d}Y^1_t = \left(-c^1_t\left(Y^1_t,Y^{2c}_t\right) Y^1_t 
+ c^2_t\left(Y^1_t,Y^{2c}_t\right) \Psi_{C_1^2}\left(Y^{2c}_t\right) 
+ g_t^1\left(Y^1_t,Y^{2c}_t\right) \right)\text{d}t
- Z^1_t\text{d}W_t + \text{d}K_t \\
Y^1_t\ge0,\,\,\,\,\,\,\,\int_0^T Y^1_t\text{d}K_t=0,\quad Y^1_T=0 
\end{array}
\right.
\end{equation}
Note that, as $c^1_t$,$c^2_t$, $g^1$ and $\Psi_{C_1^2}$ are bounded, this one-dimensional RBSDE has a continuous generator with linear growth in $Y^1$, and, for example, by Theorem 4.1 in \cite{zhenhua}, it has a solution, which we denote $Y^1_t$.
Next, for $Y^1$ and $Y^{2c}$ constructed above, we introduce the processes 
$$
\tilde{c}^1_t = c^1_t(Y^1_t,Y^{2c}_t),\quad\tilde{c}^2_t = c^2_t(Y^1_t,Y^{2c}_t),
\quad
\tilde{g}^1_t = g^1_t(Y^1_t,Y^{2c}_t),\quad\tilde{g}^2_t = g^2_t(Y^1_t,Y^{2c}_t),
$$
and consider the one-dimensional RBSDE (for $\tilde{Y}$), obtained from (\ref{1c}) by pretending that the coefficients should depend on the solution itself:
\begin{equation}
\left\{
\begin{array}{l}
-\text{d}\tilde{Y}^1_t = \left(-\tilde{c}^1_t \tilde{Y}^1_t + \tilde{c}^2_t \Psi_{C_1^2}(Y^{2c}_t)
+\tilde{g}_t^1\right)\text{d}t - Z^1_t\text{d}W_t + \text{d}K_t \\
\tilde{Y}^1_t\ge0,\,\,\,\,\,\,\,\int_0^T \tilde{Y}^1_t\text{d}K_t=0,\quad \tilde{Y}^1_T=0
\end{array}
\right.
\end{equation}
Note that $\tilde{Y}=Y^1$ is the unique solution of this equation. On the other hand, the above RBSDE is affine in $\tilde{Y}$, and, for example, by Theorem 7.1 in \cite{ELK}, its unique solution admits the following interpretation, as the value function of an optimal stopping problem:
$$
Y^1_t=\sup\limits_{\tau\in\mathcal{T}_{t}}\EE\left[
\int_t^{\tau} \exp\left(-\int_t^s \tilde{c}^1_u \text{d}u\right) 
\left( \tilde{c}^2_s \Psi_{C_1^2}(Y^{2c}_s) + \tilde{g}_s^1\right) \text{d}s
\big\vert \mathcal{F}_t\right]
$$
We will use this representation to establish a bound on $|Y^1|$. 
First, note that, under our assumptions, there exist constants $C_0>0$ and $\lambda\in(0,1)$, such that, for all $t$, $y^1$, $y^2$, and a.e. $\omega$, we have:
$$
\left\vert\frac{g^i_t(y^1,y^2)}{c^1_t(y^1,y^2)}\right\vert 
= 2\left\vert\frac{g^{a}_t\left(P^a_t\left(\frac{y^1+y^2}{2}\right),P^b_t\left(\frac{y^2-y^1}{2}\right)\right)
\pm g^{b}_t\left(P^a_t\left(\frac{y^1+y^2}{2}\right),P^b_t\left(\frac{y^2-y^1}{2}\right)\right)}
{c^{\alpha^0}_t\left(P^a_t\left(\frac{y^1+y^2}{2}\right),P^b_t\left(\frac{y^2-y^1}{2}\right)\right)
+c^{\beta^0}_t\left(P^a_t\left(\frac{y^1+y^2}{2}\right),P^b_t\left(\frac{y^2-y^1}{2}\right)\right)}
\right\vert\le C_0,
$$
$$
\left\vert\frac{c^2_t(y^1,y^2)}{c^1_t(y^1,y^2)}\right\vert
=\left\vert \frac{c^{\alpha^0}_t\left(P^a_t\left(\frac{y^1+y^2}{2}\right),P^b_t\left(\frac{y^2-y^1}{2}\right)\right) 
- c^{\beta^0}_t\left(P^a_t\left(\frac{y^1+y^2}{2}\right),P^b_t\left(\frac{y^2-y^1}{2}\right)\right)}
{c^{\alpha^0}_t\left(P^a_t\left(\frac{y^1+y^2}{2}\right),P^b_t\left(\frac{y^2-y^1}{2}\right)\right) 
+ c^{\beta^0}_t\left(P^a_t\left(\frac{y^1+y^2}{2}\right),P^b_t\left(\frac{y^2-y^1}{2}\right)\right)}
\right\vert\le\lambda<1, 
$$
with $c^{\alpha}$, $P^a$, $P^b$, $g^{a}$ and $g^{b}$ defined in (\ref{eq.F.c.def}), (\ref{eq.Pa.def}), (\ref{eq.Pb.def}), (\ref{eq.ga.def}) and (\ref{eq.gb.def}). 
The first inequality holds with $C_0=5C_p$, and it follows from the boundedness of $P^a$, $P^b$ and the jump sizes. The second one follows from Assumption \ref{ass:bdd.range}.
The above inequalities imply:
$$
\left\vert \frac{\tilde{c}^2_t\Psi_{C_1^2}(Y^{2c}_t)
+ \tilde{g}_t^1}{\tilde{c}^1_t}\right\vert\le \lambda C_1^2 + C_0,
$$
for all $t$ and a.e. $\omega$. The latter estimate, together with the following lemma, imply the desired upper bound:
$$
|Y^1_t|\le \lambda C_1^2+C_0
$$
for all $t$ and a.e. $\omega$.

\begin{lemma}
Consider any constant $C>0$, any continuous function $S:[0,T]\rightarrow\RR$, absolutely bounded by $C$, any nonnegative continuous function $c$ on $[0,T]$, and any continuous function $g$ on $[0,T]$, satisfying $\left\vert g\right\vert\le C |c|$. For any $0\leq t\leq \tau \leq T$, denote:
$$
Y_{t,\tau}=\int_t^\tau \exp\left(-\int_t^s c(u) \text{d}u\right) g(s) \text{d}s 
+ \exp\left(-\int_t^\tau c(u) \text{d}u\right) S(\tau).
$$
Then
$$
|Y_{t,\tau}|\le C,\,\,\,\,\forall\, 0\leq t\leq\tau\leq T.
$$
\end{lemma}
\begin{proof}
For any $0\leq t\leq\tau\leq T$, we have
\begin{multline*}
\left\vert
\int_t^\tau \exp\left(-\int_t^s c(u) du\right) g(s) ds + \exp\left(-\int_t^\tau c(u) du\right) S(\tau)
\right\vert\\
\le
-\int_t^\tau C d \left(\exp\left(-\int_t^s c(u) du\right)\right)
+ \exp\left(-\int_t^\tau c(u) du\right)C
=C
\end{multline*}
\qed
\end{proof}

Thus, we have a solution $Y^1$ of (\ref{1c}) which satisfies $|Y^1_t|\le \lambda C_1^2+C_0$, $\PP$-a.s., for all $t$. 
Then, for $C_1^1\ge\lambda C_1^2+C_0$, we have $\Psi_{C_1^1}(Y^1_t)=Y^1_t$, and, hence, $Y^1$ also solves
\begin{equation*}\label{1cc}
\left\{
\begin{array}{l}
-\text{d}Y^1_t=\left(-c^1_t(Y^1_t,Y^{2c}_t) \Psi_{C_1^1}(Y^1_t) 
+ c^2_t \Psi_{C_1^2}(Y^{2c}_t) + g^1_t(Y^1_t,Y^{2c}_t)\right)\text{d}t 
- Z^1_t\text{d}W_t + \text{d}K_t \\
Y^1_t\ge0,\,\,\,\,\,\,\,\int_0^T Y^1_t\text{d}K_t=0,\quad Y^1_T=0 
\end{array}
\right.
\end{equation*}
Note that the above RBSDE coincides with the $Y^1$-equation in (\ref{capped}).
This one-dimensional RBSDE has a globally Lipschitz generator and, thus, a unique solution. This implies that $Y^1=Y^{1c}$, and we obtain the desired bound on $Y^{1c}$:
$$
|Y^{1c}_t|\le \lambda C_1^2+C_0,
$$
$\PP$-a.s. for all $t$, provided $C_1^1\ge\lambda C_1^2+C_0$.
Similarly, considering the $Y^2$ part of the capped system (\ref{capped}), with $Y^{1c}$ fixed, we obtain
$$
|Y^{2c}_t|\le \lambda C_2^1+C_0,
$$
$\PP$-a.s. for all $t$, provided $C_2^2\ge \lambda C_2^1+C_0$.

\noindent\emph{Step 3: Solution of the appropriately capped system solves the original system.} To show that the solution $(Y^{1c}_t,Y^{2c}_t)$ of (\ref{capped}) also solves the original system (\ref{eq.Y1.Y2.RBSDE}), we only need to show that, given the bounds on $(Y^{1c},Y^{2c})$, the capped system's generator coincides with the original generator, which translates into
\begin{equation*}
\Psi_{C_1^1}(Y^{1c}_t)=Y^{1c}_t,\,\,\,\,\,\,
\Psi_{C_2^2}(Y^{2c}_t)=Y^{2c}_t,\,\,\,\,\,\,\,
\Psi_{C_1^2}(Y^{2c}_t)=Y^{2c}_t,\,\,\,\,\,\,\,
\Psi_{C_2^1}(Y^{1c}_t)=Y^{1c}_t.
\end{equation*}
The first two equalities are satisfied if 
$$
C_1^1\ge \lambda C_1^2+C_0,\quad C_2^2\ge \lambda C_2^1+C_0,
$$
while the last two require
$$
\lambda C^1_2+C_0\le C^2_1,\quad \lambda C^2_1+C_0\le C^1_2.
$$
One can check these inequalities have a solution, as long as $\lambda<1$. The ``minimal" solution being 
$$
C_1^1=C_2^1=C^2_2=C^2_1=\frac{C_0}{1-\lambda}.
$$
With the above choice of capping, the solution to (\ref{capped}) also solves (\ref{eq.Y1.Y2.RBSDE}), thus, showing the existence of a solution of (\ref{eq.Y1.Y2.RBSDE}). 
This solution is bounded by construction. 
The uniqueness of a bounded solution follows from the fact that, when $(y^1,y^2)$ vary over a bounded set, the generator of (\ref{eq.Y1.Y2.RBSDE}) is Lipschitz, hence, the standard results yield uniqueness.
\qed\qed
\end{proof}



\begin{remark}
The above proof provides an existence result for any system (\ref{eq.Y1.Y2.RBSDE}), whose generator is given by (\ref{eq.G1.def})--(\ref{eq.G2.def}), with arbitrary (bounded and Lipschitz) progressively measurable random functions $\{c^i,g^i\}$, as long as the following holds for a.e. $(t,\omega)$ and all $(y^1,y^2)\in\RR^2$:
$$
\sum_{i=1}^2\left|g^i_t(y^1,y^2)\right| \leq C_0 c^1_t(y^1,y^2),
\quad \left|c^2_t(y^1,y^2)\right| \leq \lambda c^1_t(y^1,y^2),
$$
with some constants $C_0>0$ and $\lambda\in(0,1)$.
\end{remark}

\section{Equilibrium in the continuum-player game.}
\label{se:main}

In this section we construct an equilibrium for the continuum-player game described in Section \ref{se:setup}, in the sense of Definition \ref{def:equil.def}. The main difficulty in constructing the equilibrium stems from the mixed control-stopping nature of the game (and, of course, the fact there are multiple participants). Therefore, we attempt to break the problem into two parts - isolating the ``stopping" part of the game. In order to do this, it is convenient to make assumptions that guarantee the existence of the so-called ``extremal" agents on each side of the book. These agents are called ``extremal", because their beliefs dominate the beliefs of the other agents on the same side of the book, in the sense explained below. We denote the extremal beliefs on the long side by $\alpha^0$, and, on the short side, by $\beta^0$. In short, the agents with beliefs $\alpha^0$ are the least bullish among the long ones, and the agents with beliefs $\beta^0$ are the least bearish among the short ones.
The extremal agents, e.g., can be interpreted as market-makers, as they are closer to being market-neutral than any other agent on the same side of the book (recall that we, still, do not have any designated market makers in this game -- market-neutrality is only one of the characteristics of a market maker). Indeed, if one assumes that the long agents are bullish (which is natural, as, before the end of the game, the long agents choose to wait instead of submitting market orders), then, the agents with beliefs $\alpha^0$ are the least bullish ones.
In this section, we construct an equilibrium in which the time of the first internal market order and the bid and ask prices are determined by the extremal agents, while the rest of the shape of the LOB is due to the other agents' actions. The construction of an equilibrium, thus, splits into two parts. In the first part, the extremal agents find an equilibrium among themselves, using the results of the auxiliary two-player game, and determining the time of the first internal market order $\tau$ and the bid and ask prices $p^a$ and $p^b$. In the second part, the other agents, taking $(p^a,p^b,\tau)$ as given, determine their optimal actions. Of course, we, ultimately, prove that the strategy of every agent is optimal in the overall market, consisting of both extremal and non-extremal agents.
The resulting LOB $\nu$ has two atoms -- at the bid and ask prices -- comprised of the limit orders of the extremal and some of the non-extremal agents. The rest of the LOB contains limit orders of the non-extremal agents only. 

In order to implement the above program, we assume that $\mathbb{A}=\{\alpha^0\}\cup\hat{\mathbb{A}}$ and $\mathbb{B}=\{\beta^0\}\cup\hat{\mathbb{B}}$. 
We assume that Assumptions \ref{ass:indep.underPalpha}--\ref{ass:bdd.range} hold throughout this section. In addition, we make the following assumptions.

\begin{ass}\label{ass.intensities.comparison}
For any $\alpha\in\hat{\mathbb{A}}$, $\beta\in\hat{\mathbb{B}}$ and a.e. $(t,\omega)$, we have:
$$
\lambda^{\alpha}_tF^{+,\alpha}_t(p)\ge\lambda^{\alpha^0}_tF^{+,\alpha^0}_t(p),
\quad \lambda^{\beta}_tF^{+,\beta}_t(p)\le\lambda^{\beta^0}_tF^{+,\beta^0}_t(p),
\quad\forall p\ge0,
$$
$$
\lambda^{\alpha}_tF^{-,\alpha}_t(p)\le\lambda^{\alpha^0}_tF^{-,\alpha^0}_t(p),
\quad \lambda^{\beta}_tF^{-,\beta}_t(p)\ge\lambda^{\beta^0}_tF^{-,\beta^0}_t(p),
\quad\forall p\le0.
$$
\end{ass}

\begin{ass}\label{pa_alpha_vs_alpha_0}
For any $\alpha\in\hat{\mathbb{A}}$, $\beta\in\hat{\mathbb{B}}$ and a.e. $(t,\omega)$, we have:
$$
\frac{F^{+,\alpha_0}_t(p)}{f^{\alpha_0}_t(p)}\le \frac{F^{+,\alpha}_t(p)}{f^{\alpha}_t(p)},
\quad \frac{F^{-,\beta_0}_t(-p)}{f^{\beta_0}_t(-p)}\leq \frac{F^{-,\beta}_t(-p)}{f^{\beta}_t(-p)}
\quad\forall\,p\ge0.
$$
\end{ass}

Assumption \ref{ass.intensities.comparison} ensures that the distribution of the fundamental price at any time $t$, from an $\alpha$-agent's perspective, dominates stochastically the respective distribution from the $\alpha^0$-agent's perspective. The opposite relation holds for the short agents.
The first inequality in Assumption \ref{pa_alpha_vs_alpha_0} ensures that $\log F^{+,\alpha_0}_t(\cdot)$ decays faster than $\log F^{+,\alpha}_t(\cdot)$, which is also consistent with the interpretation that $\alpha^0$-agents assign smaller probabilities to the large jumps of the fundamental price, and larger probabilities to the small jumps, as compared to the $\alpha$-agents. Analogous interpretation holds for the second inequality in Assumption \ref{pa_alpha_vs_alpha_0}.
Assumption \ref{pa_alpha_vs_alpha_0} ensures that, in an empty LOB, the non-extremal agents would prefer to post their limit order further away from zero than the extremal ones do.

\begin{lemma}\label{lemma.opt.price.levels}
Let Assumptions \ref{ass:indep.underPalpha}--\ref{pa_alpha_vs_alpha_0} hold.
Fix any $\alpha\in\hat{\mathbb{A}}$ and $\beta\in\hat{\mathbb{B}}$. Then, for a.e. $(t,\omega)$, the following holds for all $y\in\RR$: $p\mapsto (p-y) F_t^{+,\alpha}(p)$ is non-decreasing in $p\in[y,P^{a}_t(y)]$, and $p\mapsto (y-p) F_t^{-,\beta}(p)$ is non-increasing in $p\in[P^{b}_t(y),y]$.
\end{lemma}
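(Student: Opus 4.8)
The plan is to fix $(t,\omega)$ and $y\in\RR$, write $g(p):=(p-y)F_t^{+,\alpha}(p)$, and reduce the monotonicity of $g$ on $[y,P^a_t(y)]$ to a pointwise bound on its derivative. Since $F_t^{+,\alpha}(\cdot)$ is continuous and locally absolutely continuous, so is $g$, and it suffices to show $g'(p)\ge0$ for a.e.\ $p\in(y,P^a_t(y))$. For $p\le0$ we have $F_t^{+,\alpha}(p)=F_t^{+,\alpha}(0)$, so $g'(p)=F_t^{+,\alpha}(0)\ge0$; for $p>0$, $\tfrac{d}{dp}F_t^{+,\alpha}(p)=-f^\alpha_t(p)$, whence $g'(p)=F_t^{+,\alpha}(p)-(p-y)f^\alpha_t(p)$. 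Wherever $f^\alpha_t(p)=0$ this collapses to $F_t^{+,\alpha}(p)\ge0$, which is automatic, so the whole statement reduces to proving, for every $p\in(0,P^a_t(y))$ with $f^\alpha_t(p)>0$,
\[
\frac{F_t^{+,\alpha}(p)}{f^\alpha_t(p)}\ \ge\ p-y.
\]

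The key idea is to verify this inequality at the right endpoint $p=P^a_t(y)$ and then propagate it leftward by monotonicity. Writing $\varphi^\gamma_t(p):=F_t^{+,\gamma}(p)/f^\gamma_t(p)$ for $\gamma\in\{\alpha,\alpha^0\}$ on the interior of $\supp(f^\gamma_t)\cap\RR_+$, set $\psi(p):=\varphi^\alpha_t(p)-(p-y)$. By Assumption \ref{ass:f.logconcave}, $\varphi^\alpha_t$ is decreasing, while $-(p-y)$ is strictly decreasing; hence $\psi$ is strictly decreasing, and it is enough to check $\psi(P^a_t(y))\ge0$. In the generic (interior) case, $P^a_t(y)$ is the root of the extremal agent's first-order condition \reff{eq.Pa.root.def}, i.e.\ $P^a_t(y)-y=\varphi^{\alpha^0}_t(P^a_t(y))$; combining this identity with Assumption \ref{pa_alpha_vs_alpha_0} gives
\[
\psi\bigl(P^a_t(y)\bigr)=\varphi^\alpha_t\bigl(P^a_t(y)\bigr)-\varphi^{\alpha^0}_t\bigl(P^a_t(y)\bigr)\ \ge\ 0.
\]
This is exactly where the ordering of beliefs enters: the level optimal for the less bullish agent $\alpha^0$ lies inside the region on which the objective of the more bullish agent $\alpha$ is still increasing. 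Well-definedness of $\varphi^\alpha_t(P^a_t(y))$ follows from $0<P^a_t(y)<\sup\supp(f^{\alpha^0}_t)\le\sup\supp(f^{\alpha}_t)$ (the last inequality coming from Assumption \ref{ass.intensities.comparison}), together with the connectedness of the positive support that is implicit in Assumption \ref{ass:f.logconcave}; when $P^a_t(y)$ lies below $\supp(f^\alpha_t)\cap\RR_+$ the reduced inequality is vacuous.

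It remains to dispose of the degenerate configurations of $P^a_t(y)$. If $P^a_t(y)=0$, then $P^a_t(y)\ge y$ (Lemma \ref{le:Pa.Pb.prop}) forces $y\le0$, so $[y,P^a_t(y)]\subset\{p\le0\}$, where $g$ is affine with non-negative slope $F_t^{+,\alpha}(0)$. If $P^a_t(y)$ is capped at the upper boundary of $\supp(f^{\alpha^0}_t)$ (the ``$y$ large'' regime), then $P^a_t(y)\ge y$ together with this capping forces $y=P^a_t(y)$, so the interval degenerates to a point and the claim is vacuous. In every case $g'\ge0$ a.e.\ on $(y,P^a_t(y))$, proving the first assertion. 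The second assertion --- monotonicity of $p\mapsto(y-p)F_t^{-,\beta}(p)$ on $[P^b_t(y),y]$ --- follows by the mirror-image argument, with $F^{-,\beta},P^b,\varphi^{\beta^0}$ in place of $F^{+,\alpha},P^a,\varphi^{\alpha^0}$ and using the second inequalities in Assumptions \ref{ass:f.logconcave} and \ref{pa_alpha_vs_alpha_0}.

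The main obstacle is the endpoint step: transferring the first-order condition that defines $P^a_t(y)$ through the extremal belief $\alpha^0$ into a one-sided inequality for a different belief $\alpha$. This is precisely what the comparison of hazard-type ratios in Assumption \ref{pa_alpha_vs_alpha_0} delivers, while the log-concavity of Assumption \ref{ass:f.logconcave} supplies the monotonicity of $\psi$ that lets a single endpoint estimate control the entire interval; the remaining care is bookkeeping at the boundary definitions of $P^a_t(y)$ and on the flat stretches where $f^\alpha_t$ vanishes.
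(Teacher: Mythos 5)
Your proof is correct and follows essentially the same route as the paper's (one-sentence) argument: differentiate the target function, reduce the claim to the hazard-ratio inequality $F^{+,\alpha}_t(p)/f^{\alpha}_t(p)\ge p-y$ on $[y,P^a_t(y)]$, verify it at $p=P^a_t(y)$ via the first-order condition (\ref{eq.Pa.root.def}) combined with Assumption \ref{pa_alpha_vs_alpha_0}, and propagate it across the interval using the monotonicity supplied by Assumption \ref{ass:f.logconcave}. Your bookkeeping of the degenerate configurations of $P^a_t(y)$ and of the flat stretches where $f^{\alpha}_t$ vanishes is more detailed than, but fully consistent with, what the paper leaves implicit.
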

\begin{proof}
The statement follows easily by differentiating the target functions, recalling (\ref{eq.Pa.root.def})--(\ref{eq.Pb.root.def}), and making use of Assumption \ref{pa_alpha_vs_alpha_0}.
\qed
\end{proof}

We also need to make an assumption that limits the maximum possible demand size, as viewed by the extremal agents. Namely, the extremal agents believe that the external demand can never exceed the inventory held by these agents.

\begin{ass}\label{ass.demand.size}
For $\text{Leb}\otimes\PP$-a.e. $(t,\omega)$, we have:
$$
D_t\left(-Q^+\left(f^{\alpha^0/\beta^0}_t(x)dx\right)\right) \leq \mu^a\left(\{\alpha^0\}\right),
\quad 
-D_t\left(-Q^-\left(f^{\alpha^0/\beta^0}_t(x)dx\right)\right) \leq \mu^b\left(\{\beta^0\}\right),
$$
where $Q^+$ and $Q^-$ are defined in (\ref{eq.Qpm.def}).
\end{ass}

In order to construct an equilibrium, we need to impose certain topological conditions on the space of beliefs and on the mapping $\alpha\mapsto f^{\alpha}$.

\begin{ass}\label{ass:existence.top.AB}
The spaces $\hat{\mathbb{A}}$ and $\hat{\mathbb{B}}$ are compact metric spaces, with the Borel sigma-algebras on them (i.e. $\mu^a$ and $\mu^b$ are measures with respect to the Borel sigma-algebras). 
In addition, for a.e. $(t,\omega)$, the mapping $\alpha\mapsto f^{\alpha}_t$ is continuous as a mapping $\hat{\mathbb{A}}\to\mathbb{L}^1[0,C_p]$ and as a mapping $\hat{\mathbb{B}}\to\mathbb{L}^1[-C_p,0]$.
\end{ass}

Finally, we need to ensure that the demand size curve is ``not too flat".

\begin{ass}\label{ass:existence.demand}
There exists an increasing continuous (deterministic) function $\epsilon\colon [0,\infty)\rightarrow[0,\infty)$, s.t. $\epsilon(0)=0$ and, for a.e. $(t,\omega)$, $|D_t^{-1}(x)-D_t^{-1}(y)|\leq \epsilon(|x-y|)$, for all $x,y\in\RR$.
\end{ass}

Now, we proceed to construct a special class of equilibria in the continuum-player game. As announced earlier, the equilibrium is constructed by, first, solving the auxiliary two-player game, as described in Section \ref{se:2player}. In the two-player game, we assume that the two agents have beliefs $\alpha^0$ and $\beta^0$.
Thus, we consider the unique bounded solution $(Y^1,Y^2)$ to (\ref{eq.Y1.Y2.RBSDE}) and construct the associated $(V^a,V^b)$, which solve (\ref{eq.RBSDE.Va.Vb}), according to Lemma \ref{le:connect.RBSDEs}. Then, Lemma \ref{sab} implies that $(V^a,V^b)$ are the value functions of the two-player equilibrium $(\hat{p}^a,\hat{p}^b,\hat{\tau},\bar{p})$, where
$$
\hat{p}^a_t = P^a_t(V^a_t),\quad \hat{p}^b_t = P^b_t(V^b_t),
\quad \hat{\tau} = \inf\{t\in[0,T]\,:\, V^a_t = V^b_t\},
\quad \bar{p}_t = \frac{1}{2}V^a_t + \frac{1}{2}V^b_t.
$$
Let us introduce
\begin{equation}\label{eq.pa.pb.equil.def}
p^a_t = \hat{p}^a_t \bone_{\{t<\hat{\tau}\}} + \bar{p}_{\hat{\tau}} \bone_{\{t\geq\hat{\tau}\}},
\quad p^b_t = \hat{p}^b_t \bone_{\{t<\hat{\tau}\}} + \bar{p}_{\hat{\tau}} \bone_{\{t\geq\hat{\tau}\}}.
\end{equation}
Using these auxiliary quantities, we aim to construct an equilibrium for the continuum-player game, in which $(\nu,\theta)$ satisfy the following two conditions. First,
\begin{equation}\label{eq.nu.equil.def}
\nu^a_t=\mu^a(\{\alpha^0\})\delta_{p^a_t}+\bar{\nu}^a_t,\quad \nu^b_t=\mu^b(\{\beta^0\}) \delta_{p^b_t}+\bar{\nu}^b_t,
\end{equation}
with progressively measurable $\bar{\nu}^a$ and $\bar{\nu}^b$ taking values in the space of sigma-additive measures on $\RR$, such that, $\PP$-a.s., for all $t\in[0,T]$, $\bar{\nu}^{a}_t$ is supported on $[p^a_t,C_p]$ and $\bar{\nu}^{b}_t$ is supported on $[-C_p,p^b_t]$.\footnote{The components $\bar{\nu}^a$ and $\bar{\nu}^b$ are introduced for convenience, in order to indicate that $\nu^a_t(\{p^a_t\})\geq \mu^a(\{\alpha^0\})$ and $\nu^b(\{p^b_t\})\geq \mu^b(\{\beta^0\})$.}
Second,
\begin{equation}\label{eq.theta.equil.def}
\theta^a_t = \mu^a(\mathbb{A})\delta_{V^a_t},\quad \theta^b_t = \mu^b(\mathbb{B}) \delta_{V^b_t}.
\end{equation}
Note that, in such a market, we have
$$
\tau^a=\tau^b=\hat{\tau}
$$
The following theorem is the main result of this paper.

\begin{theorem}\label{th:main}
Let Assumptions \ref{ass:indep.underPalpha}--\ref{ass:existence.demand} hold. Consider any solution $(V^a,V^b)$ to (\ref{eq.RBSDE.Va.Vb}) (whose existence is guaranteed by Proposition \ref{prop:RBSDE.exist} and Lemma \ref{le:connect.RBSDEs}) and the associated $(p^a,p^b)$ given by (\ref{eq.pa.pb.equil.def}). Then, there exist progressively measurable measure-valued processes $(\nu,\theta)$ and random fields $p,v:\Omega\times[0,T]\times\mathbb{S}\rightarrow\mathcal{P}(\RR)\times\RR$, which form an equilibrium, in the sense of Definition \ref{def:equil.def}, and which satisfy (\ref{eq.nu.equil.def})--(\ref{eq.theta.equil.def}) along with
\begin{itemize}
\item $v_t(1,\alpha)=V^a_t$, $v_t(-1,\alpha)=V^b_t$, for all $(t,\omega,\alpha)$,
\item $p_t(1,\alpha^0)=p^a_t$, $p_t(-1,\beta^0)=p^b_t$, for all $(t,\omega)$.
\end{itemize}
\end{theorem}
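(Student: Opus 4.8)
The plan is to use the two-player game of Section \ref{se:2player} to pin down the ``stopping'' component of the equilibrium, and then to solve a purely static (in $(t,\omega)$) fixed-point problem for the limit-order placements of the non-extremal agents. Concretely, I would start from the bounded solution $(V^a,V^b)$ of (\ref{eq.RBSDE.Va.Vb}) furnished by Proposition \ref{prop:RBSDE.exist} and Lemma \ref{le:connect.RBSDEs}, and declare $v_t(1,\alpha)=V^a_t$, $v_t(-1,\alpha)=V^b_t$ for every $\alpha$, together with $\theta$ as in (\ref{eq.theta.equil.def}) and the bid/ask processes $(p^a,p^b)$ of (\ref{eq.pa.pb.equil.def}). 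Since $P^a_t(V^a_t)\ge V^a_t\ge V^b_t\ge P^b_t(V^b_t)$ before $\hat\tau$, with equality only at $\hat\tau$, this choice forces $\tau^a=\tau^b=\hat\tau$, so the game ends for everyone exactly at the stopping time produced by the extremal two-player equilibrium, and the continuum problem reduces to a pure control (placement) problem on $[0,\hat\tau]$.

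The first block of verification is that this stopping/price structure is individually optimal. For the two extremal agents it is exactly Lemma \ref{sab}. For a non-extremal long agent I would argue that, since she is infinitesimal, the aggregate internal-order time $\tau^a=\hat\tau$ is unaffected by her, so her only residual decision in the time variable is whether to submit her own market order strictly before $\hat\tau$; doing so liquidates at the non-positive bid $p^b$, whereas Assumption \ref{ass.intensities.comparison} makes her strictly more bullish than the $\alpha^0$-agent, so her continuation value dominates $V^a_t>p^b_t$ up to $\hat\tau$ and early stopping is suboptimal. Hence the threshold $V^a_t$, which triggers her market order only at $\hat\tau$, is optimal; the short side is symmetric. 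It remains to see that the extremal agents alone set the top of book: by Assumption \ref{pa_alpha_vs_alpha_0} and Lemma \ref{lemma.opt.price.levels}, every non-extremal long (resp. short) agent strictly prefers to post no closer to zero than $p^a_t$ (resp. $p^b_t$), so $\bar\nu^a_t$ is supported on $[p^a_t,C_p]$, consistent with (\ref{eq.nu.equil.def}).

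The core of the argument is the construction of $\bar\nu$ via a fixed point. With $(p^a,p^b,\hat\tau)$ frozen, Lemma \ref{prop:simpObj} shows that each non-extremal agent's objective is an integral of the instantaneous reward rate $h^{\alpha,a}$ (resp. $h^{\alpha,b}$), which depends on the ambient book only through the cumulative mass $\nu^a_t((-\infty,\cdot))$ and the induced clearing price $l^{c,a}_t$; thus the optimization decouples across $(t,\omega)$ and reduces to a pointwise maximization in the posting level. For fixed $(t,\omega)$ I would define the aggregation map $\Phi$ sending a candidate cumulative profile $x\mapsto\nu^a_t((-\infty,x])$ to $\mu^a(\{\alpha^0\})\bone_{\{x\ge p^a_t\}}+\int_{\hat{\mathbb A}}\bone_{\{Q^-(p_t(1,\alpha))\le x\}}\,\mu^a(d\alpha)$, where $p_t(1,\alpha)$ is the $\alpha$-agent's best response to $\nu^a_t$; the desired book is a fixed point of $\Phi$, i.e. a solution of (\ref{eq.nuplus.fixedpoint.def}). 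The obstruction is that $\Phi$ is discontinuous: the indicator $\bone_{\{l^{c,a}_t(u)\ge x\}}$ in (\ref{eq.ha.def}), and the clearing price $l^{c,a}_t$ itself, jump as the cumulative mass crosses the level where the clearing price meets the posting price. To circumvent this I would mollify the clearing-price dependence on a scale $\varepsilon$, obtaining a continuous map $\Phi_\varepsilon$, solve $\nu^{a,\varepsilon}=\Phi_\varepsilon(\nu^{a,\varepsilon})$ by a Schauder/Kakutani fixed-point theorem on the compact convex set of sub-probability cumulative profiles supported on $[p^a_t,C_p]$ (compactness via Helly together with Assumptions \ref{ass:existence.top.AB}--\ref{ass:existence.demand}), and then pass to the limit $\varepsilon\downarrow0$, using Assumption \ref{ass:existence.demand} to control $l^{c,a}_t$ and Assumption \ref{ass.demand.size} to ensure the extremal atom absorbs the external demand.

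The main obstacle is precisely this passage to the limit together with the restoration of measurability. Because the best response is only upper-hemicontinuous, a priori limits of $\Phi_\varepsilon$-fixed points solve a relaxed (set-valued) inclusion, and one must exploit the specific one-sided monotone form of the jump in (\ref{eq.ha.def}) --- that thickening the book below a level can only lower the incentive to post there --- to upgrade the limiting inclusion to the exact identity (\ref{eq.nuplus.fixedpoint.def}). Separately, since the fixed point is built pointwise in $(t,\omega)$, I would invoke a measurable-selection theorem (e.g. Kuratowski--Ryll-Nardzewski) applied to the $\FF$-measurable, compact-valued map $(t,\omega)\mapsto\{\text{fixed points of }\Phi\}$, using the continuity in $\alpha$ from Assumption \ref{ass:existence.top.AB}, to obtain a progressively measurable $\bar\nu$ and random fields $p(s,\alpha)$. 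Finally I would reassemble $(\nu,\theta,p,v)$ and verify the two clauses of Definition \ref{def:equil.def}: optimality of each control (from the pointwise maximization and the stopping argument above) and the consistency identities (\ref{eq.nuplus.fixedpoint.def})--(\ref{eq.numinus.fixedpoint.def}) (from the fixed-point property), completing the construction.
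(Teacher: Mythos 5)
Your overall architecture coincides with the paper's: the two-player RBSDE solution fixes $(\hat\tau,p^a,p^b)$ and the thresholds $V^a,V^b$ (the paper's Lemmas \ref{le:opt.Nonext}, \ref{cor:opt.Nonext} and \ref{le:2player.opt.forExt}, proved by RBSDE comparison rather than your heuristic "more bullish" argument, but in the same spirit), and the residual placement problem for the non-extremal agents is solved pointwise in $(t,\omega)$ by a fixed-point theorem followed by measurable selection. The genuine divergence, and the gap, is in how the discontinuity of the reward $F(\alpha,p,\nu)$ is handled. You propose to mollify on a scale $\varepsilon$, find fixed points of a regularized map $\Phi_\varepsilon$, and let $\varepsilon\downarrow0$. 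As you yourself note, limits of $\Phi_\varepsilon$-fixed points only solve a relaxed set-valued inclusion, and you propose to close the gap by a one-sided monotonicity of the jump ("thickening the book below a level can only lower the incentive to post there"). This is precisely the property the paper says is unavailable: the objective "does not possess the desired monotonicity properties," and no such upgrade argument is given or appears to go through. The paper avoids the limit passage entirely by a different mollification: it replaces $F$ by its Lipschitz (Pasch--Hausdorff) envelope $\hat{F}(\alpha,p,\nu)=\sup_{p'}\bigl(F(\alpha,p',\nu)-|p'-p|\bigr)$ and proves (Lemma \ref{le:existence.LipF}, using only that $F$ is upper semicontinuous with downward jumps) that $\operatorname{argmax}_p\hat F=\operatorname{argmax}_p F$ \emph{exactly}. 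Hence a fixed point for the $\hat F$-problem is already a fixed point for the original problem, with no $\varepsilon\to0$ step and no relaxed inclusion to resolve. Your scheme, as stated, does not establish the exact identity (\ref{eq.nuplus.fixedpoint.def}).

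A second, smaller issue: you pose the fixed point over cumulative profiles built from pure posting levels $Q^-(p_t(1,\alpha))$. Since the best response is an argmax correspondence, the induced aggregation map is not single-valued, and applying Kakutani requires convex values; the paper obtains this by working with joint measures $K(d\alpha,dp)\in\mathcal{M}_{\mu^a}(\hat{\mathbb{A}}\times\Pi)$ (distributed controls), where the argmax of a linear functional over a convex compact set is automatically convex, and then recovers $\kappa_t(\alpha)$ by disintegration. Without this convexification your Schauder/Kakutani step is not justified. The measurable-selection component of your plan is sound in outline and matches the paper's use of the Measurable Maximum theorem and measurable-graph arguments.
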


\begin{remark}
Recall that there always exists a ``trivial" equilibrium, in which all agents stop at time zero. However, such an equilibrium is unrealistic and does not appear to be useful from a modeling perspective. The main contribution of the above result is the existence of a potentially non-trivial equilibrium, in which the duration of the game, $\hat{\tau}$, is determined by a solution to (\ref{eq.RBSDE.Va.Vb}), and there is no reason for it to be zero, in general.
The latter is confirmed by the numerical experiments in Section \ref{se:example}.
\end{remark}

\begin{remark}\label{re:inconsist.answer}
Notice that, as announced in Remark \ref{re:inconsist}, we have constructed an equilibrium, satisfying  
$$
v_t(1,\alpha)= v^a_t = V^a_t,
\,\,\,\,\,\,\,\,\,\,\,\,\,\,\,v_t(-1,\alpha)= v^b_t = V^b_t,
\,\,\,\,\,\,\,\,\,\,\,\,\,\,
\forall\, \alpha\in\mathbb{A}\cup\mathbb{B},\,(t,\omega)\in[0,T]\times\Omega.
$$
Therefore, in such an equilibrium, no agents execute market orders before the end of the game $\hat{\tau}$, and, hence, the empirical distribution $\mu$ remains constant and (\ref{eq.endog.mu.def}) holds.
\end{remark}

The remainder of this section is devoted to the proof of Theorem \ref{th:main}.
First, we show that, in a market $(\nu,\theta)$, satisfying (\ref{eq.nu.equil.def})--(\ref{eq.theta.equil.def}), it is never (strictly) optimal for the agents to post limit sell orders below the ask price or to post limit buy orders above the bid price. In addition, it is never (strictly) optimal for the agents to submit a market order before $\hat{\tau}$. To achieve this, we need to compare the value functions of the agents to $V^a$ and $V^b$, making use of Assumptions \ref{ass.intensities.comparison}, \ref{pa_alpha_vs_alpha_0}.

\begin{lemma}\label{le:opt.Nonext}
Let Assumptions \ref{ass:indep.underPalpha}--\ref{pa_alpha_vs_alpha_0} hold, and let $(\nu,\theta)$ satisfy (\ref{eq.nu.equil.def})--(\ref{eq.theta.equil.def}). 
Given any $\alpha\in\mathbb{A}$ and any admissible control $(p,\tau)$, for a long agent with beliefs $\alpha$, there exists an admissible control $p'$, s.t., $\PP$-a.s., $\text{supp}(p_t')\subset [p^a_t,\infty)$, for all $t\in[0,T]$, and $(p',\hat{\tau})$ does not decrease the objective value, i.e.
$$
J^{(\nu,\theta),(p,\tau)}(1,\alpha) \leq J^{(\nu,\theta),(p',\hat{\tau})}(1,\alpha).
$$
Similarly, given any $\beta\in\mathbb{B}$ and any admissible control $(p,\tau)$, for a short agent with beliefs $\beta$, there exists an admissible control $p'$, s.t., $\PP$-a.s., $\text{supp}(p_t')\subset (-\infty,p^b_t]$, for all $t\in[0,T]$, and $(p',\hat{\tau})$ does not decrease the objective value, i.e.
$$
J^{(\nu,\theta),(p,\tau)}(-1,\beta) \leq J^{(\nu,\theta),(p',\hat{\tau})}(-1,\beta).
$$
\end{lemma}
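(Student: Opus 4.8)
The plan is to recast the long $\alpha$-agent's problem as a control/optimal-stopping problem and to produce a maximiser of the asserted form. In a market obeying (\ref{eq.nu.equil.def})--(\ref{eq.theta.equil.def}) one has $\tau^a=\tau^b=\hat\tau$, so in the representation (\ref{eq.Jrep.long}) the effective horizon for an agent using $(p,\tau)$ is $\rho:=\tau^{v,a}\wedge\hat\tau\le\hat\tau$, and, because $\tau^b=\hat\tau\ge\rho$, the terminal term collapses to $p^b_\rho$. Thus $J^{(\nu,\theta),(p,\tau)}(1,\alpha)=\EE[\int_0^\rho e^{-\int_0^s c^{\alpha}_u(p^a_u\wedge Q^-(p_u),p^b_u)du}h^{\alpha,a}_s(p_s,p^a_s,p^b_s)ds+e^{-\int_0^\rho c^{\alpha}du}p^b_\rho]$, which is dominated by the value $\mathcal V^a_0$ of the problem of maximising this expression over admissible $p$ and stopping times $\rho\le\hat\tau$. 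It therefore suffices to prove that $\mathcal V^a_0$ is attained by some $p'$ with $\supp(p'_t)\subset[p^a_t,\infty)$ together with $\rho=\hat\tau$; then $J^{(p,\tau)}\le\mathcal V^a_0=J^{(p',\hat\tau)}$ is exactly the assertion. I would characterise $\mathcal V^a$ through an optimally-stopped BSDE with lower barrier $p^b$ (Theorem 7.2 of \cite{ELK}), whose driver is the Hamiltonian $H^{\alpha,a}_t(y)=\sup_{\kappa}\big[-c^{\alpha}_t(p^a_t\wedge Q^-(\kappa),p^b_t)\,y+h^{\alpha,a}_t(\kappa,p^a_t,p^b_t)\big]$; two facts must be shown: (a) this supremum is attained on measures supported in $[p^a_t,\infty)$ provided $y\ge V^a_t$, and (b) $\mathcal V^a_t>p^b_t$ on $[0,\hat\tau)$, forcing the optimal stopping to be $\hat\tau$.

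The heart of the argument is a comparison with the extremal value $V^a$. First I would evaluate $h^{\alpha,a}_t(\delta_{p^a_t},p^a_t,p^b_t)$: Assumption \ref{ass.demand.size} forces every registered external buy (resp. sell) to clear exactly at $p^a_t$ (resp. $p^b_t$), since the extremal atom already absorbs the largest possible demand, so $l^{c,a}_t\equiv p^a_t$ and $l^{c,b}_t\equiv p^b_t$ on the relevant jump ranges; feeding this into (\ref{eq.ha.def}) gives $h^{\alpha,a}_t(\delta_{p^a_t},p^a_t,p^b_t)=\lambda^{\alpha}_t\big(p^a_tF^{+,\alpha}_t(p^a_t)+2p^b_tF^{-,\alpha}_t(p^b_t)\big)$, i.e.\ exactly the form of $g^{a}$ in (\ref{eq.ga.def}) but with $\alpha^0$ replaced by $\alpha$. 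Hence the constant strategy ``post at $p^a$ and never stop before $\hat\tau$'' has value $W$ solving, on $[0,\hat\tau]$, the linear BSDE with driver $-c^{\alpha}_t(p^a_t,p^b_t)y+h^{\alpha,a}_t(\delta_{p^a_t},p^a_t,p^b_t)$ and terminal value $p^b_{\hat\tau}=\bar p_{\hat\tau}=V^a_{\hat\tau}$. On $[0,\hat\tau)$ we have $V^a_t>V^b_t$, hence $V^a_t>\bar p_t$ and the reflection in (\ref{eq.BSDE.2player.long.first}) is inactive, so $V^a$ itself solves there the linear BSDE with driver $\mathcal G^a_t(\cdot,p^b_t)$ and the same terminal value. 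Evaluating both drivers at $y=V^a_t$, where $P^a_t(V^a_t)=p^a_t$, each takes the form $A(p^a_t-V^a_t)+B(2p^b_t-V^a_t)$ with $A=\lambda^{\bullet}_tF^{+,\bullet}_t(p^a_t)$, $B=\lambda^{\bullet}_tF^{-,\bullet}_t(p^b_t)$; Lemma \ref{le:Pa.Pb.prop} gives $p^a_t-V^a_t\ge0$, while $p^b_t\le0$ and $p^b_t\le V^b_t\le V^a_t$ give $2p^b_t-V^a_t\le0$, and Assumption \ref{ass.intensities.comparison} yields $A^{\alpha}\ge A^{\alpha^0}$, $B^{\alpha}\le B^{\alpha^0}$, so the $\alpha$-driver is $\ge$ the $\alpha^0$-driver at $y=V^a_t$. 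By the comparison theorem, $W\ge V^a$, whence $\mathcal V^a\ge W\ge V^a>p^b$ on $[0,\hat\tau)$, which is fact (b) and also shows the relevant continuation value obeys $\mathcal V^a_t\ge V^a_t$.

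For (a) I would study the single-order driver $G(p)=-c^{\alpha}_t(p^a_t\wedge p,p^b_t)\,Y+h^{\alpha,a}_t(\delta_p,p^a_t,p^b_t)$ at $Y=\mathcal V^a_t$, which satisfies $Y\ge V^a_t$. For $p\le p^a_t$ the order fills at its own price, so $G(p)=\lambda^{\alpha}_t(p-Y)F^{+,\alpha}_t(p)+\mathrm{const}$; since $P^a_t$ is non-decreasing and $Y\ge V^a_t$ we have $P^a_t(Y)\ge p^a_t$, so Lemma \ref{lemma.opt.price.levels} (together with a derivative check for $p<Y$) makes $p\mapsto(p-Y)F^{+,\alpha}_t(p)$ non-decreasing on $[0,p^a_t]$, maximised at $p=p^a_t$. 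For $p>p^a_t$ the order never fills (clearing stays at $p^a_t$) and is marked to $p^a_t+p^b_t$, giving the $p$-independent value $\lambda^{\alpha}_tF^{+,\alpha}_t(p^a_t)(p^a_t+p^b_t-Y)+\mathrm{const}\le G(p^a_t)$ because $p^b_t\le0$. Thus a Dirac at $p^a_t$ maximises over Diracs; extending this to a general $\kappa$ — by replacing $\kappa$ with its push-forward under $z\mapsto z\vee p^a_t$ and checking, via the same monotonicity, that this does not lower $-c^{\alpha}_t(p^a_t\wedge Q^-(\kappa),p^b_t)Y+h^{\alpha,a}_t(\kappa,p^a_t,p^b_t)$ despite the accompanying change in the fill rate $\lambda^{\alpha}_tF^{+,\alpha}_t(Q^-(\kappa))$ — shows the supremum defining $H^{\alpha,a}_t$ is attained in $[p^a_t,\infty)$, with the concrete maximiser $\delta_{p^a_t}$.

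Combining (a) and (b): the optimally-stopped BSDE for $\mathcal V^a$ does not reflect before $\hat\tau$, so its optimal stopping is $\hat\tau$ and its optimal feedback control is supported in $[p^a,\infty)$; hence $\mathcal V^a_0=J^{(p',\hat\tau)}$ for such a $p'$, and $J^{(p,\tau)}\le\mathcal V^a_0=J^{(p',\hat\tau)}$. The short-agent claim follows by the mirror-image argument, interchanging the two sides of the book and invoking the $\beta$-parts of Assumptions \ref{ass.intensities.comparison} and \ref{pa_alpha_vs_alpha_0}. I expect the main obstacle to be the middle step: the bound $\mathcal V^a\ge V^a$ relies on the exact clearing-at-the-quote reduction of Assumption \ref{ass.demand.size} (turning $h^{\alpha,a}(\delta_{p^a},\cdot)$ into the $g^{a}$-form) followed by a sign-sensitive application of the comparison theorem through Assumption \ref{ass.intensities.comparison}, and this bound is exactly what licenses the price-location step (a), whose monotonicity only holds once the continuation value is known to sit above $V^a_t$.
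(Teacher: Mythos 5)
Your central comparison step is sound and is essentially the paper's: you compare the driver of the fixed-strategy value $W$ (the value of ``post at $p^a$, stop at $\hat\tau$'') with the driver of $V^a$ evaluated at $y=V^a_t$, use $p^a_t-V^a_t\ge 0$, $2p^b_t-V^a_t\le 0$ and Assumption \ref{ass.intensities.comparison} to get the right sign, and conclude $W\ge V^a>p^b$ on $[0,\hat\tau)$, so stopping before $\hat\tau$ is suboptimal; likewise your use of Lemma \ref{lemma.opt.price.levels} together with ``$Y\ge V^a_t$ implies $P^a_t(Y)\ge p^a_t$'' to push mass from below $p^a_t$ up to $p^a_t$ is exactly the paper's mechanism. (One justification is off but repairable: $l^{c,b}_t\equiv p^b_t$ on $\supp(f^\alpha_t)\cap(-\infty,p^b_t]$ does not follow from Assumption \ref{ass.demand.size} alone — it also needs the containment of the negative support of $f^\alpha$ in that of $f^{\alpha^0}$, which is forced by Assumption \ref{ass.intensities.comparison}; the paper sidesteps this by bounding $\lambda^\alpha_t\int f^\alpha_t l^{c,b}_t$ from below rather than evaluating it.)

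The genuine gap is in your fact (a) and in the architecture that relies on it. The claim that for $p>p^a_t$ ``the order never fills (clearing stays at $p^a_t$)'' is false for non-extremal $\alpha$: Assumption \ref{ass.demand.size} caps the demand only over the jump supports of $f^{\alpha^0}$ and $f^{\beta^0}$, while Assumption \ref{ass.intensities.comparison} allows $\supp(f^{\alpha}_t)\cap\RR_+$ to extend strictly beyond $\supp(f^{\alpha^0}_t)\cap\RR_+$, so an $\alpha$-agent assigns positive probability to jumps $u$ with $l^{c,a}_t(u)>p^a_t$, for which an order above the ask does fill. (This is precisely why the non-extremal agents optimally post above $p^a$ in Subsection \ref{subse:fixedpoint}.) Hence $\delta_{p^a_t}$ is not in general a maximiser of your Hamiltonian $H^{\alpha,a}_t$, and — more importantly — your chain $J^{(p,\tau)}\le\mathcal V^a_0=J^{(p',\hat\tau)}$ requires the supremum over the infinite-dimensional control set $\mathcal{P}(\RR)$ to be \emph{attained} by a progressively measurable $p'$ supported in $[p^a_t,\infty)$, which you have not established; that attainment/measurable-selection problem is exactly the hard part deferred to Subsection \ref{subse:fixedpoint} and cannot be assumed at this stage. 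The lemma does not need attainment, and the paper's proof avoids it: for the \emph{given} $p$ it builds $p'$ explicitly, first by switching the generator with that of $\delta_{p^a}$ (taking $\bar{\mathcal G}^{\alpha,p'}_t=\bar{\mathcal G}^{\alpha,p^a}_t\vee\bar{\mathcal G}^{\alpha,p}_t$) to force $Y^{\alpha,p'}\ge V^a\vee Y^{\alpha,p}$, and then by replacing the restriction of $p_t$ to $(-\infty,p^a_t)$ with an atom at $p^a_t$, each time invoking only the comparison principle for RBSDEs with fixed controls. Restructuring your argument this way — comparing $Y^{\alpha,p}$ with $Y^{\alpha,p'}$ for a $p'$ constructed from the given $p$, rather than going through the Bellman value $\mathcal V^a$ — closes the gap.
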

The proof of the above lemma is given in the appendix.
This lemma has a straight-forward but useful corollary.

\begin{cor}\label{cor:opt.Nonext}
Let Assumptions \ref{ass:indep.underPalpha}--\ref{pa_alpha_vs_alpha_0} hold, and let $(\nu,\theta)$ satisfy (\ref{eq.nu.equil.def})--(\ref{eq.theta.equil.def}). 
Given any $\alpha\in\mathbb{A}$, let $(p,\tau)$ be an optimal strategy for the long agents with beliefs $\alpha$, in the class of all admissible strategies satisfying: $\PP$-a.s. $\text{supp}(p_t)\subset[p^a_t,\infty)$, for all $t\in[0,T]$, and $\tau=\hat{\tau}$. Then $(p,\tau)$ is optimal in the class of all admissible strategies, in the sense of Definition \ref{def:opt}.
Similarly, given any $\beta\in\mathbb{B}$, let $(p,\tau)$ be an optimal strategy for the short agents with beliefs $\beta$, in the class of all admissible strategies satisfying: $\PP$-a.s. $\text{supp}(p_t)\subset(-\infty,p^b_t]$, for all $t\in[0,T]$, and $\tau=\hat{\tau}$. Then $(p,\tau)$ is optimal in the class of all admissible strategies, in the sense of Definition \ref{def:opt}.
\end{cor}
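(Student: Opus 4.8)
The plan is to derive the corollary from Lemma \ref{le:opt.Nonext} by a short transitivity argument, with no additional analysis required. Fix $\alpha\in\mathbb{A}$ and let $(p,\hat{\tau})$ be optimal for the long agent within the restricted class $\mathcal{C}$ of admissible strategies whose limit-order component is supported on $[p^a_t,\infty)$ and whose stopping time equals $\hat{\tau}$. To establish optimality in the sense of Definition \ref{def:opt}, I would take an \emph{arbitrary} admissible control $(p',\tau')$ and show that its objective cannot exceed that of $(p,\hat{\tau})$.

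First I would invoke Lemma \ref{le:opt.Nonext} applied to $(p',\tau')$: it produces an admissible control $p''$ with $\text{supp}(p''_t)\subset[p^a_t,\infty)$, $\PP$-a.s.\ for all $t$, such that
$$
J^{(\nu,\theta),(p',\tau')}(1,\alpha) \le J^{(\nu,\theta),(p'',\hat{\tau})}(1,\alpha).
$$
The key observation is that $(p'',\hat{\tau})$ then lies in the restricted class $\mathcal{C}$, since it has the required support property and its stopping time is exactly $\hat{\tau}$. Consequently, the assumed optimality of $(p,\hat{\tau})$ within $\mathcal{C}$ gives
$$
J^{(\nu,\theta),(p'',\hat{\tau})}(1,\alpha) \le J^{(\nu,\theta),(p,\hat{\tau})}(1,\alpha).
$$
Chaining the two inequalities yields $J^{(\nu,\theta),(p',\tau')}(1,\alpha) \le J^{(\nu,\theta),(p,\hat{\tau})}(1,\alpha)$, and since $(p',\tau')$ was arbitrary, $(p,\hat{\tau})$ is optimal over all admissible controls. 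The argument for a short agent with beliefs $\beta\in\mathbb{B}$ is identical, using the second half of Lemma \ref{le:opt.Nonext} and replacing the support constraint $[p^a_t,\infty)$ by $(-\infty,p^b_t]$.

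There is essentially no obstacle here: the entire content is packaged into Lemma \ref{le:opt.Nonext}, and the corollary is a purely logical consequence. The only point that warrants a check is that the intermediate control $(p'',\hat{\tau})$ genuinely belongs to the restricted class over which $(p,\hat{\tau})$ is assumed optimal — this is guaranteed precisely because the lemma delivers both the support property and the specific stopping time $\hat{\tau}$ (rather than merely some improving stopping time). If Definition \ref{def:opt} is read as requiring a $\PP$-a.s.\ inequality of the (conditional) objectives, the same chain of inequalities holds $\PP$-a.s., so no modification is needed.
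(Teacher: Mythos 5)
Your proof is correct and is precisely the argument the paper intends: the corollary is stated as a ``straight-forward'' consequence of Lemma \ref{le:opt.Nonext}, and your transitivity chain --- improve an arbitrary $(p',\tau')$ to some $(p'',\hat{\tau})$ in the restricted class via the lemma, then dominate it by the optimizer within that class --- is exactly that consequence. No gaps.
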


Thus, no matter which limit order strategy $p$ an agent is using, it is optimal for her to choose the following stopping threshold: 
$$
\hat{v}(s) = V^a \bone_{\{s>0\}} + V^b \bone_{\{s<0\}}.
$$
This implies that, given a LOB $\nu$ in the form (\ref{eq.nu.equil.def}) and the stopping strategy $\hat{v}$ as above, if an optimal limit order strategy $\hat{p}(s,\alpha)$ exists for any state $(s,\alpha)$, then $(\hat{p}(s,\alpha),\hat{v})$ form an optimal control for the agents in state $(s,\alpha)$, in the sense of Definition \ref{def:opt}. Moreover, in such a case, $\theta$, given by (\ref{eq.theta.equil.def}), satisfies the condition (\ref{eq.numinus.fixedpoint.def}). 
Next, we need to construct a LOB $\nu$, in the form (\ref{eq.nu.equil.def}), and the associated optimal limit order strategies for all agents, s.t. (\ref{eq.nuplus.fixedpoint.def}) is satisfied.
In particular, the following lemma, whose proof is postponed to the appendix, shows that, for any $\nu$ in the form (\ref{eq.nu.equil.def}), the strategies $(\delta_{p^a},V^a)$ and $(\delta_{p^b},V^b)$ are optimal for the extremal agents.

\begin{lemma}\label{le:2player.opt.forExt}
Let Assumptions \ref{ass:indep.underPalpha}--\ref{ass.demand.size} hold, and let $(\nu,\theta)$ satisfy (\ref{eq.nu.equil.def})--(\ref{eq.theta.equil.def}). Then, given $(\nu,\theta)$, the strategy $(\delta_{p^a},V^a)$ is optimal for a long agent with beliefs $\alpha^0$, and the strategy $(\delta_{p^b},V^b)$ is optimal for a short agent with beliefs $\beta^0$, in the sense of Definition \ref{def:opt}.
\end{lemma}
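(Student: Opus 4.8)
The plan is to reduce the claim, via Corollary \ref{cor:opt.Nonext}, to an optimization over limit-order distributions supported at or above the best quote, and then to show that concentrating all the mass exactly at the quote is pointwise optimal. I treat the long agent in detail; the short agent is symmetric, using $g^b$, (\ref{eq.hb.def}), and the short-side parts of Assumption \ref{ass.demand.size}. First I would invoke Corollary \ref{cor:opt.Nonext} with $\alpha=\alpha^0$: it suffices to prove that $(\delta_{p^a},V^a)$ is optimal within the restricted class of admissible controls $(p,\tau)$ with $\tau=\hat{\tau}$ and $\supp(p_t)\subset[p^a_t,\infty)$ for all $t$. For any such control the threshold $v=V^a$ gives $\tau^{v,a}=\hat{\tau}$, and since $(\nu,\theta)$ satisfies (\ref{eq.theta.equil.def}) we have $\tau^a=\tau^b=\hat{\tau}$; consequently the terminal term in (\ref{eq.Jrep.long}) reduces to $\exp(-\int_0^{\hat{\tau}}c^{\alpha^0}_u(p^a_u,p^b_u)\,du)\,p^b_{\hat{\tau}}$ with $p^b_{\hat{\tau}}=\bar{p}_{\hat{\tau}}$ by (\ref{eq.pa.pb.equil.def}). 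Moreover $\supp(p_t)\subset[p^a_t,\infty)$ forces $Q^-(p_t)\ge p^a_t$, so $p^a_u\wedge Q^-(p_u)=p^a_u$ and the discount rate $c^{\alpha^0}_u(p^a_u,p^b_u)$ is the \emph{same} for every control in the class. Thus the only dependence of the objective on $p$ is through the instantaneous reward rate $h^{\alpha^0,a}_s(p_s,p^a_s,p^b_s)$, and it is enough to maximize this quantity, pointwise in $(s,\omega)$, over probability measures $\kappa$ supported on $[p^a_s,\infty)$.

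The key step is to show, using Assumption \ref{ass.demand.size}, that the clearing price is pinned at the best quote. Given the structure (\ref{eq.nu.equil.def}), the mass of $\nu^a_t$ strictly below $p^a_t$ vanishes while the atom at $p^a_t$ has size at least $\mu^a(\{\alpha^0\})$; combined with $D_t(p-u)\le D_t(-Q^+(f^{\alpha^0}_t(x)\,dx))\le\mu^a(\{\alpha^0\})$ for every $p\ge0$ and every $u\le Q^+(f^{\alpha^0}_t(x)\,dx)$ (since then $p-u\ge -Q^+$ and $D_t$ is decreasing), this yields $l^{c,a}_t(u)=p^a_t$ for all $u>p^a_t$, and symmetrically $l^{c,b}_t(u)=p^b_t$ for $u<p^b_t$ from the short-side inequality. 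Substituting $l^{c,a}_s(u)\equiv p^a_s$ into (\ref{eq.ha.def}) and writing $m=\kappa(\{p^a_s\})$, the inner bracket collapses to $p^a_s\,m+(p^a_s+p^b_s)(1-m)=p^a_s+p^b_s(1-m)$. Since $p^b_s\le0$, this is maximized at $m=1$, i.e.\ at $\kappa=\delta_{p^a_s}$, giving $h^{\alpha^0,a}_s(\kappa,p^a_s,p^b_s)\le h^{\alpha^0,a}_s(\delta_{p^a_s},p^a_s,p^b_s)=g^a_s(p^a_s,p^b_s)$; the final equality is a direct computation in which the external-sell term of $h^{\alpha^0,a}$ (independent of $\kappa$) contributes $2\lambda^{\alpha^0}_s p^b_s F^{-,\alpha^0}_s(p^b_s)$ and the first term contributes $\lambda^{\alpha^0}_s p^a_s F^{+,\alpha^0}_s(p^a_s)$, matching (\ref{eq.ga.def}).

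Finally I would assemble these pieces. Because the discount and terminal terms coincide with those of the two-player objective and $h^{\alpha^0,a}_s\le g^a_s$ pointwise, every control in the restricted class satisfies $J^{(\nu,\theta),(p,\hat{\tau})}(1,\alpha^0)\le\tilde{J}^{a,(p^b,\bar{p}),(p^a,\hat{\tau})}$, with equality when $p=\delta_{p^a}$. By (\ref{eq.pa.pb.equil.def}), $p^a_t=\hat{p}^a_t=P^a_t(V^a_t)$ on $\{t<\hat{\tau}\}$, so the right-hand side is exactly the two-player objective evaluated at the optimal control $(\hat{p}^a,\hat{\tau})$, which by Lemma \ref{sab} equals $V^a_0$. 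Hence $(\delta_{p^a},V^a)$ attains the maximum over the restricted class, and Corollary \ref{cor:opt.Nonext} upgrades this to optimality in the full class in the sense of Definition \ref{def:opt}.

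The main obstacle I anticipate is the clearing-price identity $l^{c,a}_t(u)=p^a_t$: one must verify carefully, at the boundary $p=p^a_t$ and with the precise open/closed conventions of (\ref{eq.lca.def}) and (\ref{eq.ha.def}), that the external demand is entirely absorbed by the extremal atom so that no portion of the agent's order clears strictly above $p^a_t$. Confirming that the favorable convention (the atom sitting exactly at the quote counts as executed, at price $p^a_t$) is precisely what forces $h^{\alpha^0,a}_s(\delta_{p^a_s},\cdot)=g^a_s$, and disposing of the measure-zero endpoint ambiguities, is the delicate part of the argument.
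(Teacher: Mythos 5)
Your proposal is correct and follows essentially the same route as the paper's proof: reduce via Corollary \ref{cor:opt.Nonext} to controls supported in $[p^a_t,\infty)$ with $\tau=\hat{\tau}$, use Assumption \ref{ass.demand.size} to pin the clearing price at $l^{c,a}_t(u)=u\wedge p^a_t$, and observe that the resulting reward collapses to $p^a_t\,\kappa(\{p^a_t\})+(p^a_t+p^b_t)\,\kappa((p^a_t,\infty))$, which is maximized by $\delta_{p^a_t}$ since $p^b_t\le 0$. The only (immaterial) difference is that you exploit the control-independence of the discount in the restricted class to maximize the reward rate directly under the time integral, whereas the paper performs the same pointwise maximization on the RBSDE generator and invokes the comparison principle.
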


The remaining steps are carried out in the next subsection.

\subsection{Equilibrium strategies of the non-extremal agents}
\label{subse:fixedpoint}

In this subsection we construct the measure-valued processes $(\nu^a,\nu^b)$, in the form (\ref{eq.nu.equil.def}), and a progressively measurable random field $(\hat{p}_t(s,\alpha))$, such that the controls $(\hat{p}(1,\alpha),V^a)$ and $(\hat{p}(-1,\alpha),V^b)$ are optimal for the non-extremal agents with beliefs $\alpha$, long and short, respectively (recall that the optimal strategies for the extremal agents are constructed in Lemma \ref{le:2player.opt.forExt}), and the fixed-point constraint (\ref{eq.nuplus.fixedpoint.def}) is satisfied.
In view of Lemma \ref{le:opt.Nonext}, we can restrict the possible controls $p$ to the those satisfying: $\text{supp}(p_t)\subset[p^a_t,\infty)$, for all $t\in[0,T]$. It is also obvious that we can restrict the support of $p_t$ to be in $[-C_p,C_p]$.
As the stopping strategy is fixed, for any $\alpha\in\hat{\mathbb{A}}$, the objective of a long player reduces to $\bar{J}^{\alpha,(p)}_0$, where
\begin{equation*}
\bar{J}^{\alpha,(p)}_t =
\EE\Big[\int_t^{T} \exp\left(-\int_t^s \bar{c}^{\alpha}_u\left(p^a_u,p^b_u\right)du\right) \bar{h}^{\alpha,a}_s(p_s,p^a_s,p^b_s)ds
+ \exp\left(-\int_t^{T} \bar{c}^{\alpha}_u\left(p^a_u,p^b_u\right) du\right)
p^b_{\hat{\tau}} 
\vert \mathcal{F}_t\Big],
\end{equation*}
\begin{equation*}
\bar{c}_t^{\alpha}(p^a_t,p^b_t) = c_t^{\alpha}(p^a_t,p^b_t)\bone_{\{t\le\hat{\tau}\}},
\quad
\bar{h}^{\alpha,a}_t(p_t,p^a_t,p^b_t)= h^{\alpha,a}_t(p_t,p^a_t,p^b_t)\bone_{\{t\le\hat{\tau}\}},
\end{equation*}
and $c^{\alpha}$ and $h^{\alpha,a}$ defined in (\ref{eq.F.c.def}) and (\ref{eq.ha.def}).
Due to Assumptions \ref{ass.intensities.comparison} and \ref{ass.demand.size}, we have
\begin{equation*}
l^{c,b}_t(x) = \inf\left\{p>Q^-(\nu^b_t)\,:\,-D_t(p-x)>\nu^b_t((p,\infty))\right\}
= p^b_t\vee x,\quad\forall x\in \text{supp}(f^{\alpha}_t).
\end{equation*}
In addition, for any $z\geq p^a_t$,
$$
\{u>0\,:\,l^{c,a}_t(u)\geq z\} = \{u>0\,:\,u\geq z - D^{-1}_t\left(\nu^a_t([p^a_t,z)) \right)\},
$$
and, hence, for any $B\geq p^a_t$,
$$
\int_{p^a_t}^{B} f^{\alpha}_t(u) (l^{c,a}_t(u)-p^a_t) du
= \int_{0}^{l^{c,a}_t(B)-p^a_t} \int_{u + p^a_t - D^{-1}_t\left(\nu^a_t([p^a_t,p^a_t + u))\right)}^{C_p}  f^{\alpha}_t(y) dy du.
$$
The above observations allow us to simplify the objective:
$$
h^{\alpha,a}_t(p_t,p^a_t,p^b_t) 
= \lambda^{\alpha}_t \int_{p^a_t}^{\infty} \Big[
(z - p^b_t) F^{+,\alpha}_t\left( z - D^{-1}_t\left(\nu^a_t([p^a_t,z))\right) \right)
+ p^b_t F^{+,\alpha}_t (p^a_t)
$$
$$
+ \int_{p^a_t}^{z - D^{-1}_t\left( \nu^a_t([p^a_t,z)) \right)} 
f^\alpha_t(u) l^{c,a}_t(u) du 
\Big]p_t(dz)
+ 2 \lambda^{\alpha}_t p^b_t F^{-,\alpha}_t(p^b_t)
$$
$$
= \lambda^{\alpha}_t \int_{p^a_t}^{C_p} \Big[
(z - p^b_t) F^{+,\alpha}_t\left(z - D^{-1}_t\left(\nu^a_t([p^a_t,z))\right)\right)
$$
$$
+\int_{0}^{z-p^a_t} F^{+,\alpha}_t\left( u + p^a_t - D^{-1}_t\left(\nu^a_t([p^a_t,p^a_t + u))\right) \right) du
\Big]p_t(dz)
+ 2 \lambda^{\alpha}_t p^b_t F^{-,\alpha}_t(p^b_t)
+ \lambda^{\alpha}_t p^b_t F^{+,\alpha}_t (p^a_t).
$$
Notice that the above objective does not depend on $\nu^b$ (for a given $p^b$), hence, we can separate the equilibrium problems of the long and short agents (this is only true for the non-extremal agents, of course). For simplicity, we only consider the problem of the long agents -- the short agents can be treated similarly. 
Denote by $\kappa_t$ and $\hat{\nu}^a_t$ the push-forward measures of $p_t$ and $\nu^a_t$, under the mapping $x\mapsto x-p^a_t$. Clearly, the measurability property is preserved by this transformation, hence, we can reformulate the equilibrium problem as a search for $\kappa$ and $\hat{\nu}^a$, with the values in the space of measures with support in $[0,C_p]$. In the new variables, the objective takes a more convenient form. In particular, $h^{\alpha,a}_t(p_t,p^a_t,p^b_t) = \hat{h}^{\alpha,a}_t(\kappa_t,p^a_t,p^b_t)$, where
$$
\hat{h}^{\alpha,a}_t(\kappa_t,p^a_t,p^b_t)
= \lambda^{\alpha}_t \int_{0}^{C_p} \Big[
(z + p^a_t - p^b_t) F^{+,\alpha}_t\left(z + p^a_t - D^{-1}_t\left(\hat{\nu}^a_t([0,z))\right)\right)
$$
$$
+\int_{0}^{z} F^{+,\alpha}_t\left( u + p^a_t - D^{-1}_t\left(\hat{\nu}^a_t([0,u))\right) \right) du
\Big]\kappa_t(dz)
+ 2 \lambda^{\alpha}_t p^b_t F^{-,\alpha}_t(p^b_t)
+ \lambda^{\alpha}_t p^b_t F^{+,\alpha}_t (p^a_t).
$$
Note that $\bar{J}^{\alpha,(p)}$ solves a BSDE with the affine generator
$$
\hat{\mathcal{G}}^{\alpha}_t(y) = \bar{c}^{\alpha}_u\left(p^a_u,p^b_u\right) y
+ \hat{h}^{\alpha,a}_t(\kappa_t,p^a_t,p^b_t).
$$
In order to maximize $\bar{J}^{\alpha,(p)}$, it suffices to find a strategy $\kappa$ which maximizes the above generator. The latter is, in turn, equivalent to maximizing $\hat{h}^{\alpha,a}_t(\cdot,p^a_t,p^b_t)$. Thus, we need to find a progressively measurable random field $(\kappa_t(\alpha))$, with values in $\mathcal{P}(\RR)$ (with the weak topology on it), s.t., for $\mu^a$-a.e. $\alpha\in\hat{\mathbb{A}}$,
\begin{equation}\label{eq.kappa.fixedpt.def}
\kappa_t(\alpha)\in \text{argmax}_{\kappa' \in \psi}  \hat{h}^{\alpha,a}_t(\kappa',p^a_t,p^b_t)
\end{equation}
holds for $dt\times\PP$-a.e. $(t,\omega)$, where $\psi=\left\{  p\in\mathcal{P}(\Pi)\colon \supp(p)\subseteq \Pi\right\}$ and $\Pi=[0,C_p]$.
The standard BSDE results, then, imply that $\kappa(\alpha)$ is optimal for the agents in state $(1,\alpha)$, for $\mu^a$-a.e. $\alpha\in\hat{\mathbb{A}}$. If, in addition, we ensure that the fixed-point constraint (\ref{eq.nuplus.fixedpoint.def}) is satisfied (and a similar construction holds for the short agents), we obtain an equilibrium in the continuum-player game, in the sense of Definition \ref{def:equil.def}.
Notice that we can rewrite
$$
\hat{h}^{\alpha,a}_t(\kappa',p^a_t,p^b_t) 
= \lambda^{\alpha}_t \int_{\RR} F_t(\alpha,p,\hat{\nu}^a_t) \kappa'(dp)
+ 2 \lambda^{\alpha}_t p^b_t F^{-,\alpha}_t(p^b_t)
+ \lambda^{\alpha}_t p^b_t F^{+,\alpha}_t (p^a_t),
$$
\begin{equation}\label{eq.F.def}
F_t(\alpha,p,\hat{\nu}^a_t) =
(p + p^a_t - p^b_t) F^{+,\alpha}_t\left(p + p^a_t - D^{-1}_t\left(\hat{\nu}^a_t([0,p))\right)\right)
+\int_{0}^{p} F^{+,\alpha}_t\left( u + p^a_t - D^{-1}_t\left(\hat{\nu}^a_t([0,u))\right) \right) du.
\end{equation}
Assuming the extremal long agents post limit orders at $p^a$, the fixed-point constraint (\ref{eq.nuplus.fixedpoint.def}) (more precisely, the part of (\ref{eq.nuplus.fixedpoint.def}) that corresponds to the long agents) becomes:
\begin{equation}\label{eq.nuplus.fixedpoint.def.new}
\hat{\nu}^a_t([0,x]) = \mu^a(\{\alpha_0\}) + \int_{\hat{\mathbb{A}}} \kappa_t(\alpha;[0,x]) \mu^a(\text{d}\alpha),\quad \forall x\geq0.
\end{equation}
The above equations can be solved separately for different $(t,\omega)$, hence, to this end, we fix $(t,\omega)$ and omit the $t$ subscript whenever it causes no ambiguity. The statements that follow hold for a.e. $(t,\omega)$.
It turns out that it is more convenient to search for a measure
$$
K(d\alpha,dx) = \kappa(\alpha;dx) \mu^a(d\alpha),
$$ 
which is an element of $\mathcal{M}_{\mu^a}\left(\hat{\mathbb{A}}\times \Pi\right)$, the space of finite sigma-additive measures on $\hat{\mathbb{A}}\times \Pi$, with the first marginal $\mu^a$. Transition from $K$ to $\kappa$ is accomplished via the usual disintegration. 
Thus, for a.e. $(t,\omega)$, we need to find $(K,\nu)\in \mathcal{M}_{\mu^a}\left(\hat{\mathbb{A}}\times \Pi\right)\times\mathcal{M}_{\mu^a(\mathbb{A})}\left(\Pi\right)$ solving the following system
\begin{equation}\label{eq.fixedpt.def}
\left\{
\begin{array}{l}
{K\in \operatorname{argmax}_{K\in\mathcal{M}_{\mu^a}\left(\hat{\mathbb{A}}\times \Pi\right)}
\int F(\alpha,p,\nu) K(d\alpha,dp),\phantom{\frac{\frac{1}{2}}{2}}}\\
{\nu(dx) = \mu^a(\{\alpha_0\}) \delta_{0}(dx) + K\left(\hat{\mathbb{A}}\times dx\right),\phantom{\frac{\frac{1}{2}}{2}}}
\end{array}
\right.
\end{equation}
where $\mathcal{M}_{\mu^a(\mathbb{A})}\left( \Pi \right)$ is the space of finite sigma-additive measures on $\Pi$, with the total mass $\mu^a(\mathbb{A}) = \mu^a(\{\alpha_0\})+\mu^a(\hat{\mathbb{A}})$.
The above system can be formulated as a fixed-point problem, in an obvious way.
However, the main challenge in solving this problem stems from the fact that $F(\alpha,\cdot,\cdot)$ is not continuous: e.g. it may be discontinuous in $p$, if $\nu$ has atoms. 
Therefore, we replace $F$ by its ``mollified" version:
$$
\hat{F}(\alpha,p,\nu)=\sup_{p'\in\Pi} F(\alpha,p',\nu)-\left\vert p'-p \right\vert.
$$
The following lemma shows that we can replace $F$ by $\hat{F}$ in (\ref{eq.fixedpt.def}), and any solution to the new problem will solve the original one.

\begin{lemma}\label{le:existence.LipF}
For any $\alpha\in\hat{\mathbb{A}}$ and $\nu\in \mathcal{M}_{\mu^a(\mathbb{A})}\left(\Pi\right)$, the function $p\mapsto\hat{F}(\alpha,p,\nu)$ is $1$-Lipschitz in $p\in\Pi$, and
$$
\operatorname{argmax}_{p\in\Pi} \hat{F}(\alpha,p,\nu)=\operatorname{argmax}_{p\in\Pi} F(\alpha,p,\nu).
$$
\end{lemma}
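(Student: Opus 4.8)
The plan is to prove the two claims separately: first the $1$-Lipschitz property of $\hat F(\alpha,\cdot,\nu)$, and then the equality of the two argmax sets. The Lipschitz property is the easy half and follows from a standard inf/sup-convolution argument. Writing $\hat F(\alpha,p,\nu)=\sup_{p'\in\Pi}\bigl(F(\alpha,p',\nu)-|p'-p|\bigr)$, I would fix two points $p_1,p_2\in\Pi$ and, for any $p'$, use the triangle inequality $|p'-p_1|\le|p'-p_2|+|p_1-p_2|$ to obtain $F(\alpha,p',\nu)-|p'-p_1|\le \bigl(F(\alpha,p',\nu)-|p'-p_2|\bigr)+|p_1-p_2|$. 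Taking the supremum over $p'\in\Pi$ on both sides yields $\hat F(\alpha,p_1,\nu)\le\hat F(\alpha,p_2,\nu)+|p_1-p_2|$, and by symmetry $|\hat F(\alpha,p_1,\nu)-\hat F(\alpha,p_2,\nu)|\le|p_1-p_2|$, which is exactly $1$-Lipschitz continuity.

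For the argmax identity, the first thing I would establish is the pointwise domination $\hat F(\alpha,p,\nu)\ge F(\alpha,p,\nu)$, which is immediate by taking $p'=p$ in the defining supremum (the penalty $|p'-p|$ vanishes). Next I would show that the two suprema over $\Pi$ agree, i.e. $\sup_{p\in\Pi}\hat F(\alpha,p,\nu)=\sup_{p\in\Pi}F(\alpha,p,\nu)$. The inequality ``$\ge$'' follows from the domination just noted; for ``$\le$'', observe that for every fixed $p$ we have $\hat F(\alpha,p,\nu)=\sup_{p'}\bigl(F(\alpha,p',\nu)-|p'-p|\bigr)\le\sup_{p'}F(\alpha,p',\nu)$, and taking the supremum over $p$ preserves this bound. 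Denote the common value by $M:=\sup_{p\in\Pi}F(\alpha,p,\nu)=\sup_{p\in\Pi}\hat F(\alpha,p,\nu)$; since $\Pi=[0,C_p]$ is compact and $\hat F(\alpha,\cdot,\nu)$ is continuous (being Lipschitz), the supremum $M$ is attained, so both argmax sets are nonempty.

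The crux is then the two-way inclusion of maximizer sets. If $p^\star\in\operatorname{argmax}_p F(\alpha,p,\nu)$, then $F(\alpha,p^\star,\nu)=M$, and by domination $M\le\hat F(\alpha,p^\star,\nu)\le M$, forcing $\hat F(\alpha,p^\star,\nu)=M$, so $p^\star$ maximizes $\hat F$ as well. Conversely, suppose $\hat F(\alpha,p^\star,\nu)=M$ but, for contradiction, $F(\alpha,p^\star,\nu)<M$. Since $\hat F(\alpha,p^\star,\nu)=M$ is the supremum of $F(\alpha,p',\nu)-|p'-p^\star|$ over $p'$, there is a sequence $p'_n$ with $F(\alpha,p'_n,\nu)-|p'_n-p^\star|\to M$; because each term is at most $F(\alpha,p'_n,\nu)\le M$, we need $|p'_n-p^\star|\to 0$, i.e. $p'_n\to p^\star$, while $F(\alpha,p'_n,\nu)\to M$. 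This is precisely where the discontinuity of $F$ in its $p$-argument must be handled: the subtlety is that $F(\alpha,\cdot,\nu)$ may jump at $p^\star$ (when $\nu$ has an atom, via the term $F^{+,\alpha}_t(p+p^a_t-D_t^{-1}(\hat\nu^a_t([0,p))))$), so the naive conclusion $F(\alpha,p^\star,\nu)\ge\limsup_n F(\alpha,p'_n,\nu)=M$ need not follow from ordinary continuity. The key observation that rescues the argument is that $F(\alpha,\cdot,\nu)$ is \emph{upper semicontinuous} in $p$: inspecting \reff{eq.F.def}, the integral term $\int_0^p F^{+,\alpha}_t(\cdots)\,du$ is continuous in $p$, while $p\mapsto F^{+,\alpha}_t(p+p^a_t-D_t^{-1}(\hat\nu^a_t([0,p))))$ is the composition of the non-increasing (hence upper semicontinuous in the relevant monotone sense) map $p\mapsto\nu([0,p))$, the continuous $D^{-1}$ (Assumption \ref{ass:existence.demand}), and the non-increasing $F^{+,\alpha}_t$; one checks that the only discontinuities are upward jumps of $F(\alpha,\cdot,\nu)$ from the left at atoms of $\nu$, yielding upper semicontinuity. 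Granting upper semicontinuity, $F(\alpha,p^\star,\nu)\ge\limsup_n F(\alpha,p'_n,\nu)=M$, contradicting $F(\alpha,p^\star,\nu)<M$. Hence every maximizer of $\hat F$ is a maximizer of $F$, completing the reverse inclusion and the proof. I expect verifying this upper-semicontinuity of $F(\alpha,\cdot,\nu)$ — tracking how the atoms of $\nu$ interact with the monotonicity of $F^{+,\alpha}_t$ and the continuity of $D^{-1}$ — to be the only genuinely delicate point; the rest is routine sup-convolution bookkeeping.
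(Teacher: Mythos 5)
Your proof is correct and follows essentially the same route as the paper's: both the Lipschitz claim and the argmax identity rest on standard sup-convolution estimates together with the upper semicontinuity of $F(\alpha,\cdot,\nu)$ in $p$, which the paper likewise justifies by observing that $F$ is left-continuous with only downward jumps (your maximizing-sequence argument for the reverse inclusion is just the sequential version of the paper's neighbourhood argument). One small slip in your USC discussion: $p\mapsto\nu([0,p))$ is non-\emph{de}creasing, not non-increasing, and the resulting discontinuities of $F$ are downward jumps immediately to the right of atoms (with $F$ left-continuous) rather than ``upward jumps from the left'' --- but this does not affect the conclusion, since either description yields $F(p_0)\ge\limsup_{p\to p_0}F(p)$ and hence the upper semicontinuity you need.
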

\begin{proof}
For convenience, we drop the dependence on $(\alpha,\nu)$.
The first statement is clear from the definition. It is also clear that $\sup_{p\in\Pi} \hat{F}(p)=\sup_{p\in\Pi} F(p)$, and we denote this supremum by $S$. As $\hat{F}$ is continuous in $\Pi$, it achieves its supremum, hence, it suffices to show that $F(p_0)=S$, for every $p_0$ such that $\hat{F}(p_0)=S$ (note that the opposite implication is obvious). Assume the contrary, then $F(p)\le S-\varepsilon$, for some $\varepsilon>0$ and all $p\in \Pi\cap (p_0-\varepsilon,p_0+\varepsilon)$ by the upper semi-continuity of $F$. Then, we obtain $\hat{F}(p_0)\le S-\varepsilon$, which is a contradiction.
To see that $F$ is upper semi-continuous, notice that it is left-continuous, with only downward jumps, which follows directly from (\ref{eq.F.def}).
\qed
\end{proof}

Summarizing the above discussion, to find a solution to (\ref{eq.fixedpt.def}), it suffices to find a fixed point of the following correspondence
$$
\mathcal{M}_{\mu^a}\left(\hat{\mathbb{A}}\times \Pi\right) \ni K\mapsto \tilde{K}\left(\tilde{\nu}(K)\right),
$$ 
where
\begin{equation}\label{existence.defn.nu}
\tilde{\nu}(K;dx) = \mu(\{\alpha_0\})\delta_{p^a}(dx) + K(\hat{\mathbb{A}}\times dx) \in \mathcal{M}_{\mu^a(\mathbb{A})}\left(\Pi\right)
\end{equation}
is single-valued, and 
\begin{equation}\label{existence.defn.tildeK}
\tilde{K}(\nu)=\operatorname{argmax}\limits_{K\in\mathcal{M}_{\mu^a}\left(\hat{\mathbb{A}}\times \Pi\right)}
\int \hat{F}(\alpha,p,\nu) K(d\alpha,dp)\subset \mathcal{M}_{\mu^a}\left(\hat{\mathbb{A}}\times \Pi\right).
\end{equation}

\begin{proposition}\label{prop:existence.prop}
Let Assumptions \ref{ass:existence.top.AB}, \ref{ass:existence.demand} hold. Then, the correspondence $\mathcal{K}\colon K\mapsto \tilde{K}\left(\tilde{\nu}(K)\right)$, defined by (\ref{existence.defn.nu})--(\ref{existence.defn.tildeK}), has a fixed point.
\end{proposition}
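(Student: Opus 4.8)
The plan is to obtain the fixed point from the Kakutani--Fan--Glicksberg theorem for set-valued maps, applied on the space of measures $\mathcal{M}_{\mu^a}(\hat{\mathbb{A}}\times\Pi)$ equipped with the topology of weak convergence, where $\Pi=[0,C_p]$. First I would verify that this space is a nonempty, convex, compact subset of a locally convex Hausdorff topological vector space. Nonemptiness is clear (e.g.\ $\mu^a\otimes\delta_0$ has first marginal $\mu^a$); convexity is immediate, since the marginal and total-mass constraints are affine. For compactness, note that $\hat{\mathbb{A}}\times\Pi$ is a compact metric space (Assumption \ref{ass:existence.top.AB}), so the total mass of any admissible $K$ is fixed at $\mu^a(\hat{\mathbb{A}})$; the ball of that radius is weak-$*$ compact and metrizable by Banach--Alaoglu together with separability of $C(\hat{\mathbb{A}}\times\Pi)$, and the constraint ``first marginal $=\mu^a$'' is preserved under weak limits, hence defines a closed subset.

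Next I would record the structural properties of the two pieces of the correspondence. The map $\tilde{\nu}$ in (\ref{existence.defn.nu}) is single-valued and weak-$*$ continuous, being the sum of a fixed atom and the second marginal of $K$ (marginalization is weak-$*$ continuous). The map $\tilde{K}$ in (\ref{existence.defn.tildeK}) is the argmax of the \emph{linear} functional $K\mapsto\int \hat{F}(\alpha,p,\nu)\,K(d\alpha,dp)$. For fixed $\nu$, continuity of $(\alpha,p)\mapsto\hat{F}(\alpha,p,\nu)$ on the compact set $\hat{\mathbb{A}}\times\Pi$ --- which follows from the $1$-Lipschitz property in $p$ of Lemma \ref{le:existence.LipF} and from the continuity in $\alpha$ inherited from the $\mathbb{L}^1$-continuity of $\alpha\mapsto f^{\alpha}$ (Assumption \ref{ass:existence.top.AB}) --- makes this functional weak-$*$ continuous. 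Hence its maximum over the compact convex domain is attained, so $\tilde{K}(\nu)$ is nonempty; being the argmax of a linear functional over a convex set, it is convex and closed. Composing with the continuous single-valued $\tilde{\nu}$, the correspondence $\mathcal{K}$ has nonempty, convex, compact values.

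The crux is to verify that $\mathcal{K}$ has closed graph (equivalently, upper hemicontinuity). Since $\tilde{\nu}$ is continuous, it suffices to prove that $\nu\mapsto\tilde{K}(\nu)$ is upper hemicontinuous, which by Berge's maximum theorem reduces to the joint continuity of $(\alpha,p,\nu)\mapsto\hat{F}(\alpha,p,\nu)$ on $\hat{\mathbb{A}}\times\Pi\times\mathcal{M}_{\mu^a(\mathbb{A})}(\Pi)$. I would prove the continuity in $\nu$ as follows. If $\nu_n\to\nu$ weakly, then $\nu_n([0,u))\to\nu([0,u))$ for every $u$ outside the (at most countable) set of atoms of $\nu$; combined with the uniform continuity of $D^{-1}$ supplied by the modulus $\epsilon$ of Assumption \ref{ass:existence.demand} and the continuity of $F^{+,\alpha}$, the integrand $u\mapsto F^{+,\alpha}_t(u+p^a_t-D^{-1}_t(\nu_n([0,u))))$ converges, for a.e.\ $u$, to its value at $\nu$, while staying bounded by $1$. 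Bounded convergence then yields that the integral term of $F$ in (\ref{eq.F.def}) converges \emph{uniformly} in $p$, whereas the boundary (evaluation) term converges only for a.e.\ $p$. The elementary estimate $|\hat{F}(\alpha,p,\nu_n)-\hat{F}(\alpha,p,\nu)|\le\sup_{p'\in\Pi}|F(\alpha,p',\nu_n)-F(\alpha,p',\nu)|$ for the sup-convolution, combined with a compactness argument on the near-maximizers, is then used to upgrade this a.e.\ convergence of $F$ into genuine continuity of the mollified $\hat{F}$ --- which is precisely the role of the mollification in Lemma \ref{le:existence.LipF}. With all hypotheses of Kakutani--Fan--Glicksberg in place, $\mathcal{K}$ admits a fixed point $K^*\in\tilde{K}(\tilde{\nu}(K^*))$, as claimed. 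I expect the main obstacle to be exactly this continuity of $\hat{F}$ in $\nu$: the evaluation term $(p+p^a_t-p^b_t)F^{+,\alpha}_t(p+p^a_t-D^{-1}_t(\nu([0,p))))$ is genuinely discontinuous in $\nu$ at atoms of the limiting measure, and controlling it requires both the uniform regularity of $D^{-1}$ and the smoothing effect of the sup-convolution; verifying that the passage from a.e.\ to uniform behavior survives the $\sup$ is the delicate point of the argument.
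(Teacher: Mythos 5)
Your overall architecture coincides with the paper's: Kakutani on the compact convex set $\mathcal{M}_{\mu^a}(\hat{\mathbb{A}}\times\Pi)$, Berge's maximum theorem to get upper hemicontinuity of $\nu\mapsto\tilde{K}(\nu)$, and the reduction of everything to joint continuity of $\hat{F}$. The compactness/convexity/linearity observations and the continuity in $(\alpha,p)$ are handled the same way as in the paper and are fine. The problem is the step you yourself flag as delicate: continuity of $\hat{F}$ in $\nu$. The ``elementary estimate'' $|\hat{F}(\alpha,p,\nu_n)-\hat{F}(\alpha,p,\nu)|\le\sup_{p'\in\Pi}|F(\alpha,p',\nu_n)-F(\alpha,p',\nu)|$ is true for sup-convolutions but useless here, because its right-hand side does \emph{not} tend to zero: if $\nu$ has an atom at $p_0$ and $\nu_n$ carries that mass at $p_0-1/n$, then $\nu_n([0,p_0))$ stays bounded away from $\nu([0,p_0))$, so $F(\alpha,p_0,\nu_n)\not\to F(\alpha,p_0,\nu)$ and the supremum over $p'$ of the pointwise difference is bounded below. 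Your a.e.\ convergence of the evaluation term gives you the lower semicontinuity half (via left-continuity of $F$ in $p'$, the sup over non-atoms equals the full sup), but the upper bound $\limsup_n\hat{F}(\alpha,p,\nu_n)\le\hat{F}(\alpha,p,\nu)$ requires controlling $F(\alpha,p_n',\nu_n)$ along a sequence of maximizers $p_n'\to p^*$, and a ``compactness argument on the near-maximizers'' alone does not do this: you need a quantitative one-sided comparison between $F(\cdot,\cdot,\nu_n)$ and $F(\cdot,\cdot,\nu)$ that survives the moving argument.

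The missing ingredient is exactly the paper's estimate \reff{flpest}: if $d_0(\nu_1,\nu_2)\le\varepsilon$ in the L\'evy--Prokhorov metric, then $\nu_1([0,p))\ge\nu_2([0,(p-\varepsilon)\vee 0))-\varepsilon$, whence by Assumption \ref{ass:existence.demand} and the monotonicity of $F^{+,\alpha}$ one gets $F(\alpha,p,\nu_1)\le F(\alpha,(p-\varepsilon)\vee 0,\nu_2)+B(\varepsilon)$ --- a comparison of $F$ at $\nu_1$ with $F$ at $\nu_2$ evaluated at a \emph{shifted} price. This shift is precisely what the sup-convolution absorbs at the cost of an extra $\varepsilon$, yielding $\hat{F}(\alpha,p,\nu_1)\le\hat{F}(\alpha,p,\nu_2)+B(\varepsilon)+\varepsilon$ uniformly in $(\alpha,p)$; symmetrizing gives genuine (uniform) continuity of $\hat{F}$ in $\nu$. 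Without this shifted comparison your argument does not close, since the evaluation term in \reff{eq.F.def} is genuinely discontinuous in $\nu$ at atoms and no unshifted pointwise bound can repair that. Everything else in your proposal is consistent with the paper's proof.
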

\begin{proof}
To prove the proposition, we use the Kakutani's theorem for correspondences (cf. Definition $II.7.8.1$ and Theorem $II.7.8.6$ in \cite{fixedpoint}). 
Note that $\mathcal{M}_{\mu^a}\left(\hat{\mathbb{A}}\times \Pi\right)$, equipped with the weak topology, is convex and compact (by Prokhorov's theorem). In addition, it can be viewed as a subspace of the dual of the space of continuous functions on $\hat{\mathbb{A}}\times \Pi$, which is semi-normed. Thus, in order to apply the Kakutani's theorem, it only remains to show that $\mathcal{K}$ is upper hemi-continuous (uhc), with nonempty compact convex values.
Notice also that $\tilde{K}(\nu)$ is convex by definition (as an $\operatorname{argmax}$ of a linear functional on a convex set), hence, $\mathcal{K}$ is convex-valued, and we only need to show that it is uhc, with non-empty compact values. As $p\mapsto\tilde{\nu}(p)$ is a continuous function, and a composition of a continuous function and a uhc correspondence is a uhc correspondence, it suffices to verify that $\nu\mapsto \tilde{K}(\nu)$ is a uhc non-empty compact valued correspondence. To achieve this, we use the classical Berge's theorem (cf. $\cite{real}$, section E.3), which reduces to problem to the continuity of the functon
\begin{equation}\label{eq.phi.def}
(K,\nu)\mapsto \phi(K,\nu)=\int \hat{F}(\alpha,p,\nu) K(d\alpha,dp),
\end{equation}
on $\mathcal{M}_{\mu^a}\left(\hat{\mathbb{A}}\times \Pi\right)\times\mathcal{M}_{\mu^a(\mathbb{A})}\left(\Pi\right)$, metrized via the L\'evy-Prokhorov metric.
In the remainder of the proof, we show that $\phi(K,\nu)$ is jointly continuous in $(K,\nu)$. More precisely, $\phi(K,\nu)$ is continuous in $K$, and it is continuous in $\nu$ (with respect to L\'evy-Prokhorov metric), uniformly over $K$.

First, we show that $\phi(K,\nu)$ is continuous in $K$. By the definition of weak topology, the desired continuity would follow from the joint continuity of $\hat{F}(\alpha,p,\nu)$ with respect to $(\alpha,p)$. Due to Lemma \ref{le:existence.LipF}, $\hat{F}(\alpha,p,\nu)$ is $1$-Lipschitz in $p$ (uniformly over $\alpha\in\hat{\mathbb{A}}$), hence, it suffices to check that $\hat{F}(\alpha,p,\nu)$ is continuous in $\alpha$. The latter follows from the fact that $F(\alpha,p,\nu)$ is continuous in $\alpha$, uniformly over $p\in\Pi$. Indeed, notice that, if, for some $\alpha'\in U(\alpha)$, we have $\left\vert F(\alpha',p,\nu)-F(\alpha,p,\nu)\right\vert\le \varepsilon$ $\forall p\in \Pi$, then
$$
\hat{F}(\alpha',p,\nu)=F(\alpha',p',\nu)-\left\vert p'-p\right\vert\le F(\alpha,p',\nu)-\left\vert p'-p\right\vert+\varepsilon\le \hat{F}(\alpha,p,\nu)+\varepsilon,
$$
which, together with the analogous symmetric inequality, shows that $\left\vert \hat{F}(\alpha',p,\nu)-\hat{F}(\alpha,p,\nu) \right\vert\le\varepsilon$. The first equality in the above follows from the fact that $F$ is upper semi-continuous in $p$ (and bounded from above by $2C_p$), which is shown in the proof of Lemma \ref{le:existence.LipF}, and, hence, the supremum in the definition of $\hat{F}$ is achieved at some $p'$. To show that $F(\alpha,p,\nu)$ is continuous in $\alpha$, uniformly over $p\in\Pi$, we recall (\ref{eq.F.def}), and the desired continuity follows directly from Assumption \ref{ass:existence.top.AB}.

It remains to show that $\phi(K,\nu)$ is continuous in $\nu\in\mathcal{M}_{\mu^a(\mathbb{A})}\left(\Pi\right)$, uniformly over $K\in \mathcal{M}_{\mu^a}\left(\hat{\mathbb{A}}\times \Pi\right)$. 
As every such $K$ has a fixed finite total mass, due to the definition of $\phi$, the desired continuity follows from the fact that $\hat{F}(\alpha,p,\nu)$ is continuous in $\nu$, uniformly over $(\alpha,p)\in\hat{\mathbb{A}}\times\Pi$. To prove the latter, fix $\varepsilon>0$, and let $d_0$ be L\'evy-Prokhorov metric on $\mathcal{M}_{\mu^a(\mathbb{A})}\left(\Pi\right)$. Let us show that there exists an increasing continuous deterministic function $C_0\colon[0,\infty)\rightarrow[0,\infty)$, s.t. $C_0(0)=0$ and
$$
\left\vert \hat{F}(\alpha,p,\nu_1)-\hat{F}(\alpha,p,\nu_2)\right\vert\le C_0(\varepsilon),
\quad \forall\, p\in\Pi,\,\,\alpha\in\hat{\mathbb{A}},\,\,d_0(\nu_1,\nu_2)\le\varepsilon. 
$$
If we manage to show that there exists an increasing continuous deterministic function $B\colon[0,\infty)\rightarrow[0,\infty)$, s.t. $B(0)=0$ and
\begin{equation}\label{flpest}
F\left(\alpha,p,\nu_1\right)\le F\left(\alpha,(p-\varepsilon)\vee 0,\nu_2\right)+B(\varepsilon),
\end{equation}
then
\begin{multline*}
\hat{F}(\alpha,p,\nu_1)=F(\alpha,p',\nu_1)-\left\vert p'-p\right\vert\le
F\left(\alpha,(p'-\varepsilon)\vee0,\nu_2\right)-\left\vert p'-p\right\vert+B(\varepsilon)\\
\leq F\left(\alpha,(p'-\varepsilon)\vee 0,\nu_2\right)-\left\vert (p'-\varepsilon)\vee 0 - p\right\vert
+ B(\varepsilon) + \varepsilon\le
\hat{F}(\alpha,p,\nu_2) + B(\varepsilon) + \varepsilon.
\end{multline*} 
The latter, together with the analogous inequality in which $\nu_1$ and $\nu_2$ are switched, yields the desired uniform continuity of $\hat{F}$ in $\nu$. 
Thus, it is only left to prove (\ref{flpest}). 
For any $p\in\Pi$, by the definition of the L\'evy-Prokhorov metric, we have:
$$
\nu_1([0,p))\ge \nu_2([0,(p-\varepsilon)\vee 0))-\varepsilon
$$
and, hence, by Assumption \ref{ass:existence.demand},
$$
-D^{-1}(\nu_1([0,p)) ) \ge -D^{-1} \left( \nu_2([0,(p-\varepsilon)\vee 0))  \right)-\epsilon(\varepsilon).
$$
Then, for any $p\in\Pi$,
$$
p + p^a -D^{-1}( \nu_1([0,p)) )\ge (p-\varepsilon)\vee 0 + p^a - D^{-1} \left( \nu_2([0,(p-\varepsilon)\vee 0))  \right) - \epsilon(\varepsilon),
$$
which implies
$$
F^{+,\alpha}\left( p + p^a - D^{-1}( \nu_1^+(p))\right)\le F^{+,\alpha}\left( (p-\varepsilon)\vee 0 + p^a - D^{-1}( \nu_2^+((p-\varepsilon)\vee 0))\right) + M_f \epsilon(\varepsilon),
$$
where we used the fact that $f^{\alpha}$ is bounded by some constant $M_f$. 
The above estimate, along with the boundedness of $p^a$, $p^b$ and $F^{+,\alpha}$, yields the desired inequality (\ref{flpest}) for the first term in (\ref{eq.F.def}). Integrating the above estimate, we obtain the analogous inequality for the last term in the right hand side of (\ref{eq.F.def}), thus, completing the proof.
\qed
\end{proof}

Proposition \ref{prop:existence.prop} implies that, for a.e. $(t,\omega)$, we can find $K_{t,\omega}\in \mathcal{M}_{\mu^a}\left(\hat{\mathbb{A}}\times \Pi\right)$, s.t.
$$
K_{t,\omega} \in \tilde{K}\left(\tilde{\nu}(K_{t,\omega})\right),
$$
and, hence, $(K_{t,\omega}, \tilde{\nu}(K_{t,\omega}))$ satisfies (\ref{eq.fixedpt.def}).
Next, we need to establish the measurability of $K_{t,\omega}$ with respect to $(t,\omega)$.
Namely, we need to show that there exists a progressively measurable mapping $(t,\omega)\mapsto K_{t,\omega}\in \mathcal{M}_{\mu^a}\left(\hat{\mathbb{A}}\times \Pi\right)$, such that 
\begin{equation}\label{eq.Ktomega.fixedpt}
K_{t,\omega}\in\text{argmax}_{K'\in \mathcal{M}_{\mu^a}\left(\hat{\mathbb{A}}\times \Pi\right)} \phi_{t,\omega}\left(K',\tilde{\nu}(K_{t,\omega})\right),
\end{equation}
for $\text{Leb}\otimes\PP$-a.e. $(t,\omega)$, where $\phi$ and $\tilde{\nu}$ are defined in (\ref{eq.phi.def}) and (\ref{existence.defn.nu}).
We denote $S=[0,T]\times\Omega$, and let $\mathcal{S}$ be the progressive sigma-algebra (defined w.r.t. the filtration $\FF$) on $S$.
We also denote $\mathbb{X}=\mathcal{M}_{\mu^a}\left(\hat{\mathbb{A}}\times \Pi\right)$ and introduce the correspondence $g_1\colon S\times \mathbb{X}\to \mathbb{X}$, given by
\begin{gather*}
(t,\omega,K)\mapsto \text{argmax}_{K'\in \mathbb{X}} \phi_{t,\omega}(K',\tilde{\nu}(K)).
\end{gather*}
Notice that $\mathbb{X}$ is separable and metrizable, and consider the function $(t,\omega,K,K')\mapsto \phi_{t,\omega}(K',\tilde{\nu}(K))$, defined on $(S\times \mathbb{X}^2,\mathcal{S}\otimes\mathcal{B}(\mathbb{X}^2))$. Note that this function is continuous in $K'$ (as shown in the proof of Proposition \ref{prop:existence.prop}) and measurable in $(t,\omega,K)$ (as it is continuous in $K$ and measurable in $(t,\omega)$, as shown in the proof of Proposition \ref{prop:existence.prop}), hence, it is a Carath\'eodory function. 
Then, the Measurable Maximum theorem (cf. Theorem 18.18 in \cite{Aliprantis}) implies that $g_1$ is a $(\mathcal{S}\otimes\mathcal{B}(\mathbb{X}))$-measurable correspondence with nonempty and compact values.
Consider another correspondence $g_2\colon S\to \mathbb{X}$, given by
\begin{gather*}
(t,\omega)\mapsto \left\{K\in \mathbb{X}\colon K\in\text{argmax}_{K'}\phi_{t,\omega}(K',\tilde{\nu}(K))\right\}.
\end{gather*}
Let us show how to measurably select from $g_2$, for $\text{Leb}\otimes\PP$-a.e. $(t,\omega)$. The standard measurable selection results (cf. Corollary 18.27 and Theorem 18.26 in \cite{Aliprantis}) imply that such a selection is possible if $g_2$ has $\mathcal{S}\otimes\mathcal{B}(\mathbb{X})$-measurable graph and non-empty values. The latter follows from Proposition \ref{prop:existence.prop}, and the former is guaranteed by the following lemma.

\begin{lemma}
The correspondence $g_2$ has a $\mathcal{S}\otimes\mathcal{B}(\mathbb{X})$-measurable graph.
\end{lemma}
\begin{proof}
Denote this graph by $\Gamma_{g_2}$. Let $I_\mathbb{X}\colon \mathbb{X}\to \mathbb{X}\times \mathbb{X}$ be given by $I_\mathbb{X}(K)= (K,K)$. Then, $\Gamma_{g_2}=\left(\text{id}\times I_\mathbb{X}\right)^{-1}(\Gamma)$, where $\Gamma\subset S\times \mathbb{X}\times \mathbb{X}$ is given by
$$
\Gamma=\left\{\left(t,\omega,K,K'\vert (t,\omega)\in S,\,K\in \mathbb{X},\,K'\in\text{argmax}_{K''\in \mathbb{X}}\phi_{t,\omega}(K'',\tilde{\nu}(K))\right)\right\}
$$
$$
\cap
\left\{(t,\omega,K,K)\vert (t,\omega)\in S,\,K\in \mathbb{X}\right\}.
$$
Clearly, $\text{id}\times I_\mathbb{X}$ is a measurable map, and the set $\left\{(t,\omega,K,K)\vert (t,\omega)\in S,\,K\in \mathbb{X}\right\}$ is measurable. Therefore, we only need to check that
$$
\left\{\left(t,\omega,K,K'\vert (t,\omega)\in S,\,K\in \mathbb{X},\,K'\in\text{argmax}_{K''\in \mathbb{X}}\phi_{t,\omega}(K'',\tilde{\nu}(K))\right)\right\}
$$
is $\mathcal{S}\otimes \mathcal{B}(\mathbb{X}^2)$-measurable.
The latter set is precisely the graph of $g_1$, and it is measurable as the correspondence $g_1$ is measurable (cf. Theorem 18.6 in \cite{Aliprantis}).
\qed
\end{proof}

Thus, we conclude that there exists a progressively measurable $K$, with values in $\mathcal{M}_{\mu^a}\left(\hat{\mathbb{A}}\times\Pi\right)$, satisfying (\ref{eq.Ktomega.fixedpt}) for $\text{Leb}\otimes\PP$-a.e. $(t,\omega)$.
It only remains to construct $\kappa$ from $K$, by disintegration.
Let us introduce $A=S\times\mathbb{\hat{A}}$, equipped with the sigma-algebra $\mathcal{S}\otimes \mathcal{B}\left(\mathbb{\hat{A}}\right)$, and the measure $\QQ$ on $A\times\Pi$, defined via $\QQ(dt,d\omega,d\alpha,dp)=K_{t,\omega}(d\alpha,dp)dt \PP(d\omega)$.
Note that the marginal distribution of $\QQ$ on $A$ is $\mu^a(d\alpha) dt \PP(d\omega)$.
Then, as the natural projection from $A\times\Pi$ to $\Pi$ has a Borel range, Theorems 5.3 and 5.4 from \cite{foundations} imply that there exists a kernel $\kappa\colon A\ni(t,\omega,\alpha)\mapsto\kappa_{t,\omega}(\alpha)\in\mathcal{P}(\Pi)$, which is a regular conditional distribution of the natural projection from $A\times\Pi$ to $\Pi$, given the natural projection from $A\times\Pi$ to $\mathbb{A}$, under $\QQ$.
Namely, for every absolutely bounded measurable $f\colon A\times\Pi\to\RR$, we have
\begin{equation}\label{eq.disintegration}
\int_{A\times\Pi} f(t,\omega,\alpha,p) K_{t,\omega}(d\alpha,dp) dt \PP(d\omega) = \int_{A\times\Pi} f(t,\omega,\alpha,p) \kappa_{t,\omega}(\alpha;dp) \mu^a(d\alpha) dt \PP(d\omega).
\end{equation}
The above property yields that $\hat{\nu}^a_{t,\omega} = \tilde{\nu}(K_{t,\omega})$ and $\kappa_{t,\omega}$ satisfy the fixed-point constraint (\ref{eq.nuplus.fixedpoint.def.new}).
It only remains to show that $\kappa$ satisfies (\ref{eq.kappa.fixedpt.def}), for $\text{Leb}\otimes\PP\otimes\mu^a$-a.e. $(t,\omega,\alpha)$. Assume that this is not the case, then, there exists a measurable set $B\subset[0,T]\times\Omega$, with positive measure, s.t. for any fixed $(t,\omega)\in B$, there exists a measurable set $C\subset \hat{\mathbb{A}}$, s.t. $\mu^a(C)>0$ and, for all $\alpha\in C$,
$$
\int_{\RR} \hat{F}_{t,\omega}(\alpha,p,\tilde{\nu}(K_{t,\omega})) \kappa_{t,\omega}(\alpha;dp)
\leq \int_{\RR} F_{t,\omega}(\alpha,p,\tilde{\nu}(K_{t,\omega})) \kappa_{t,\omega}(\alpha;dp)
$$
$$
< 
\sup_{\kappa' \in \psi} \int_{\RR} F_{t,\omega}(\alpha,p,\tilde{\nu}(K_{t,\omega})) \kappa'(dp)
= \sup_{\kappa' \in \psi} \int_{\RR} \hat{F}_{t,\omega}(\alpha,p,\tilde{\nu}(K_{t,\omega})) \kappa'(dp).
$$
The above inequality becomes non-strict for all $\alpha\in \hat{\mathbb{A}}\setminus C$.
Then, for a fixed $(t,\omega)\in B$, we can choose a measurable $\tilde{\kappa}\colon \hat{\mathbb{A}} \rightarrow \mathcal{P}(\Pi)$ (in the same way as we chose a measurable $K$, except that, in this case, the measurability is required in the $\alpha$-variable), s.t.
$$
\sup_{\kappa' \in \psi} \int_{\RR} \hat{F}_{t,\omega}(\alpha,p,\tilde{\nu}(K_{t,\omega})) \kappa'(dp)
= \int_{\RR} \hat{F}_{t,\omega}(\alpha,p,\tilde{\nu}(K_{t,\omega})) \tilde{\kappa}(\alpha;dp),
\quad \mu^a\text{-a.e.}\,\alpha\in\hat{\mathbb{A}}.
$$
Thus, we obtain
$$
\int_{\RR} \hat{F}_{t,\omega}(\alpha,p,\tilde{\nu}(K_{t,\omega})) \kappa_{t,\omega}(\alpha;dp)
<  \int_{\RR} \hat{F}_{t,\omega}(\alpha,p,\tilde{\nu}(K_{t,\omega})) \tilde{\kappa}(\alpha;dp),
$$
for all $\alpha\in C$, and the non-strict inequality holds for all $\alpha\in \hat{\mathbb{A}}$. Integrating with respect to $\mu^a$, and using (\ref{eq.disintegration}) with  $f(t,\omega,\alpha,p)=\hat{F}\left(t,\omega,\alpha,p,\tilde{\nu}(K_{t,\omega}))\right)$, we obtain a contradiction with (\ref{eq.Ktomega.fixedpt}) on the set $B$ (which has a positive measure).
Thus, for $\mu^a$-a.e. $\alpha\in\hat{\mathbb{A}}$, (\ref{eq.kappa.fixedpt.def}) holds for $\text{Leb}\otimes\PP$-a.e. $(t,\omega)$. This means that, if we define $\hat{p}_t(\alpha)$ as the push-forward of $\kappa_t(\alpha)$, under the mapping $x\mapsto x+ p^a_t$, the resulting strategy $\hat{p}(\alpha)$ maximizes the generator $\hat{\mathcal{G}}^{\alpha}_t(y)$, for any $y$ and a.e. $(t,\omega)$. Then, we define $\nu^a_t$ to be the push-forward of $\hat{\nu}^a_t$, under the mapping $x\mapsto x+p^a_t$, and use the standard BSDE results to conclude that, for $\mu^a$-a.e. $\alpha\in\hat{\mathbb{A}}$,
$$
J^{(\nu,\theta),(\hat{p}(\alpha),V^a)}(1,\alpha) 
= \bar{J}^{\alpha,(\hat{p}(\alpha))}_0 
\geq \bar{J}^{\alpha,(p')}_0
= J^{(\nu,\theta),(p',V^a)}(1,\alpha)
$$
holds for all admissible strategies $p'$, which means that $\hat{p}(\alpha)$ is optimal for the long agents with beliefs $\alpha$. 
With such a choice of $\nu^a$ and $\hat{p}$, the fixed-point condition on $\nu^a$, given in (\ref{eq.nuplus.fixedpoint.def}), is satisfied, as it is equivalent to (\ref{eq.nuplus.fixedpoint.def.new}) (assuming the extremal long agents post limit orders at $p^a$, which is optimal for them). This, along with Corollary \ref{cor:opt.Nonext}, implies that $(\hat{p}(\alpha),V^a)$ is an optimal strategy for the long agents with beliefs $\alpha\in\hat{\mathbb{A}}$. The short agents are treated similarly. Thus, we complete the proof of Theorem \ref{th:main}.

\section{Example}
\label{se:example}

In this section, we consider the simplest concrete example of our model and show how it can be used. Consider a stochastic basis $(\Omega,\tilde{\FF} = \left(\mathcal{F}_t\right)_{t\in[0,T]},\PP)$, with a Poisson random measure $N$, whose compensator is $\lambda_t f_t(x)\text{d}x\text{d}t$, as described in Subsection \ref{subse:preliminaries}. We assume that $J_t(x)=x$ (i.e. $M\equiv N$), so that $N$ is the jump measure of the (potential) fundamental price process $X$.
We also assume that $T=20$, $\lambda_t\equiv1$ and $f_t$ is the density of a uniform distribution on $[-C_0,C_0]$, where the constant $C_0$ is chosen to be sufficiently large, so that this interval contains the supports of all $f^{\alpha}$ described below. 
We take $\mathbb{A}=\{\alpha_0\}\cup\hat{\mathbb{A}}$, $\mathbb{B}=\{\beta_0\}\cup\hat{\mathbb{B}}$, where 
$$
\hat{\mathbb{A}}=\left\{\frac{i}{K}\vert 0\le i <K \right\},
\quad \hat{\mathbb{B}}=\left\{-\frac{i}{K}\vert 0\le i <K \right\}
$$ 
are the uniform partitions of unit intervals, and $K=500$ is used for most of the computations herein. 
The restrictions of $\mu^a$ (resp. $\mu^b$) on $\hat{\mathbb{A}}$ (resp. $\hat{\mathbb{B}}$) assign a mass of $1/K$ to every point of the corresponding discrete space. Note that this implies $\mu^a(\hat{\mathbb{A}})=\mu^b(\hat{\mathbb{B}})=1$. We also define $\mu^a(\{\alpha^0\})=\mu^b(\{\beta^0\})=0.1$.

Next, we consider a collection of positive numbers $\{\lambda^{+,\alpha},\lambda^{-,\alpha},C^{+,\alpha},C^{-,\alpha}\}_{\alpha\in\mathbb{A}\cup\mathbb{B}}$, and define
$$
f^{\alpha}(x)=\frac{\lambda^{+,\alpha}}{(\lambda^{+,\alpha}+\lambda^{-,\alpha})C^{+,\alpha}} \bone_{[0,C^{+,\alpha}]}(x) 
+ \frac{\lambda^{-,\alpha}}{(\lambda^{+,\alpha}+\lambda^{-,\alpha})C^{-,\alpha}} \bone_{[-C^{-,\alpha},0]}(x),
\quad \lambda^{\alpha}=\lambda^{+,\alpha}+\lambda^{-,\alpha}.
$$
Herein, we use  $C^{+,\alpha_0}=C^{-,\alpha_0}=C^{+,\beta_0}=C^{-,\beta_0}=0.5$ and
$$
C^{+,\alpha}=a+b\alpha,
\quad C^{-,\alpha}=C^{-,\alpha_0},
\quad \forall\,\alpha\in\hat{\mathbb{A}},
\quad C^{-,\beta}=a-b\beta,
\quad C^{+,\beta}=C^{+,\beta_0},
\quad \forall\,\beta\in\hat{\mathbb{B}},
$$ 
with $a=0.5$ and $b=10$.
Finally, for any $\alpha\in\mathbb{A}\cup\mathbb{B}$, we introduce
$$
\Gamma^{\alpha}(x)=\frac{\lambda^{\alpha}}{\lambda}\frac{f^\alpha(x)}{f(x)}-1,
\quad \text{d}Z^\alpha_t=Z^\alpha_{t-}\int_{\mathbb{R}} \Gamma^{\alpha}(x)\,[{N}(\text{d}t,\text{d}x)-\lambda f(x)dtdx],
$$
and define $\PP^{\alpha}<<\PP$ by its Radon-Nikodym density $Z^{\alpha}_T$. One can easily check, using the general results in \cite{JacodShiryaev} (or in \cite{ContTankov}, for the deterministic case, used herein) that, under such $\PP^{\alpha}$, $N$ is a Poisson random measure with the compensator $\lambda^{\alpha}f^{\alpha}(x)\text{d}x\text{d}t$. 

We assume that the demand elasticity is deterministic, constant in time, and linear in price:
$$
D_t(p)=-kp,
$$
with the elasticity parameter $k=0.2$. 
With the above choice of $(C^{\pm,\alpha_0},C^{\pm,\beta_0},\mu^a(\{\alpha^0\}),\mu^b(\{\beta^0\}),k)$, it is easy to see that Assumption \ref{ass.demand.size} is satisfied.
Notice that the choice of $\lambda^{\pm,\alpha}$, for $\alpha\in\hat{\mathbb{A}}\cup\hat{\mathbb{B}}$, does not affect the equilibrium, as long as Assumptions \ref{ass.intensities.comparison} and \ref{pa_alpha_vs_alpha_0} are satisfied. This is, clearly, the case if we choose $\lambda^{\pm,\alpha}=\lambda^{\pm,\alpha^0}$ and $\lambda^{\pm,\beta}=\lambda^{\pm,\beta^0}$, for $\alpha\in\mathbb{A}$ and $\beta\in\mathbb{B}$. Herein, we consider several different sets of values for $(\lambda^{\pm,\alpha^0},\lambda^{\pm,\beta^0})$.

Let us construct an equilibrium in this example. Notice that, in the present case, the Brownian motion $W$ does not affect the jump intensities and, in turn, the agents' objectives, hence, the RBSDE system (\ref{eq.Y1.Y2.RBSDE}) becomes a system of reflected ODEs. We can solve it easily, using a simple Euler scheme, then, recover the value functions $(V^a,V^b)$, as shown in Lemma \ref{le:connect.RBSDEs}, and construct the bid and ask prices, $(p^a,p^b)$, in the feedback form, as shown in Lemma \ref{sab}.
We implement this strategy with the parameters chosen above, and with $\lambda^{+,\alpha_0}=2.5$, $\lambda^{-,\alpha_0}=1$, $\lambda^{+,\beta_0}=1$, $\lambda^{-,\beta_0}=2.5$ (so that the extremal ask agents are bullish whereas the extremal bid agents are bearish). The results are shown in the left part of Figure \ref{fig:1}.
Using the same parameters, we consider the book beyond the best bid and ask prices. In order to construct it, we solve the fixed-point problem (\ref{eq.fixedpt.def}) numerically. The latter is achieved by limiting the set of possible price levels for the limit orders to a finite set (i.e. to a partition of a large interval), which reduces (\ref{eq.fixedpt.def}) to a finite-dimensional fixed-point problem. In addition, we allow each agent to post a limit order at a single price level only, which further simplifies the problem.\footnote{Note that this restriction does not compromise the optimality of the agents' actions, provided a fixed point can be found. Indeed, it is a well known phenomenon that, in a continuum-player game, an equilibrium with pure controls also provides an equilibrium for a setting with distributed controls. This is, in fact, one of the advantages of the continuum-player games. We consider distributed controls only to prove that the equilibrium does exist, which is much harder (if at all possible) to show for a setting with pure controls.} Thus, we find a solution by the standard recursive iteration, maximizing, at each step, the objective over a finite set. The resulting optimal limit order strategies of the agents (at time zero) are plotted in the right part of Figure \ref{fig:1}, as a function of the agents' beliefs $\alpha\in\hat{\mathbb{A}}\cup\hat{\mathbb{B}}$. Notice that the optimal limit order strategy $p(\cdot)$ is piece-wise constant. It is worth mentioning that this discreteness seems to be inherent in the model and not just an artifact of the discretization of prices or beliefs that we chose herein, as the results do not change when we increase the number of possible beliefs ($K$) and price levels.

Finally, we demonstrate how the proposed framework can be used to model the \emph{indirect market impact}, which appears when an initial change to the LOB creates ``feedback loop" and causes further changes. Note that the initial change may be triggered by a trade (which is the case in the classical models of optimal execution) or by a new limit order. An extreme example of the latter is the so-called ``spoofing" -- i.e. posting a large limit order with the goal to make the price of the asset move in the opposite direction.\footnote{We stress that intentional spoofing is an illegal activity.} To the best of our knowledge, to date, there exists no model capable of explaining how exactly this activity causes the LOB (and, in particular, the price) to change. To model this process, we modify the present example by assuming that $(\lambda^{\pm,\alpha^0},\lambda^{\pm,\beta^0})$ are, in fact, functions of a relevant market indicator, which we denote by $I$: 
\begin{equation}\label{eq.I.to.lambda}
\lambda^{+,\alpha^0}=2.3\exp\left(I s\right),
\quad \lambda^{-,\alpha^0}=\exp\left(-I s\right),
\quad \lambda^{+,\beta^0}=\exp\left(I s\right),
\quad \lambda^{-,\beta^0}=2.3\exp\left(-I s\right),
\end{equation}
where $s=2.6$ is the sensitivity. We further assume that $I$ is the so-called \emph{market imbalance}: the ratio of the size of all limit orders at the best bid over the size of all limit orders at the best ask, less one. It is a well known empirical fact (cf. \cite{MMS27}, \cite{MMS30}, \cite{MMS31}) that such an indicator has a predictive power for the direction of the next price move. Note that $I$ is a function of the LOB, which, in turn, is an outcome of an equilibrium, in which $I$ is the input. Strictly speaking, our results do not guarantee the existence of an equilibrium with this additional fixed-point constraint.
In fact, an equilibrium with ``feedback beliefs", given by (\ref{eq.I.to.lambda}), can be viewed as a fixed-point of the following mapping:
\begin{equation}\label{eq.ex.FixedPtMapping.def}
(\lambda^{\pm,\alpha^0},\lambda^{\pm,\beta^0})\mapsto \nu \mapsto I \mapsto (\lambda^{\pm,\alpha^0},\lambda^{\pm,\beta^0}),
\end{equation}
where $(\lambda^{\pm,\alpha^0},\lambda^{\pm,\beta^0})\mapsto \nu$ maps the numbers $(\lambda^{\pm,\alpha^0},\lambda^{\pm,\beta^0})$ into an equilibrium LOB $\nu$, as it is done in the first part of this section, and $I \mapsto (\lambda^{\pm,\alpha^0},\lambda^{\pm,\beta^0})$ is given by (\ref{eq.I.to.lambda}). Herein, we do not prove a general existence result for the aforementioned fixed point, but we can compute it numerically by applying the associated mapping iteratively (assuming the iterations do converge).
In particular, the top right part of Figure \ref{fig:2} shows an example of LOB arising in equilibrium with feedback beliefs, given by (\ref{eq.I.to.lambda}), with $I_0=.0984456$.

Our next goal is to show how the market may move from one equilibrium to another, once the LOB is perturbed (assuming the feedback beliefs (\ref{eq.I.to.lambda})). It is worth mentioning that there is no canonical way to describe how agents achieve an equilibrium. Nevertheless, we propose a specific algorithm, based on the iterations of (\ref{eq.ex.FixedPtMapping.def}), with the following rationale behind it.
For any parameters $\bar{\lambda}=(\lambda^{\pm,\alpha^0},\lambda^{\pm,\beta^0})$ (given as functions of time), the agents know their equilibrium strategies: $(p(\bar{\lambda}),v(\bar{\lambda}))$, which can be computed as shown in the first part of this example (and whose existence follows from the main result of this paper). If the LOB is perturbed, $I$ changes, and, in turn, the parameters change from $\bar{\lambda}$ to $\bar{\lambda}'$, via (\ref{eq.I.to.lambda}). Then, the agents change their strategies to $(p(\bar{\lambda}'),v(\bar{\lambda}'))$, which form an equilibrium with respect to the new set of parameters $\bar{\lambda}'$. In the new equilibrium, the LOB, and, hence, the imbalance $I$, may change, causing further change to the parameters, and so on, until the agents reach a set of parameters that coincides with the previous one (or, almost coincides, from a numerical point of view). We believe that this algorithm for moving to a new equilibrium makes economic sense, although, of course, it is not the only possible choice. Mathematically, it corresponds to iterating the mapping (\ref{eq.ex.FixedPtMapping.def}).
To illustrate this approach, we add an extra limit buy order of size $0.05$, located at the best bid price, to the previously obtained equilibrium LOB -- as shown in the bottom right part of Figure \ref{fig:2}. This implies a change to the imbalance $I$ and, in turn, to the agents' parameters $(\lambda^{\pm,\alpha^0},\lambda^{\pm,\beta^0})$, via (\ref{eq.I.to.lambda}). Hence, the agents adjust their controls to reach a new equilibrium, then, re-calculate the parameters with the new imbalance, and so on.
Figure \ref{fig:3} shows what happens to the LOB and to the functions $(V^a,V^b)$ in the first five iterations. We can see that the initial change in imbalance makes the agents more bullish about the asset, and they tend to move their limit orders higher. In particular, the size of the best bid queue increases, while the size of the best ask queue decreases, further increasing the market imbalance. The left part of Figure \ref{fig:3} also shows that, starting from step three, the value functions $V^a$ and $V^b$ coincide at time zero, which means that the agents, in fact, choose to submit an internal market order, terminating the game. The latter constitutes an equilibrium with feedback beliefs (\ref{eq.I.to.lambda}).\footnote{It is important to notice that an equilibrium with feedback beliefs (\ref{eq.I.to.lambda}), typically, is not unique, but the proposed algorithm leads to a specific one. The resulting equilibrium is degenerate, in the sense that the game ends immediately, but, of course, there exist other equilibria.} This experiment, in particular, shows why the predictive power of market imbalance is a ``self-fulfilling prophecy": the fact that the agents base their beliefs about the type of the next market order on the market imbalance, itself, implies that a sufficient change in market imbalance will, indeed, trigger a market order of the right type. 

Of course, the analysis provided in the second part of this section is merely an example, which is meant to illustrate a potential application of our theoretical results. Namely, our main results show that a single iteration of the mapping (\ref{eq.ex.FixedPtMapping.def}) is well defined. Nevertheless, a rigorous analysis of the resulting iterative scheme, including its convergence to a fixed point, is missing.
In general, it would also be interesting to consider other indicators: e.g. choosing the size and direction of the last trade as the relevant indicator, would allow one to model the indirect impact of a market order on the LOB (in addition to the obvious, direct, impact, resulting from the immediate execution of limit orders). In our future research, we plan to find appropriate model specifications which would allow us to conduct a more thorough analysis of the indirect market impact, within the proposed setting, and to test the predictions of our model against the market data.

\section{Appendix}

\emph{Proof of Lemma \ref{le:opt.Nonext}}.
We consider a long agent with beliefs $\alpha$ and introduce
\begin{equation*}
\bar{J}^{\alpha,(p,\tau)}_t =
\EE\Big[\int_t^{\tau} \exp\left(-\int_t^s \bar{c}^{\alpha}_u\left(p^a_u\wedge Q^-(p_u),p^b_u\right)du\right) \bar{h}^{\alpha,a}_s(p_s,p^a_s,p^b_s)ds
\end{equation*}
\begin{equation*}
+ \exp\left(-\int_t^{\tau} \bar{c}^{\alpha}_u\left(p^a_u\wedge Q^-(p_u),p^b_u\right) du\right)
p^b_{\tau\wedge\hat{\tau}} 
\vert \mathcal{F}_t\Big],
\end{equation*}
where
\begin{equation*}
\bar{c}_t^{\alpha}(x,y) = c_t^{\alpha}(x,y)\bone_{\{t\le\hat{\tau}\}},
\quad
\bar{h}^{\alpha,a}_t(\kappa,x,y)= h^{\alpha,a}_t(\kappa,x,y)\bone_{\{t\le\hat{\tau}\}},
\quad x,y\in\RR\,\,\kappa\in\mathcal{P}(\RR),
\end{equation*}
with $c^{\alpha}$ and $h^{\alpha,a}$ defined in (\ref{eq.F.c.def}) and (\ref{eq.ha.def}).
Next, for any $t\in[0,T]$, any $\alpha\in\mathbb{A}$, and any admissible $p$, we introduce
\begin{equation}\label{eq.Y.extagents.def}
Y^{\alpha,p}_t=\operatorname{ess\,sup}\limits_{\tau\in\mathcal{T}_{t}} \bar{J}_t^{\alpha,(p,\tau)},
\end{equation}
The standard results on RBSDEs imply that $Y^{\alpha,p}$ is the unique $\mathbb{S}^2$ solution of the affine RBSDE,
\begin{eqnarray}
-dY^{\alpha,p}_t = \bar{\mathcal{G}}^{\alpha,p}_t(Y^{\alpha,p}_t)dt - Z_t\text{d}W_t + \text{d}K_t\quad0\le t\le T
\label{eq.BSDE.2player.affine.first.n}\\
Y^{\alpha,p}_t\ge p^b_{t\wedge\hat{\tau}} \quad 0\le t \le T,\quad \int_0^T (Y^{\alpha,p}_t-p^b_{t\wedge\hat{\tau}})\text{d}K_t=0 \\
Y^{\alpha,p}_T=p^b_{\hat{\tau}},
\label{eq.BSDE.2player.affine.last.n}
\end{eqnarray}
where 
$$
\bar{\mathcal{G}}^{\alpha,p}_t(y) 
= -\bar{c}^{\alpha}_t\left(p^a_t\wedge Q^-(p_t),p^b_t\right) y 
+ \bar{h}^{\alpha,a}_t(p_t,p^a_t,p^b_t)
= \left[-c^{\alpha}_t\left(p^a_t\wedge Q^-(p_t),p^b_t\right) y 
+ h^{\alpha,a}_t(p_t,p^a_t,p^b_t)\right] \bone_{\{t<\hat{\tau}\}},
$$
with $c^{\alpha}$ and $h^{\alpha,a}$ defined in (\ref{eq.F.c.def}) and (\ref{eq.ha.def}).
Recall that $V^a$ satisfies (\ref{eq.RBSDE.Va.Vb}), with the generator 
$$
\mathcal{G}^a_t(y,p^b_t) = 
2\lambda_t^{\alpha^0} p^b_t F^{\alpha^0,-}_t(p^b_t)
- \lambda^{\alpha^0}_t F^{-,\alpha^0}_t(p^b_t) y
+ \lambda_t^{\alpha^0} P^a_t(y) F^{\alpha^0,+}_t(P^a_t(y))
- \lambda^{\alpha^0}_t F^{+,\alpha^0}_t(P^a_t(y)) y.
$$
It is easy to deduce that
$$
\bar{\mathcal{G}}^{\alpha^0,p^a}_t\left(V^a_{t}\right) = \mathcal{G}^a_t(V^a_t,p^b_t) \bone_{\{t<\hat{\tau}\}}.
$$
Hence, $(V^a_{t\wedge\hat{\tau}})$ satisfies the same RBSDE as $(Y^{\alpha^0,p^a}_t)$. From the comparison principle, we conclude that $Y^{\alpha^0,p^a}_t = V^a_{t\wedge\hat{\tau}}$.
On the other hand, for any $\alpha\in\mathbb{A}$, let us choose $p_t = \delta_{p^a_t}$, to obtain:
$$
\bar{\mathcal{G}}^{\alpha,p^a}_t\left(Y^{\alpha^0,p^a}_t\right)
= \bar{\mathcal{G}}^{\alpha,p^a}_t\left(V^a_t\right) 
= \left[
\lambda^{\alpha}_t p^b_t F^{-,\alpha}_t(p^b_t)
+ \lambda^{\alpha}_t \int_{-\infty}^{p^b_t} f^\alpha_t(u) l^{c,b}_t(u) du
- \lambda^{\alpha}_t F^{-,\alpha}_t(p^b_t) V^a_t 
\right.
$$
$$
\left.
+ \lambda^{\alpha}_t p^a_t F^{+,\alpha}_t(p^a_t)
- \lambda^{\alpha}_t F^{+,\alpha}_t(p^a_t) V^a_t
\right] \bone_{\{t<\hat{\tau}\}}
$$
$$
\geq \left[
\lambda^{\alpha}_t F^{-,\alpha}_t(p^b_t) (p^b_t - V^a_t)
+ \lambda^{\alpha}_t F^{+,\alpha}_t(p^a_t) (p^a_t - V^a_t)
+ \lambda^{\alpha^0}_t p^b_t F^{-,\alpha^0}_t(p^b_t)
\right] \bone_{\{t<\hat{\tau}\}},
$$
where $l^{c,b}$ is defined in (\ref{eq.lcb.def}), and the last inequality is based on the Assumptions \ref{ass.intensities.comparison}, \ref{ass.demand.size}, and on the monotonicity of $l^{c,b}_t(\cdot)$, which imply
$$
\lambda^{\alpha}_t \int_{-\infty}^{p^b_t} f^\alpha_t(u) l^{c,b}_t(u) du
\geq \lambda^{\alpha^0}_t \int_{-\infty}^{p^b_t} f^{\alpha^0}_t(u) l^{c,b}_t(u) du
= \lambda^{\alpha^0}_t p^b_t F^{-,\alpha^0}_t(p^b_t).
$$
Notice that, by construction, $p^b_t \leq V^b_t \leq V^a_t \leq p^a_t$.
Then,  Assumption \ref{ass.intensities.comparison} implies
$$
\lambda^{\alpha}_t F^{-,\alpha}_t(p^b_t) (2p^b_t - V^a_t)
\geq \lambda^{\alpha^0}_t F^{-,\alpha^0}_t(p^b_t) (2p^b_t - V^a_t),
\quad
\lambda^{\alpha}_t F^{+,\alpha}_t(p^a_t) (p^a_t - V^a_t)
\geq \lambda^{\alpha^0}_t F^{+,\alpha^0}_t(p^a_t) (p^a_t - V^a_t).
$$
Thus, we obtain:
$$
\bar{\mathcal{G}}^{\alpha^0,p^a}_t\left(Y^{\alpha^0,p^a}_t\right) 
\leq \bar{\mathcal{G}}^{\alpha,p^a}_t\left(Y^{\alpha^0,p^a}_t\right).
$$
Using the comparison principle for RBSDEs, we conclude that $Y^{\alpha,p^a}_t\geq Y^{\alpha^0,p^a}_t = V^a_{t\wedge\hat{\tau}}$.
Consider an arbitrary strategy $(p,\tau)$. By switching between $p^a$ and $p$, we can construct a new strategy $p'$, such that $Y^{\alpha,p'}_t\geq V^a_{t\wedge\hat{\tau}}\vee Y^{\alpha,p}_t$, for all $t$. More precisely, we define
$$
\bar{\mathcal{G}}^{\alpha,p'}_t(y) = \bar{\mathcal{G}}^{\alpha,p^a}_t(y) \vee \bar{\mathcal{G}}^{\alpha,p}_t(y),
$$
and solve the RBSDE (\ref{eq.BSDE.2player.affine.first.n})--(\ref{eq.BSDE.2player.affine.last.n}). By the standard argument, the $Y$-component of the solution is $Y^{\alpha,p'}$, where $p'_t$ is defined to be equal to $\delta_{p^a_t}$ if the maximum in the above equation is achieved at $\bar{\mathcal{G}}^{\alpha,p^a}_t(Y^{\alpha,p'}_t)$, and it is equal to $p_t$ otherwise. 
The comparison principle implies that $Y^{\alpha,p'}_t \geq Y^{\alpha,p^a}_t\vee Y^{\alpha,p}_t\geq V^a_{t\wedge\hat{\tau}}\vee Y^{\alpha,p}_t$.
Then, the standard results on RBSDEs imply that the optimal stopping time associated with $Y^{\alpha,p'}$ is
$$
\inf\{t\in[0,T]\,:\,  Y^{\alpha,p'}_t \leq p^b_{t\wedge\hat{\tau}}\} 
= \inf\{t\in[0,T]\,:\,  V^{a}_t \leq p^b_{t\wedge\hat{\tau}}\} = \hat{\tau}.
$$ 
Thus,
$$
J^{(\nu,\theta),(p,\tau)}_0(1,\alpha) = \bar{J}^{\alpha,(p,\tau)}_0
\leq Y^{\alpha,p}_0 \leq Y^{\alpha,p'}_0 = \bar{J}^{\alpha,(p',\hat{\tau})}_0
= J^{(\nu,\theta),(p',\hat{\tau})}_0(1,\alpha).
$$
Next, we show that the control $p$ can be chosen so that, $\PP$-a.s., for all $t$, $\text{supp}(p_t)\subset [p^a_t,\infty)$.
Consider any control $p$. By switching, if necessary, between $p^a$ and $p$, we can ensure that $Y^{\alpha,p}_t\geq V^a_{t\wedge\hat{\tau}}$. Then, for $t<\hat{\tau}$, the generator of $Y^{\alpha,p}$ is given by
$$
\bar{\mathcal{G}}^{\alpha,p}_t(y)
= -c^{\alpha}_t\left(p^a_t\wedge Q^-(p_t),p^b_t\right) y 
+ h^{\alpha,a}_t(p_t,p^a_t,p^b_t)
$$
$$
= -\lambda^{\alpha}_t F^{+,\alpha}_t(p^a_t\wedge Q^-(p_t)) y
+ \lambda^{\alpha}_t p^b_t F^{+,\alpha}_t(Q^-(p_t)\wedge p^a_t) 
p_t\left((p^a_t,\infty)\right)
$$
$$
+ \lambda^{\alpha}_t \int_{(Q^-(p_t)\wedge p^a_t)\vee0}^{p^a_t} f^{\alpha}_t(u) 
\int_{-\infty}^{p^a_t} \left[
\left(z\wedge u + \left(p^b_t - u\right) \bone_{\{z> u\}}\right) \right] p_t(dz) du
+ \lambda^{\alpha}_t F^{+,\alpha}_t(p^a_t) 
\int_{-\infty}^{p^a_t} z p_t(dz)
$$
$$
- \lambda^{\alpha}_t F^{-,\alpha}_t(p^b_t) y 
+ 2\lambda^{\alpha}_t p^b_t F^{-,\alpha}_t(p^b_t)
-  \lambda^{\alpha}_t p^b_t F^{+,\alpha}_t(p^a_t) 
p_t\left((p^a_t,\infty)\right)
$$
$$
+ \lambda^{\alpha}_t \int_{p^a_t}^{\infty} f^{\alpha}_t(u) 
\int_{p^a_t}^{\infty} \left[ \left(z\wedge l^{c,a}_t(u) + p^b_t \bone_{\{z> l^{c,a}_t(u)\}} \right) \right] p_t(dz) du
$$
Let us estimate the first four terms in the right hand side of the above (i.e. the ones that depend on $p_t(dx)$ restricted to $x< p^a_t$):
$$
-\lambda^{\alpha}_t F^{+,\alpha}_t(p^a_t\wedge Q^-(p_t)) y
+ \lambda^{\alpha}_t p^b_t F^{+,\alpha}_t(Q^-(p_t)\wedge p^a_t) 
p_t\left((p^a_t,\infty)\right)
$$
$$
+ \lambda^{\alpha}_t \int_{(Q^-(p_t)\wedge p^a_t)\vee0}^{p^a_t} f^{\alpha}_t(u) 
\int_{-\infty}^{p^a_t} \left[
\left(z\wedge u + \left(p^b_t - u\right) \bone_{\{z> u\}}\right) \right] p_t(dz) du
+ \lambda^{\alpha}_t F^{+,\alpha}_t(p^a_t) 
\int_{-\infty}^{p^a_t} z p_t(dz)
$$
$$
\leq \lambda^{\alpha}_t \sup_{x\leq p^a_t}
\Big[
\left(- y + p^b_t \right) F^{+,\alpha}_t(x)
+ p_t\left((-\infty,p^a_t]\right) 
\sup_{z\in[x,p^a_t]} \Big[(z-p^b_t) F^{+,\alpha}_t(z) 
\Big]\Big].
$$
Notice that, for $t<\hat{\tau}$ and $y=Y^{\alpha,p}_t$, we have $p^b_t - y \leq 0$, and, hence,
$$
\sup_{x\leq p^a_t}
\Big[
\left(- y + p^b_t \right) F^{+,\alpha}_t(x)
+ p_t\left((-\infty,p^a_t]\right) 
\sup_{z\in[x,p^a_t]} \Big[(z-p^b_t) F^{+,\alpha}_t(z) 
\Big]\Big]
$$
$$
= \sup_{z\leq p^a_t}
\Big[(z - y) F^{+,\alpha}_t(z) 
+ p^b_t p_t\left((p^a_t,\infty)\right) F^{+,\alpha}_t(z) 
\Big].
$$
Due to Lemma \ref{lemma.opt.price.levels}, the function $z\mapsto (z - y) F^{+,\alpha}_t(z)$ is nondecreasing in $z\leq P^a_t(y)$. As $p^b_t\leq0$, the function $z\mapsto p^b_t\, p_t\left((p^a_t,\infty)\right) F^{+,\alpha}_t(z)$ is also nondecreasing, and, hence, the above supremum is attained at $z=p^a_t$, provided $P^a_t(y)\geq p^a_t$. The latter does hold for $t<\hat{\tau}$ and $y=Y^{\alpha,p}_t$, as $P^a_t(\cdot)$ is non-decreasing, $p^a_t=P^a_t(V^a_t)$ and $Y^{\alpha,p}_t\geq V^a_t$.
Thus, the generator $\bar{\mathcal{G}}^{\alpha,p}_t(Y^{\alpha,p}_t)$ does not decrease if we replace $p$ by 
$$
p_t'(dx) = p_t(dx)\bone_{[p^a_t,\infty)} + p_t((-\infty,p^a_t))\delta_{p^a_t}(dx).
$$
In other words, 
$$
\bar{\mathcal{G}}^{\alpha,p}_t(Y^{\alpha,p}_t) \leq \bar{\mathcal{G}}^{\alpha,p'}_t(Y^{\alpha,p}_t).
$$
The comparison principle, then, yields $Y^{\alpha,p}_t \leq Y^{\alpha,p'}_t$. Moreover, the optimal stopping strategy associated with $Y^{\alpha,p'}$ is $\hat{\tau}$.
Repeating the argument used earlier in this proof, we conclude that any strategy $(p,\hat{\tau})$ can be modified to $(p',\hat{\tau})$, satisfying the properties stated in the lemma, so that the objective value does not decrease.
The case of short agents is treated similarly.
\qed

\emph{Proof of Lemma \ref{le:2player.opt.forExt}}
Consider a long agent with beliefs $\alpha^0$. In view of Corollary \ref{cor:opt.Nonext}, it suffices to show the optimality in the class of strategies $(p,\hat{\tau})$, with $\text{supp}(p_t)\subset[p^a_t,\infty)$.
Notice that Assumption \ref{ass.demand.size} implies:
$$
l^{c,a}_t(x) = x\wedge p^a_t,\quad\forall x\in \text{supp}\left(f^{\alpha^0}_t\right).
$$
Using the above observation, we recall the constructions from the proof of Lemma \ref{le:opt.Nonext}, to obtain, for any strategy $p$ and all $t<\hat{\tau}$:
$$
\bar{\mathcal{G}}^{\alpha^0,p}_t(y)
= -\lambda^{\alpha^0}_t F^{+,\alpha^0}_t(p^a_t) y
- \lambda^{\alpha^0}_t F^{-,\alpha^0}_t(p^b_t) y 
+ 2\lambda^{\alpha^0}_t p^b_t F^{-,\alpha^0}_t(p^b_t)
$$
$$
+ \lambda^{\alpha^0}_t F^{+,\alpha^0}_t(p^a_t) \left(p^a_t p_t(\{p^a_t\})
+ (p^a_t + p^b_t) p_t\left((p^a_t,\infty)\right) \right).
$$
As $p^b_t\leq0$, the above expression is maximized at $p_t = \delta_{p^a_t}$.
Using the comparison principle for the RBSDE satisfied by $Y^{\alpha^0,p}$, we conclude that $p=\delta_{p^a}$ produces the largest $Y^{\alpha^0,p}$ and, hence, the largest objective value for the long agents with beliefs $\alpha^0$. The case of short agents is treated similarly.
\qed

\begin{figure}
\begin{center}
  \begin{tabular} {cc}
    {
    \includegraphics[width = 0.49\textwidth]{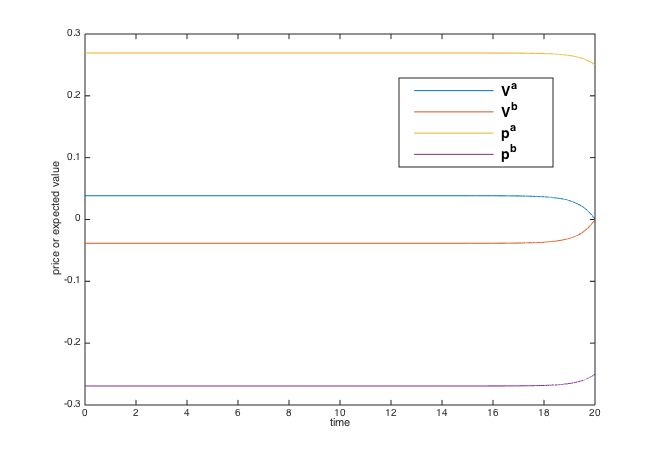}
    } & {
    \includegraphics[width = 0.49\textwidth]{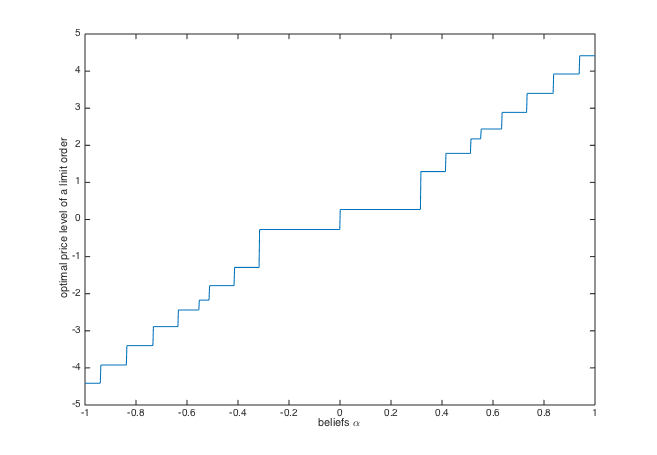}
    }\\
  \end{tabular}
  \caption{On the left: value functions $(V^b,V^a)$ (red and blue), and the bid and ask prices $(p^b,p^a)$ (purple and orange), as functions of time. On the right: the optimal price level of a limit order, as a function of the beliefs $\alpha\in\hat{\mathbb{A}}\cup\hat{\mathbb{B}}$. Parameters: $\lambda^{+,\alpha_0}=2.5$, $\lambda^{-,\alpha_0}=1$, $\lambda^{+,\beta_0}=1$, $\lambda^{-,\beta_0}=2.5$.}
    \label{fig:1}
\end{center}
\end{figure}

\begin{figure}
\begin{center}
  \begin{tabular} {cc}
   {
    \includegraphics[width = 0.49\textwidth]{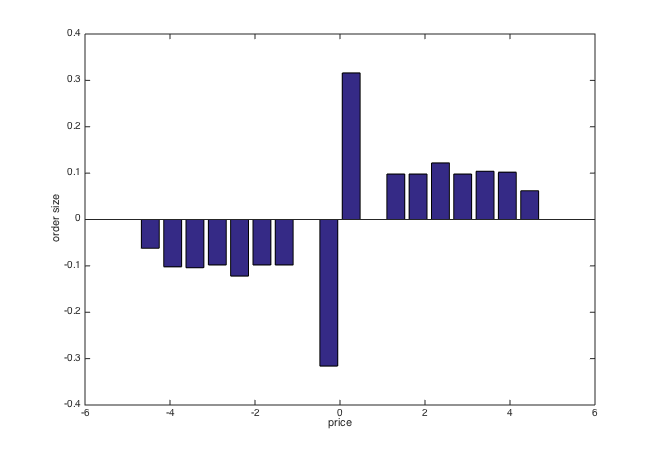}
    } & {
    \includegraphics[width = 0.49\textwidth]{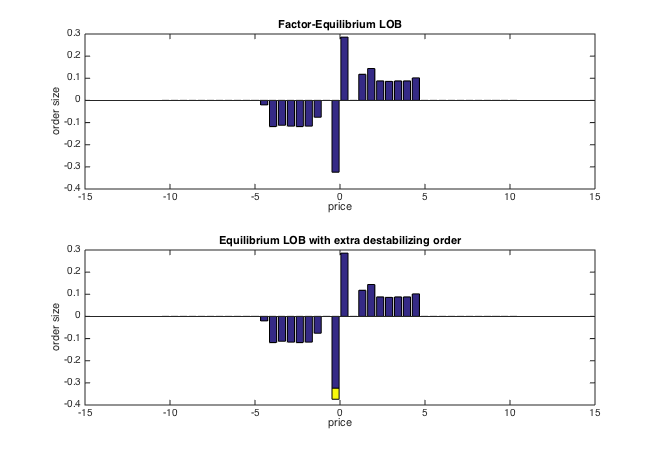}
    }\\
 \end{tabular}
  \caption{Left: LOB at time zero, with $\lambda^{+,\alpha_0}=2.5$, $\lambda^{-,\alpha_0}=1$, $\lambda^{+,\beta_0}=1$, $\lambda^{-,\beta_0}=2.5$. Right: equilibrium LOB at time zero, with the parameters depending on the market imbalance $I$ (top), and the same LOB, with an additional (yellow) limit order (bottom).}
    \label{fig:2}
\end{center}
\end{figure}

\begin{figure}
\begin{center}
  \begin{tabular} {cc}
    {
    \includegraphics[width = 0.49\textwidth]{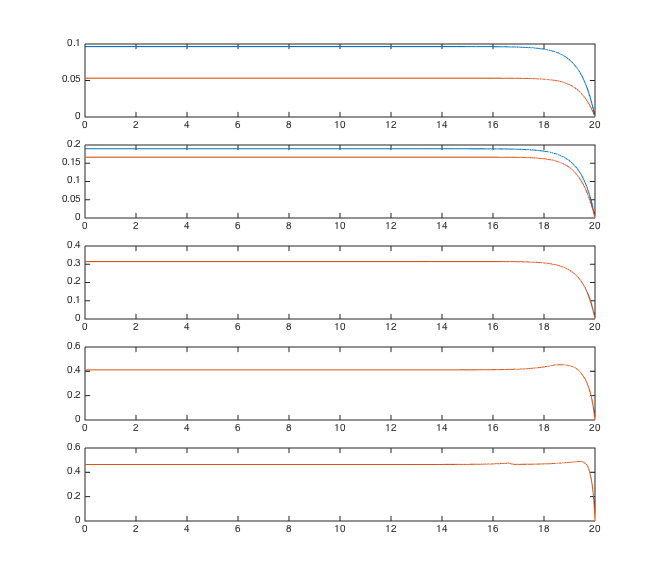}
    } & {
    \includegraphics[width = 0.49\textwidth]{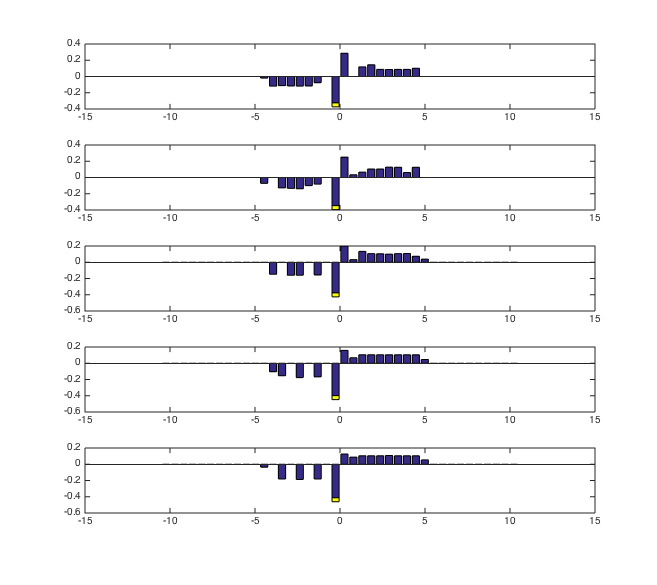}
    }\\
  \end{tabular}
  \caption{On the left: value functions $(V^b,V^a)$ (red and blue), as functions of time. On the right: LOB at each step of the convergence to a new equilibrium.}
    \label{fig:3}
\end{center}
\end{figure}

\newpage

\bibliographystyle{abbrv}
\bibliography{MFGLOB_cont_refs}






\end{document}